\documentclass[12pt,reqno]{article}
\usepackage{times, bm}


\textheight 225mm
\textwidth 160mm
\oddsidemargin 0mm
\evensidemargin 0mm
\topmargin -10mm
\headheight 12pt \headsep 4mm
\sloppy
\flushbottom
\parindent0em
\parskip1ex
\leftmargini 2em
\leftmarginv .5em
\leftmarginvi .5em
\textfloatsep 3mm

\usepackage{amsmath}
\usepackage{amsthm}
\usepackage{amsfonts}
\usepackage{amssymb}
\usepackage{epic}
\usepackage{eepic}
\usepackage{epsfig}
\usepackage{mathtools}
\RequirePackage[dvipsnames]{xcolor}
\usepackage{multirow}
\usepackage{appendix}
\usepackage[colorlinks=true, allcolors=RoyalBlue]{hyperref}
\usepackage{dsfont}
\usepackage{cite}
\usepackage{mathabx}
\usepackage{enumerate}
\usepackage[scr]{rsfso}
\usepackage{verbatim}
\usepackage{lipsum}

\usepackage{macros}
\usepackage{graphics}

\usepackage{tikz}
\usetikzlibrary{decorations.markings, decorations.pathreplacing, decorations.pathmorphing, shapes.misc, knots}

\usetikzlibrary{calc}








\theoremstyle{plain}

\newtheorem{propn}{Proposition}
\newtheorem{lem}{Lemma}
\newtheorem{cor}{Corollary}

\theoremstyle{remark}
\newtheorem{rem}{Remark}



\numberwithin{equation}{section}

\newcommand{\ra}{\to}
\renewcommand{\Re}{\operatorname{Re}}
\renewcommand{\Im}{\operatorname{Im}}


\newcommand{\ga}{\gamma}

\newcommand{\de}{\delta}

\newcommand{\ep}{\epsilon}
\newcommand{\la}{\lambda}
\newcommand{\La}{\Lambda}

\newcommand{\si}{\sigma}

\newcommand{\vf}{\varphi}

\newcommand{\pa}{\partial}

\newcommand{\nco}{\newcommand}

\nco{\on}{\operatorname}

\newcommand{\CB}{{\mathcal B}}

\newcommand{\CN}{{\mathcal N}}

\newcommand{\CS}{{\mathcal S}}

\newcommand{\CW}{{\mathcal W}}

\newcommand{\CY}{{\mathcal Y}}

\newcommand{\SH}{{\mathsf H}}

\newcommand{\SQ}{{\mathsf Q}}

\newcommand{\ST}{{\mathsf T}}

\newcommand{\BR}{{\mathbb R}}

\newcommand{\BC}{{\mathbb C}}

\newcommand{\BP}{{\mathbb P}}

\newcommand{\ii}{\mathrm{i}}

\newcommand{\BZ}{{\mathbb Z}}





\DeclareMathOperator*{\Res}{Res}
\DeclareMathOperator{\Li}{Li}


\newcommand{\rf}[1]{\eqref{#1}}

\newcommand{\tp}{\tilde\varphi}

\newcommand{\hp}{\hat{\varphi}}
\usepackage{subcaption}

\DeclarePairedDelimiter{\abs}{\lvert}{\rvert}
\DeclarePairedDelimiter{\floor}{\lfloor}{\rfloor}
\DeclarePairedDelimiter{\ceil}{\lceil}{\rceil}

\DeclareMathOperator{\diag}{diag}
\newcommand{\SLNC}{\operatorname{SL}(N,\BC)}
\newcommand{\slNC}{\mathfrak{sl}(N,\BC)}

\newcommand{\Ypert}{\mathcal{Y}^\text{\tiny pert}}
\newcommand{\Yinst}{\mathcal{Y}^\text{\tiny inst}}

\begin{document}\thispagestyle{empty}
\title{Higher-Rank Mathieu Opers, Toda Chain, \\ and Analytic Langlands Correspondence}
\author{Jonah Baerman$^{1,*}$, Giovanni Ravazzini$^{2,\dagger}$, J\"org Teschner$^{2,3,\ddagger}$}
\address{$^{1}$ II. Institute for Theoretical Physics, University of Hamburg, Luruper Chaussee 149, \\
\phantom{$^{1}$} 22761 Hamburg, Germany,\\[2ex]
$^{2}$ Deutsches Elektronen-Synchrotron DESY, Notkestr. 85, 22607 Hamburg, Germany,\\[2ex]
$^{3}$ Department of Mathematics, University of Hamburg, Bundesstr. 55, 20146 Hamburg, Germany,\\[2ex]
E-mail: ${}^*$\href{mailto:jonah.baerman@desy.de}{\textit{jonah.baerman@desy.de}}, ${}^\dagger$\href{mailto:giovanni.ravazzini@desy.de}{\textit{giovanni.ravazzini@desy.de}}, ${}^\ddagger$\href{mailto:joerg.teschner@desy.de}{\textit{joerg.teschner@desy.de}}.}

\maketitle

\centerline{\bf Abstract}
\begin{quote}{\small
We study the Riemann-Hilbert problem associated to 
flat sections of oper connections
of arbitrary rank on the twice-punctured Riemann sphere with irregular singularities of the mildest type. 
We construct the solutions in terms of the solutions to a single non-linear integral equation. 
It follows from this construction 
that the generating function of the submanifold of opers coincides with the Yang-Yang function of the quantum Toda chain, proving a conjecture by Nekrasov, Rosly and Shatashvili.
In this way we may furthermore reformulate  
the quantization conditions of the Toda chain in terms of the connection problem, for which we also provide a solution.
We finally interpret our results as a variant of the Analytic Langlands Correspondence for the real version  
of the Hitchin system corresponding to the Toda chain. 
}
\end{quote}

\setcounter{tocdepth}{2}
\begingroup
\hypersetup{linkcolor=black}

\tableofcontents
\endgroup

\section{Introduction}

Our work describes connections between certain differential operators called opers
and the spectrum of the closed Toda chain, a paradigmatic example for an integrable many-particle problem. 
These connections are predicted by the relations to $\CN=2$, $d=4$ supersymmetric quantum field 
theories discussed in~\cite{Nekrasov:2009rc} and~\cite{Nekrasov:2011bc}. 
We are going to propose that one may regard these relations  from a mathematical perspective
as a variant of the analytic Langlands correspondence
\cite{Etingof:2019pni}. We will provide mathematical proofs of the conjectured relations, and use 
them to offer a construction of the solutions to the Riemann-Hilbert type problems naturally arising in this context.

\subsection{Riemann-Hilbert Correspondence for Some Higher Rank Irregular Opers}
We will, on the one hand, consider 
differential equations of the form 
\begin{equation} \label{eq:scalar-oper-intro}
    \left( \pa_z^N-u_2(z)\pa_z^{N-2}-\dots-u_N(z) \right)\chi(z)=0\,.
\end{equation}
In order to have analyticity on $C_{0,2}=\mathbb{P}^1\setminus\{0,\infty\}$ with 
irregular singularities of the mildest possible type at $0$ and $\infty$
one must have
\begin{equation}
    u_j(z) = \mathsf{u}_jz^{-j}\,,\qquad j=2,\dots,N-1\,,\qquad u_N(z) = \frac{\Lambda^N}{z^{N+1}} + \frac{\mathsf{u}_N}{z^N} + \frac{\Lambda^N}{z^{N-1}}\,.
\end{equation}
Such differential equations are known to be closely related to a class of holomorphic $\SLNC$-connections on 
$C_{0,2}$ called
opers.

A generalization of the Riemann-Hilbert correspondence associates differential equations of the 
form \rf{eq:scalar-oper-intro} to the generalized monodromy data, which include Stokes data 
in the presence of irregular singularities. In the particular case of interest
we are going to demonstrate that the Stokes data are completely determined by the eigenvalues of 
the monodromy of the simple closed curve $\ga$ separating $0$ and $\infty$. 
One may therefore formulate a variant of the Riemann-Hilbert 
problem naturally associated to the Riemann surfaces $C_{0,2}$:
\begin{quote}
(RH) {\it  For any given tuple of $N$ complex numbers $\si_1,\dots,\si_N$ satisfying 
$\sum_{k=1}^N\si_k=0$, find differential operators of the form 
$\pa_z^N-u_2(z)\pa_z^{N-2}-\dots-u_N(z)$ having coefficients $u_j(z)$ which are
holomorphic on $C_{0,2}$, for $j=2,\dots,N$,
irregular singularities of mildest possible type at 
$0$ and $\infty$, and monodromy eigenvalues $e^{2\pi\mathrm{i}\si_k}$, $k=1,\dots,N$.}
\end{quote}
We are going to show that the solutions to the Riemann-Hilbert (RH) type problem can be constructed from
the solutions to a single non-linear integral equation. 

We'd like to note that there exists 
another approach to the solution of this RH type problem using non-linear integral equations proposed 
in \cite{Gaiotto:2014bza} based on \cite{Gaiotto:2010okc,Gaiotto:2011tf}, see \cite{Hao:2025azt} for a
recent application to the case $N=2$ of the RH problem above. The 
integral equations that appear in this approach are {\it coupled systems} of non-linear integral
equations.
We hope that the new solution to this RH problem in terms of a single integral equation presented in this paper
can be useful both for numerical computations, and for more detailed investigations 
of the analytic properties of the solutions. 

\subsection{Relation to the Closed Toda Chain}

Inspiration for the solution to the Riemann-Hilbert type problem (RH) discussed in this paper
comes from the relation between 
the differential equation \rf{eq:scalar-oper-intro} and the spectrum of the closed Toda chain,
a paradigmatic quantum integrable model that has been studied intensively. It has been understood 
how to translate the quantization conditions of the Toda chain into a simple set of 
analytic conditions that solutions to the Baxter equation 
\begin{equation}\label{eq:Baxter}
   t(\lambda)q(\lambda) = \Lambda^N\left( \ii^Nq(\lambda +\ii\hbar) + \ii^{-N}q(\lambda-\ii\hbar) \right),
   \qquad t(\lambda)=\lambda^N+\sum_{k=0}^{N-1}\lambda^kE_{N-k}\,,
\end{equation}
have to satisfy in order for the coefficients $E_k$ appearing in the expansion of $t(\la)$
to be eigenvalues of the 
commuting conserved quantities of the closed Toda chain. In order to describe the 
resulting conditions more explicitly, one may construct two linearly independent 
solutions $Q^\pm$ to \rf{eq:Baxter} having $N$ free parameters $\de_1,\dots,\de_N$, 
and thereby translate the quantization 
conditions into conditions on these parameters which are as computable as the solutions 
$Q^\pm$ are. The construction of $Q^\pm$ given in~\cite{Kozlowski:2010tv} is based on 
the solution of a non-linear integral equation. It turns out that the sequences 
$(\chi_{j,n})_{n\in\BZ}$ with $\chi_{j,n}=Q^\pm(\de_j-\ii\hbar n)$
are the Laurent-coefficients of solutions to \rf{eq:scalar-oper-intro} having 
monodromy around $z=0$ represented by multiplication with $e^{-\frac{2\pi}{\hbar} \de_j}$.

\subsection{Yang-Yang Function as Generating Functions of Opers}

The relations between ${\cal N}=2$, $d=4$ supersymmetric quantum field theories and 
integrable models discussed in~\cite{Nekrasov:2009rc} suggest that 
the effective twisted
superpotentials $\mathcal{W}$ in $\mathcal{N}=2$, $d=4$ supersymmetric QFTs with partial Omega-deformation
should essentially coincide with the Yang's functions $\CY$ of the corresponding quantum integrable models, allowing   
one to describe the quantization conditions in terms of equations of the form
\begin{equation}\label{q-cond-Yang}
\frac{\pa}{\pa \de_k}\CY(\bm{\de},\Lambda)=2\pi \mathrm{i}\,n_k\in2\pi\ii\,\BZ\,, \qquad k=1,\dots,N\,.
\end{equation}
The terminology Yang's function for $\CY$ is motivated by the analogy to the potential 
for Bethe ansatz equation introduced by C.N. Yang and C.P. Yang in 1969. The instanton calculus 
for $\mathcal{N}=2$, $d=4$ supersymmetric QFTs leads in the case of pure $\mathcal{N}=2$ SYM 
to an expression for $\CW$ 
in terms of the solution to the same nonlinear integral equation as considered in our paper \cite{Nekrasov:2009rc,Meneghelli:2013tia}. 
A direct proof that the known quantization conditions of the closed Toda chain can be reformulated in 
the form \rf{q-cond-Yang} with $\CY$ directly related to $\CW$ was given in~\cite{Kozlowski:2010tv}.

The subsequent work~\cite{Nekrasov:2011bc} proposed a geometric description  
of the twisted
superpotentials  $\CW$ in terms of the  generating functions $\CS$ describing the submanifolds of opers within the 
spaces of monodromy data. This proposal has been generalized to class $S$ theories of higher rank in \cite{Hollands:2017ahy,Hollands:2019wbr,Hollands:2021itj}, emphasizing the role of abelianization \cite{Hollands:2013qza} as a natural
framework for the definition of the coordinates that are relevant in this context. 
The relations between $\mathcal{W}$ and $\CS$ conjectured in~\cite{Nekrasov:2011bc} have been established 
by quantum field theoretical methods 
in \cite{Jeong:2018qpc} in several examples for class $S$ 
theories with Lagrangian descriptions.  

By combining the conjectures of \cite{Nekrasov:2009rc} and \cite{Nekrasov:2011bc}
one arrives at a remarkable description of the Yang's functions of 
a large class of quantum integrable systems in geometric terms, simply expressed as a direct
relation between the Yang's functions $\CY$ of the integrable models, 
and the generating functions $\CS$ of the corresponding 
submanifolds of opers. 

In the particular case of pure $\mathcal{N}=2$ SYM one would, following \cite{Nekrasov:2011bc,Hollands:2017ahy,Hollands:2021itj}, consider 
a function $\CS$
satisfying 
equations of the form 
\begin{equation}
\frac{\pa}{\pa \si_j}\mathcal{S}(\bm{\si},\Lambda)=2\pi\mathrm{i}\,\eta_j(\bm{\si},\Lambda),
\end{equation}
with $(\si_1,\dots,\si_N;\eta_1,\dots,\eta_N)$ being coordinates for the  
Stokes and monodromy data of general holomorphic connections on $C_{0,2}$ having irregular
singularities of mildest type at $0$ and $\infty$, and $\eta_j(\bm{\si},\Lambda)$
being the values of the coordinates $\eta_j$ characterizing the half-dimensional 
oper locus. 
We are going to show that 
\begin{equation}\label{S-Y}
\mathcal{S}(\bm{\si},\Lambda)=\CY(\bm{\de},\Lambda)\Big|_{\bm{\de}=-\mathrm{i}\hbar\bm{\si}},
\end{equation}
with $\CY$ being the Yang's function of the closed Toda chain. This
proves the above-mentioned consequence of 
the conjectures of \cite{Nekrasov:2009rc} and \cite{Nekrasov:2011bc}
directly, without any reference to quantum field theory. 

It has furthermore been suggested in \cite{Nekrasov:2011bc} that 
the above-mentioned conjectures are related to the 
geometric Langlands correspondence between eigenvalue equations of quantized Hitchin systems and opers,
see \cite{Frenkel:2005pa} for a review. For the case of our interest one may note that 
the eigenvalue equations of the complexified version of the quantum Toda chain are related 
to opers with irregular singularities 
by a generalization of the geometric Langlands correspondence for Riemann surfaces 
with wild ramification,
as studied in \cite{FEIGIN2010873}.
Imposing, in addition, single-valuedness and normalizability of the eigenfunctions leads to the
analytic versions of the geometric Langlands correspondence which will be discussed next.

\subsection{Interpretation as a Variant of the Analytic Langlands Correspondence}\label{Intro-Langlands}

The analytic Langlands correspondence (ALC)~\cite{Etingof:2019pni} provides a geometric description 
of the spectra of 
the quantized Hitchin systems in terms of opers on Riemann surfaces.

Hitchin's integrable systems $\mathcal{M}_{\rm Hit}^{}(C,G)$ are classified 
by pairs $(C,G)$, with $C$ being a Riemann surface and $G$ being a complex semi-simple Lie group. 
Quantization of the Hitchin system yields a large family of quantum integrable models containing 
several known models such as the Toda chain as special cases. 
Of basic interest  is the spectral problem for the 
commuting Hamiltonians characterizing the integrable structures of these models. 

Oper connections are classified by pairs $(C,G')$, with $C$ being the Riemann surface they are
defined on, and $G'$ being the Lie group used to define of oper connections on $C$.
The opers that correspond to the eigenstates of the quantized Hitchin system $\mathcal{M}_{\rm Hit}^{}(C,G)$
according to the ALC are defined on the same Riemann surface $C$ as used to define the Hitchin system of interest,
but the Lie group $G'$ defining the opers corresponding to eigenstates of $\mathcal{M}_{\rm Hit}^{}(C,G)$ 
according to the ALC is the  Langlands dual 
Lie group $G'={}^{L}G$ of $G$. 

In practice one encounters two types of quantum spectral problems, differing by the choices of scalar products defining 
the underlying Hilbert spaces. For generic $C$
one may consider scalar products realizing the Hamiltonians as normal operators \cite{Etingof:2019pni}. 
If $C$ admits involutions allowing one to define real forms of the Hitchin systems, one may
alternatively consider a second type of scalar product, defined by integrating over {\it real} Lagrangian subspaces. 

So far most of the work on the analytic Langlands correspondence has considered the first type of 
spectral problems. 
The analytic Langlands correspondence for this type of problems, 
in the following referred to as (ALC)${}_\BC^{}$, predicts a one-to-one correspondence between
eigenstates and real opers, opers having monodromy conjugate to a real form ${}^{L}G_\BR$ of ${}^{L}G$ \cite{Te2018,Etingof:2019pni}.

However, it can also be interesting to study spectral problems of the second type, 
associated to real slices of the phase space
from this point of view. The closed Toda chain is arguably the 
simplest example for integrable models of this type. It is a real form of the Hitchin system
associated to the Riemann surface $C_{0,2}=\mathbb{P}^1\setminus\{0,\infty\}$ 
with irregular singularities of the mildest possible type at $0$ and $\infty$.\footnote{The Hitchin moduli space
is represented by Higgs fields of the same form as the right hand side of eq.~\rf{gen-connection} below.} 
We are going to prove:
\begin{quote}
{\rm (ALC)${}_\BR^{}$} {\it Eigenstates of the closed Toda chain are in one-to-one correspondence to opers on 
$C_{0,2}$ having 
canonical bases associated to the two irregular singularities related by 
parallel transport matrices proportional to the identity.}
\end{quote}
As the solutions to the spectral
problem are thereby classified by opers with simple conditions on the monodromy data,
one may regard the assertion above, referred to as
{\rm (ALC)${}_\BR^{}$}, as a natural variant of the 
analytic Langlands correspondence for spectral problems of the second 
type.\footnote{A general framework for variants of the analytic 
Langlands correspondence has been proposed in \cite{EFK4}.}

It seems quite remarkable that in both cases it is possible to express the quantization conditions in terms of 
the same function, the generating function $\mathcal{S}$ of the subspace of opers, 
\begin{align}
&\text{(ALC)${}_\BC^{}$ :}\quad\quad 
\mathrm{Re}\bigg(\frac{\pa}{\pa \si_j}\mathcal{S}(\bm{\si},\Lambda)\bigg)=2\pi\mathrm{i}\,n_j\,,\quad \mathrm{Re}(\si_j)=2\pi\mathrm{i}\,m_j\,, \label{ALCC-FN}\\
&\text{(ALC)${}_\BR^{}$ :}\quad\quad \frac{\pa}{\pa \si_j}\mathcal{S}(\bm{\si},\Lambda)=2\pi\mathrm{i}\,n_j\,,
\label{NS-q-cond}\end{align}
with $m_j,n_j\in\BZ$, see \cite{Te2018,Gaiotto:2024tpl} for the reformulation of (ALC)${}_\BC^{}$
in the form \rf{ALCC-FN}.
One may note, on the other hand, that the conditions implied by (ALC)${}_\BC^{}$
are equations for real-valued functions, 
whereas one finds holomorphic equations in the case of (ALC)${}_\BR^{}$.
The analytic Langlands correspondence appears to exchange reality conditions in a subtle way.

Further support for the proposal that (ALC)${}_\BR^{}$ represents a natural variant of the analytic 
Langlands correspondence is provided by the  references~\cite{Hatsuda:2018lnv, Bershtein:2021uts,Bonelli:2025bmt}, 
giving evidence that the 
quantization conditions 
of other prominent members of the family of Htichin systems, such as
the quantum Calogero-Moser models,  can also be represented in the form \rf{NS-q-cond}.

\subsection{Further Connections}

There exists an interesting variant of the relations between supersymmetric partition 
functions and quantization conditions of integrable models. It expresses the quantization conditions
of various spectral problems in terms of the isomonodromic tau-functions, which 
in turn are related to the instanton partition functions of the corresponding $\mathcal{N}=2$ SUSY QFTs by 
the correspondence discovered by Gamayun, Iorgov and Lisovyy \cite{Gamayun:2012ma,Gamayun:2013auu}. 
Background from theoretical physics for these relations 
is provided by the relations between
topological string theory and supersymmetric gauge theories
\cite{Sun:2016obh,Grassi:2016nnt,Grassi:2019coc,Gavrylenko:2020gjb}.
For the case of $N=2$ one thereby obtains 
a description of the spectrum of the Mathieu operator that is equivalent to the one discussed in this
paper, see \cite{Bershtein:2021uts} for a proof.

A more recent variant of these relations extends the scope of these methods from the quantization 
conditions to the eigenfunctions \cite{Francois:2025wwd,Francois:2025nmq}. This leads to 
exact expressions for the solutions to the Baxter equation \rf{eq:Baxter} of the closed Toda chain 
in terms of known building blocks from topological string theory. More general 
quantization conditions can be considered in this context~\cite{Grassi:2018bci,Francois:2025nmq}. 
It would be very interesting to compare these results with our approach.   

We'd furthermore like to note that the relation between the Baxter equation and opers is a 
special case of a phenomenon known as spectral duality in the literature
\cite{Mironov:2012ba,Mironov:2012uh}, relating the canonical quantizations of the spectral 
curves of (in general) different integrable systems. Providing the exact dictionary between
the conditions that have to be satisfied by the solutions to the respective equations, as done here,  
can be regarded as an analytic version of the spectral duality. It would be interesting 
to describe the quantization conditions of other pairs of quantum integrable models related 
by spectral duality in an analogous fashion. 

\subsection{Summary of Contents}

The following Section \ref{Toda-review} reviews the results on the spectrum of the closed Toda chain 
that will be relevant for us. Section \ref{oper-section} summarizes some background on opers
with irregular singularities of mildest possible type on $C=C_{0,2}$, and it formulates 
our main results on existence of solutions to the oper equation, and their monodromy data 
concisely.  Section \ref{Section:canonical basis} describes the proofs. 
Having a very special type of irregular singularity, we'll find that 
the monodromy and Stokes data can be parameterized by $N$ complex parameters $\si_1,\dots,\si_N$
satisfying $\sum_{j=1}^N\si_j=0$.

The key observation of this paper is contained in Section \ref{sec:Floquet}: There is a simple 
construction of bases for the solutions to the oper equation from a basis of two linearly
independent solutions of the Baxter equation of the closed Toda chain. The solution to 
the Riemann-Hilbert type problem (RH), and the relation 
between the Yang-Yang function $\CY$ and the generating function $\CS$ implied by  
the conjectures of \cite{Nekrasov:2009rc} and \cite{Nekrasov:2011bc} are immediate consequences. 
The connection matrix relating the two canonical bases at $0$ and $\infty$ is computed in 
Section \ref{sec:connection-langlands}, leading to the formulation of the variant 
(ALC)$_{\BR}^{}$ of the analytic Langlands correspondence given above. 
The appendices contain details of the proofs of some of the results used in the main text.

\section{Spectral Problem of the Closed Toda Chain} \label{Toda-review}

The $N$-particle, periodic Toda chain is defined by the Hamiltonian
\begin{equation} \label{eq:x-p-Hamiltonian}
    \SH = \sum_{j=1}^{N} \left(\frac{p_j^2}{2} +\Lambda^2e^{x_{j}-x_{j+1}}\right), \qquad x_{N+1} \equiv x_1\,,
\end{equation}
defining a differential operator on $L^2\left( \BR^N \right)$ by 
setting $p_j=\frac{\hbar}{\ii}\frac{\pa}{\pa x_j}$. Complete quantum integrability of the periodic Toda chain is reflected by the existence of a set $\{\SH_1,\dots,\SH_N\}$ of $N$ algebraically independent commuting operators containing $\SH$.
In order to fully establish complete integrability of the quantum Toda chain it is useful to introduce the Baxter Q-operator $\SQ(\lambda)$~\cite{Gaudin:1992ci} related to the conserved charges by the Baxter equation 
\begin{equation}\label{TQ-Transfermatrix}
   \ST(\lambda)\SQ(\lambda) = \Lambda^N\left( \ii^N\SQ(\lambda +\ii\hbar) + \ii^{-N}\SQ(\lambda-\ii\hbar) \right),
   \qquad \ST(\lambda)=\lambda^N+\sum_{k=0}^{N-1}\lambda^k\SH_{N-k}\,.
\end{equation}
The operators in  $\{\mathsf{Q}(\lambda)\,|\,\lambda\in\BR\}$ are mutually commuting. 
One may therefore refine the spectral problem of $\SH$ to finding the simultaneous 
spectral decomposition of the operators $\SQ(\lambda)$ for all $\lambda\in\BR$. Eigenstates of $\SQ(\lambda)$
are simultaneously eigenstates of $\ST(\lambda)$. Notations for the eigenvalues are defined through the expansions
 \begin{equation}\label{t-tau-def}
    t(\lambda)=\lambda^N+\sum_{k=0}^{N-1}\lambda^kE_{N-k} = \prod_{k=1}^N(\lambda-\tau_k)\,,
\end{equation}
where $E_k$ is the eigenvalue of $\SH_k$, and reality of $t(\lambda)$ for $\lambda \in \mathbb{R}$ requires $\{\tau_k\} = \{\bar\tau_{k}\}$.
It can be shown that $\SH_1 = -P$ and $\SH_2 = P^2/2-\SH$, where $P$ is the total momentum.
We will assume throughout that $E_1 = \sum_{k=1}^N\tau_k = 0$, such that $-E_2$ is the eigenvalue of the Hamiltonian $\SH$.

With the help of the Separation of Variables (SoV) it has been shown that 
\begin{quote}
{\it 
eigenstates of the quantum Toda chain are in a one-to-one correspondence with 
the solutions of the Baxter equation
\begin{equation}\label{Baxter-equation}
  t(\lambda)q(\lambda) = \Lambda^N \left( \ii^Nq(\lambda +\ii\hbar) + \ii^{-N}q(\lambda-\ii\hbar) \right),
\end{equation}
which are (i) entire, and (ii)
have asymptotic behavior 
\begin{equation}\label{q-asym}
    \abs{q^{\pm}(\lambda)} = O\Big( e^{-\tfrac{N\pi}{2\hbar}\abs{\Re\lambda}}\abs{\lambda}^{\tfrac{N}{2\hbar}(2\abs{\Im\lambda}-\hbar) } \Big)\,,
\end{equation}
uniformly in a strip of width $\hbar$ centered on the real line. }
\end{quote}
As the conditions (i) and (ii) are necessary and sufficient for existence of a square-integrable eigenfunction $\Psi_\mathbf{E}$ of $\SH_1,\dots,\SH_N$ having eigenvalues $E_1,\dots,E_{N}$, we will refer to them as quantization conditions. 

The characterization of the spectrum above is the result of a long series of works including \cite{Gutzwiller:1980yx,Gutzwiller:1981by,Gaudin:1992ci,Kharchev:1999bh,An}.
In order to show that to each solution to the conditions
(i) and (ii) there corresponds an eigenfunction one can use Sklyanin's Separation of Variables method (SoV),
see~\cite{Kharchev:1999bh} and references therein. The SoV method defines a unitary integral transformation mapping
products of solutions to the conditions (i) and (ii) to eigenstates~\cite{Kozlowski2015}.
The fact that the {\it all} eigenstates are obtained from solutions 
of conditions (i) and (ii) in this way has been proven in~\cite{An}. 

In order to describe the quantization conditions more explicitly, one may start by constructing 
families of solutions to the Baxter equation \rf{Baxter-equation} depending on $N$ parameters, 
and deriving a finite set of equations that have to be satisfied by the parameters in order to satisfy  the quantization conditions
(i), (ii). 
In the following two subsections we will briefly review the two known constructions of solutions to the quantization conditions. 

\subsection{Solutions to the Baxter Equations from Infinite Determinants}\label{sec:Baxter-det}

The first construction of two linearly independent solutions $Q^\pm_{\bm{\tau}}$ of the Baxter equation~\eqref{Baxter-equation}
was given in~\cite{Gutzwiller:1980yx,Gutzwiller:1981by}. 
In this construction, $Q^\pm_{\bm{\tau}}$ take the roots $\boldsymbol{\tau}\coloneqq (\tau_1,\dots,\tau_N)$ of the transfer matrix $t(\lambda)$~\eqref{t-tau-def} appearing in the Baxter equation as input data: explicitly, they are given by
\begin{subequations} \label{eq:Q+-}
\begin{align}
  \label{Q^+} Q^{+}_{\bm{\tau}}(\lambda) = \left( \frac{\hbar}{\Lambda} \right)^\frac{\ii N\lambda}{\hbar}\frac{e^{-N\pi \lambda/\hbar}K_{+}(\lambda)}{\prod_{k=1}^{N}\Gamma(1-\ii(\lambda-\tau_k)/\hbar)}\,,  \\ 
   \label{Q^-} Q^{-}_{\bm{\tau}}(\lambda) = \left( \frac{\Lambda}{\hbar} \right)^\frac{\ii N\lambda}{\hbar}\frac{e^{-N\pi \lambda/\hbar}K_{-}(\lambda)}{\prod_{k=1}^{N}\Gamma(1+\ii(\lambda-\tau_k)/\hbar)}\,,
\end{align}
\end{subequations}
where $K_{+}(\lambda)$ and $K_{-}(\lambda)$ are the unique solutions to the difference equations
\begin{subequations}
\begin{align}
      \label{K+eqation}K_{+}(\lambda-\ii\hbar) =  K_{+}(\lambda) - \frac{\Lambda^{2N} K_{+}(\lambda+\ii\hbar)}{t(\lambda)t(\lambda +\ii\hbar)}\,, \\ 
         \label{K-eqation}  K_{-}(\lambda+\ii\hbar) =  K_{-}(\lambda) - \frac{\Lambda^{2N} K_{-}(\lambda-\ii\hbar)}{t(\lambda)t(\lambda -\ii\hbar)}\,,
\end{align}
\end{subequations}
that tend to $1$ when $\lambda \to\infty$ uniformly away from the set of poles $\tau_k \mp\ii\hbar n, n \in \mathbb{N}$. These imply a representation in terms of the semi-infinite determinant
\begin{equation} \label{K+-det-definition}
    K_+(\lambda,\Lambda) = \det
    \begin{pmatrix}
        1 & \frac{1}{t(\lambda+\ii\hbar)} && \multicolumn{2}{c}{\qquad\;\mathbf{0}} \\
        \frac{\Lambda^{2N}}{t(\lambda+2\ii\hbar)} & 1 & \frac{1}{t(\lambda+2\ii\hbar)} \\
        & \frac{\Lambda^{2N}}{t(\lambda+3\ii\hbar)} & 1 & \frac{1}{t(\lambda+3\ii\hbar)} \\
        \mathbf{0} && \ddots & \ddots & \ddots
    \end{pmatrix},
\end{equation}
and $K_{-}(\lambda, \Lambda) = \widebar{K}_+(\widebar{\lambda}, \widebar{\Lambda})$, offering a useful starting point for analyzing convergence and  analytic properties of $K_\pm$, see~\cite[Appendix A.2]{Kozlowski:2010tv}.

While $Q^{\pm}_{\bm{\tau}}(\lambda)$ are entire, they have exponential growth in one direction along the real 
axis. In order to satisfy both quantization conditions (i) and (ii) above, one can use an ansatz of the form
\begin{equation} \label{q-ansatz}
    q(\lambda)= 
    \frac{Q^{+}_{\bm{\tau}}(\lambda)-\zeta Q^{-}_{\bm{\tau}}(\lambda)}{\prod_{j=1}^{N}e^{-\tfrac{\pi\lambda}{\hbar}}\sinh\tfrac{\pi}{\hbar}(\lambda-\delta_j)}\,, \qquad\de_j\in\BC\,,\quad j=1,\dots,N\,,
\end{equation}
ensuring validity of (ii), but sacrificing (i) for generic choice of $\de_1,\dots,\de_N$. In order to satisfy both (i) and (ii), one needs to choose 
the parameters $\de_j$, $j=1,\dots,N$, in such a way that the infinitely many 
poles coming from the denominator 
in \rf{q-ansatz} get canceled. In order to describe the resulting conditions, let us consider the infinite sequences 
\[
\mathfrak{q}_{\bm{\tau}}^\pm(\la)=\big(Q_{\bm{\tau}}^{\pm}(\la+\mathrm{i}\hbar n)\big)_{n\in\BZ}\in\BC^{\BZ}\,.
\]
It can be shown that for any given polynomial $t(\la)$ of the form \rf{t-tau-def}, and any $\Lambda\in\BC$, 
there exist $N$ values  $\de_1(\bm{\tau},\Lambda),\dots,\de_N(\bm{\tau},\Lambda)$ for $\la$, 
such that the vector $\mathfrak{q}_{\bm{\tau}}^+(\la)$
is proportional to $\mathfrak{q}_{\bm{\tau}}^-(\la)$. 
In order to see this, one may consider the 
quantum Wronskian defined by
\begin{equation}\label{Quantum-Wronskian}
    W[Q^{+}_{\bm{\tau}}, Q^{-}_{\bm{\tau}}](\lambda) \coloneqq  Q^{+}_{\bm{\tau}}(\lambda) Q^{-}_{\bm{\tau}}(\lambda+\ii\hbar) - Q^{-}_{\bm{\tau}}(\lambda) Q^{+}_{\bm{\tau}}(\lambda+\ii\hbar)\,.
\end{equation}
Vanishing of the quantum Wronskian for a given value $\la\in\BC$ implies that $\mathfrak{q}_{\bm{\tau}}^+(\la)$
is proportional to $\mathfrak{q}_{\bm{\tau}}^-(\la)$.
It can, on the other hand, be shown~\cite{Gutzwiller:1980yx, Gutzwiller:1981by} that $W[Q^{+}_{\bm{\tau}}, Q^{-}_{\bm{\tau}}](\lambda)$ has the form
\begin{equation}
    W[Q^{+}_{\bm{\tau}}, Q^{-}_{\bm{\tau}}](\lambda) = \left(\frac{ \hbar e^{-2\pi\lambda/\hbar}}{\ii\pi \Lambda}\right)^N \mathcal{H}(\lambda; \Lambda) \prod_{j=1}^{N}\sinh\left(\frac{\pi}{\hbar}(\lambda-\tau_j)\right),
\end{equation}
where $\mathcal{H}(\lambda;\Lambda)$ is known as the Hill determinant, defined by
\begin{equation}\label{Hill-det}
    \mathcal{H}(\lambda; \bm{\tau},\Lambda) = \det 
    \begin{pmatrix}
        \ddots & \ddots & \ddots && \multicolumn{2}{c}{\multirow[c]{2}{*}[3pt]{\textbf{0}}} \\
        & \frac{\Lambda^{2N}}{t(\lambda-\ii\hbar)} & 1 & \frac{1}{t(\lambda-\ii\hbar)} \\
        \multicolumn{2}{c}{\multirow[b]{2}{*}{\textbf{0}}} & \frac{\Lambda^{2N}}{t(\lambda)} & 1 & \frac{1}{t(\lambda)} & \\
        &&& \ddots & \ddots & \ddots \\
    \end{pmatrix}.
\end{equation}
As $\mathcal{H}(\lambda; \bm{\tau},\Lambda)$ is periodic in $\la$, 
$\mathcal{H}(\lambda+\mathrm{i}\hbar; \bm{\tau},\Lambda)=\mathcal{H}(\lambda; \bm{\tau},\La)$, 
and tends to $1$ when $\Re\la\ra\pm \infty$ for fixed $\Im\la$, it must be of the form 
\begin{equation} \label{Hill-formula}
\mathcal{H}(\lambda; \bm{\tau},\Lambda)=\prod_{j=1}^{N}\frac{\sinh\tfrac{\pi}{\hbar}(\lambda-\delta_j(\boldsymbol{\tau},\Lambda))}{\sinh\tfrac{\pi}{\hbar}(\lambda-\tau_j)}\,,
\end{equation}
defining the $N$ functions $\de_j(\bm{\tau},\Lambda)$, $j=1,\dots,N$. 

The quantization conditions (i) can now be satisfied with the ansatz \rf{q-ansatz} if and only if the 
constants $\xi_j(\bm{\tau},\Lambda)$ appearing in the relations 
$\mathfrak{q}_{\bm{\tau}}^+(\de_j)=\xi_j(\bm{\tau},\Lambda)\mathfrak{q}_{\bm{\tau}}^-(\de_j)$
do not depend on the value of $j\in\{1,\dots,N\}$. This set of conditions restricts $\bm{\tau}$ severely. 
For any given solution $\bm{\tau}$ to these conditions, one may for $j=1,\dots,N$ choose the parameters $\de_j$ in the ansatz \rf{q-ansatz} to be equal to $\de_j(\bm{\tau},\Lambda)$ in order to ensure validity of the quantization conditions (i), (ii) above.

Let us note that $Q^{\pm}_{\bm{\tau}}(\la)$ decay rapidly when $\mathrm{Im}(\la)\ra\pm\infty$ respectively, 
behaving as $Q^{\pm}_{\bm{\tau}}(\lambda) \sim \abs{\lambda}^{-\frac{N\abs{\lambda}}{\hbar}}$. 
As $\mathfrak{q}_{\bm{\tau}}^+(\de_j)=\xi_j(\bm{\tau},\Lambda)\mathfrak{q}_{\bm{\tau}}^-(\de_j)$ for 
all $j=1,\dots,N$, it follows that the sequences 
$\mathfrak{q}_{\bm{\tau}}^\pm(\de_j(\bm\tau,\Lambda))=\big(Q_{\bm{\tau}}^{\pm}(\de_j(\bm\tau,\Lambda)+\mathrm{i}\hbar n)\big)_{n\in\BZ}$
have rapid decay for both $n\ra\infty$ and $n\ra-\infty$.

\subsection{Solutions to the Baxter Equation From a Non-Linear Integral Equation}\label{BaxterfromNLIE}

An alternative construction of two linearly independent entire solutions to the Baxter equation \rf{Baxter-equation} was used in~\cite{Kozlowski:2010tv}. It is based on the non-linear integral equation
\begin{equation}\label{NLIE}
     \log X_{\bm{\delta}}^{}(\mu) = -\log\Theta(\mu)+\int_\BR \frac{d\mu'}{2\pi\ii}\frac{2\ii\hbar }{(\mu-\mu')^2+\hbar^2}\log( 1+ {X_{\bm{\delta}}^{}(\mu')})\,,
\end{equation}
where 
\begin{equation} \label{Theta-def}
    \Theta(\mu) = \Lambda^{-2N}\vartheta(\mu-\ii\hbar/2)\vartheta(\mu+\ii\hbar/2)\,,  \qquad
    \vartheta(\mu) = \prod_{k=1}^{N}(\mu-\delta_k)\,.
\end{equation}
Existence of a solution to \rf{NLIE} was shown in~\cite[Proposition 3]{Kozlowski:2010tv} for 
sets $\bm\de\coloneqq\{\de_1,\dots,\de_N\}$ which are invariant 
under complex conjugation, satisfy $\abs{\mathrm{Im}(\de_k)}<\hbar/2$ for all $k=1,\dots,N$, and small enough values of 
$\Lambda$. We may note that solving \rf{NLIE} by iteration yields 
a series in powers of $\Lambda^{2N}$. Whenever this series is absolutely convergent, one may conclude that
the dependence of the solutions $X_{\bm{\delta}}$ to \rf{NLIE} on $\bm\de$ is analytic for $k=1,\dots,N$.

For any given solution $X_{\bm{\delta}}^{}$ of the integral equation \rf{NLIE} 
let us define the functions $\mathrm{Q}_{\bm{\de}}^{\pm}$ as
\begin{subequations}\label{QfromY}
\begin{align}
  \label{Q^+delta} \mathrm{Q}^{+}_{\bm{\delta}}(\lambda) &= \left( \frac{\hbar}{\Lambda} \right)^\frac{\ii N\lambda}{\hbar}\frac{e^{-N\pi \lambda/\hbar}v_\uparrow(\lambda)}{\prod_{k=1}^{N}\Gamma(1-\ii(\lambda-\delta_k)/\hbar)}\,,  \\ 
   \label{Q^-delta} \mathrm{Q}^{-}_{\bm{\delta}}(\lambda) &= \left( \frac{\Lambda}{\hbar} \right)^\frac{\ii N\lambda}{\hbar}\frac{e^{-N\pi \lambda/\hbar}v_\downarrow(\lambda-\ii\hbar)}{\prod_{k=1}^{N}\Gamma(1+\ii(\lambda-\delta_k)/\hbar)}\,,
\end{align}
\end{subequations}
where 
\begin{subequations}\label{eq:v-fns}
\begin{align}
    \label{v+}\log v_\uparrow(\lambda) &= -\int_\BR\frac{d \mu}{2 \pi\ii} \frac{1}{\lambda - \mu+\ii\hbar/2}\log(1+X_{\boldsymbol\delta}(\mu))\,, \\ 
       \label{v-} \log v_\downarrow(\lambda-\ii\hbar) &= \int_\BR\frac{d \mu}{2 \pi\ii} \frac{1}{\lambda - \mu-\ii\hbar/2}\log(1+X_{\boldsymbol\delta}(\mu))\,.
\end{align}
\end{subequations}

The functions $\mathrm{Q}_{\bm{\de}}^{\pm}$ represent solutions to 
the Baxter equation \rf{Baxter-equation}, more precisely:
\begin{propn} {\rm~\cite{Kozlowski:2010tv}} The functions $\mathrm{Q}_{\bm{\de}}^{\pm}$ are entire, and there 
exists a polynomial $t_{\bm{\delta}}$ of degree $N$ such that the functions $\mathrm{Q}_{\bm{\de}}^{+}$
and $\mathrm{Q}_{\bm{\de}}^{-}$ both solve the Baxter equation \rf{Baxter-equation} with the same polynomial 
$t=t_{\bm{\delta}}$ on the left side.
\end{propn}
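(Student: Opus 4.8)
The plan is to reproduce, in the present notation, the argument of \cite{Kozlowski:2010tv}: first transfer the analytic properties of the NLIE solution $X_{\bm\delta}$ through the definitions \rf{v+}--\rf{QfromY}, then read the Baxter coefficient off the shift relations of the Gamma-prefactors, use the integral equation \rf{NLIE} to show that this coefficient is an entire function of polynomial growth (hence a polynomial), and finally bootstrap entirety of $\mathrm{Q}^\pm_{\bm\delta}$ from the Baxter recursion. I start from the properties of $X_{\bm\delta}$ established in \cite[Proposition~3 and Appendix~A.2]{Kozlowski:2010tv} for admissible $\bm\delta$ and small $\Lambda$: $X_{\bm\delta}$ is holomorphic in a horizontal strip around $\BR$ with $1+X_{\bm\delta}$ nowhere vanishing there, and \rf{NLIE} combined with $\Theta(\mu)\sim\Lambda^{-2N}\mu^{2N}$ (see \rf{Theta-def}) gives $\log(1+X_{\bm\delta}(\mu))=O(|\mu|^{-2N})$ as $|\Re\mu|\to\infty$. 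The Cauchy-type integrals in \rf{v+}, \rf{v-} then converge absolutely, so that $v_\uparrow,v_\downarrow$ are holomorphic and nowhere vanishing off the line $\Im\lambda=-\hbar/2$, tend to $1$ as $\Im\lambda\to\pm\infty$, and grow at most polynomially in $\Re\lambda$ along horizontal strips; comparing \rf{v+} with \rf{v-} one moreover reads off $v_\downarrow(\lambda-\ii\hbar)=1/v_\uparrow(\lambda-\ii\hbar)$. Since $1/\Gamma$ is entire and the power and exponential prefactors in \rf{Q^+delta}, \rf{Q^-delta} are entire and nowhere vanishing, it follows that the function defined by \rf{Q^+delta} is holomorphic on $\{\Im\lambda>-\hbar/2\}$ and the one defined by \rf{Q^-delta} on $\{\Im\lambda<\hbar/2\}$.

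For the Baxter equation, put $P(\lambda):=\prod_{k=1}^N\Gamma(1-\ii(\lambda-\delta_k)/\hbar)$ and let $\vartheta$ be as in \rf{Theta-def}. From $\Gamma(z+1)=z\Gamma(z)$ one gets the exact shift relations $P(\lambda+\ii\hbar)=(-\ii/\hbar)^N\vartheta(\lambda+\ii\hbar)\,P(\lambda)$ and $P(\lambda-\ii\hbar)=(\ii\hbar)^N\vartheta(\lambda)^{-1}P(\lambda)$, together with their analogues for $\prod_{k=1}^N\Gamma(1+\ii(\lambda-\delta_k)/\hbar)$. Inserting these and the elementary shifts of the power and exponential prefactors into \rf{Q^+delta}, all numerical constants cancel and one obtains, for $\lambda$ where the factors are defined and $\mathrm{Q}^+_{\bm\delta}(\lambda)\neq0$,
\begin{equation}\label{eq:baxter-coeff}
t_{\bm\delta}(\lambda):=\Lambda^N\,\frac{\ii^N\mathrm{Q}^+_{\bm\delta}(\lambda+\ii\hbar)+\ii^{-N}\mathrm{Q}^+_{\bm\delta}(\lambda-\ii\hbar)}{\mathrm{Q}^+_{\bm\delta}(\lambda)}=\frac{\Lambda^{2N}\,v_\uparrow(\lambda+\ii\hbar)}{\vartheta(\lambda+\ii\hbar)\,v_\uparrow(\lambda)}+\frac{\vartheta(\lambda)\,v_\uparrow(\lambda-\ii\hbar)}{v_\uparrow(\lambda)}\,,
\end{equation}
and the analogous computation from \rf{Q^-delta}, using $v_\downarrow=1/v_\uparrow$, produces the \emph{same} function $t_{\bm\delta}$. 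The point now is that \rf{NLIE} is precisely the functional equation that makes $t_{\bm\delta}$ globally well behaved: decomposing the kernel as $\tfrac{2\ii\hbar}{(\mu-\mu')^2+\hbar^2}=\tfrac{1}{\mu-\mu'-\ii\hbar}-\tfrac{1}{\mu-\mu'+\ii\hbar}$, equation \rf{NLIE} together with \rf{v+} becomes a multiplicative finite-difference relation for $v_\uparrow$ of the form $X_{\bm\delta}(\mu)\,\Theta(\mu)=v_\uparrow(\mu+\ii\hbar/2)\big/v_\uparrow(\mu-3\ii\hbar/2)$; combined with the factorisation of $\Theta$ in \rf{Theta-def}, this both supplies the meromorphic continuation of $v_\uparrow$ across $\Im\lambda=-\hbar/2$ and forces every apparent singularity of $t_{\bm\delta}$ --- along the shifted lines, and at the zeros $\delta_k-\ii\hbar$ of $\vartheta(\lambda+\ii\hbar)$, which are matched by zeros of $\mathrm{Q}^+_{\bm\delta}$ coming from the $\Gamma$-prefactor --- to cancel, so that $t_{\bm\delta}$ extends to an entire function. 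Being entire and of polynomial growth along horizontal strips, $t_{\bm\delta}$ is a polynomial by Liouville's theorem, and letting $\Re\lambda\to\pm\infty$ in \rf{eq:baxter-coeff} (where $v_\uparrow\to1$ and $\vartheta(\lambda)\sim\lambda^N$) shows it is monic of degree exactly $N$, as required in \rf{Baxter-equation}.

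It remains to upgrade $\mathrm{Q}^\pm_{\bm\delta}$ to entire functions. Solving \rf{Baxter-equation} for $q(\lambda-\ii\hbar)$ as $q(\lambda-\ii\hbar)=\ii^N\Lambda^{-N}t_{\bm\delta}(\lambda)\,q(\lambda)-\ii^{2N}q(\lambda+\ii\hbar)$ --- with no division by $t_{\bm\delta}$, which is now a polynomial --- expresses $\mathrm{Q}^+_{\bm\delta}(\lambda-\ii\hbar)$ holomorphically in terms of values of $\mathrm{Q}^+_{\bm\delta}$ on $\{\Im\lambda>-\hbar/2\}$; iterating extends $\mathrm{Q}^+_{\bm\delta}$ holomorphically to $\{\Im\lambda>-\hbar/2-m\hbar\}$ for every $m\in\BN$, hence to all of $\BC$, and symmetrically solving for $q(\lambda+\ii\hbar)$ makes $\mathrm{Q}^-_{\bm\delta}$ entire; the Baxter relation with $t=t_{\bm\delta}$, established where \rf{eq:baxter-coeff} holds, then propagates to all of $\BC$ by analytic continuation. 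I expect the technical heart to be the middle step: extracting the precise finite-difference equation for $v_\uparrow$ from \rf{NLIE}, checking that it cancels every apparent singularity of the right-hand side of \rf{eq:baxter-coeff} --- in particular that the zeros of $\mathrm{Q}^+_{\bm\delta}$ at the points $\delta_k-\ii\hbar$ exactly match those of the Baxter combination there --- and verifying that the computations based on \rf{Q^+delta} and on \rf{Q^-delta} deliver one and the same monic polynomial; the analytic input of the first step is quoted from \cite{Kozlowski:2010tv} and the bootstrap of the last step is routine.
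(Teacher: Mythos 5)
Your plan --- push the analyticity of $X_{\bm\delta}$ through the definitions~\rf{v+}--\rf{v-}, read the Baxter coefficient off the $\Gamma$-shift relations, use the difference relation encoded in \rf{NLIE} to kill the apparent singularities, and close with a Liouville argument --- is the right kind of argument, and the shift computation giving $t_{\bm\delta}(\lambda) = \frac{\Lambda^{2N}v_\uparrow(\lambda+\ii\hbar)}{\vartheta(\lambda+\ii\hbar)v_\uparrow(\lambda)} + \frac{\vartheta(\lambda)v_\uparrow(\lambda-\ii\hbar)}{v_\uparrow(\lambda)}$ from $\mathrm{Q}^+_{\bm\delta}$ checks out. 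But the assertion that ``the analogous computation from \rf{Q^-delta}, using $v_\downarrow = 1/v_\uparrow$, produces the same function $t_{\bm\delta}$'' fails. The identity $v_\uparrow v_\downarrow = 1$ holds only between the two integral formulas on their common domain $\Im\lambda\neq-\hbar/2$; the functions that actually enter $\mathrm{Q}^\pm_{\bm\delta}$ must be \emph{meromorphically continued} across that line in opposite directions (this is required for $\mathrm{Q}^\pm_{\bm\delta}$ to be entire), and those two continuations differ from each other's reciprocals by a factor built out of $1+X_{\bm\delta}$. Concretely: even granting $v_\downarrow=1/v_\uparrow$ and inserting your finite-difference relation $X_{\bm\delta}(\mu)\Theta(\mu) = v_\uparrow(\mu+\ii\hbar/2)/v_\uparrow(\mu-3\ii\hbar/2)$, the two candidate Baxter coefficients collapse to
\[
t^{(\pm)}(\lambda) \;=\; \vartheta(\lambda)\,\frac{v_\uparrow(\lambda-\ii\hbar)}{v_\uparrow(\lambda)}\,\bigl(1+X_{\bm\delta}(\lambda\pm\ii\hbar/2)\bigr)\,,
\]
which are not equal, since $X_{\bm\delta}$ is not $\ii\hbar$-periodic. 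That $\mathrm{Q}^+_{\bm\delta}$ and $\mathrm{Q}^-_{\bm\delta}$ share a Baxter coefficient is therefore a substantive claim your argument does not deliver; one route is to first prove each coefficient is a monic degree-$N$ polynomial and then compare coefficients, another is to exhibit a formula for $t_{\bm\delta}$ manifestly symmetric in $\mathrm{Q}^\pm_{\bm\delta}$, as in \cite[Eq.~(24)]{Kozlowski:2010tv}.

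Two further problems. The Liouville step is invalid as worded: ``entire and of polynomial growth along horizontal strips'' does not force a polynomial --- $\cos\lambda$ is bounded on every horizontal strip. You also need polynomial growth as $|\Im\lambda|\to\infty$; this is indeed available here (since $v_\uparrow\to1$ uniformly away from its poles, so the formula gives $t_{\bm\delta}\sim\lambda^N$ in all directions), but it must be stated and invoked, say via Phragm\'en--Lindel\"of on the middle strip, rather than Liouville on strips alone. Finally, the cancellation of every apparent singularity of $t_{\bm\delta}$ --- which you rightly call ``the technical heart,'' and which also determines where $\mathrm{Q}^\pm_{\bm\delta}$ have to vanish --- is asserted rather than carried out; as submitted, the proposal therefore does not establish the result.
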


The function $t_{\bm{\delta}}$ can be represented in terms of $\mathrm{Q}_{\bm{\de}}^{\pm}$ by an explicit formula 
\cite[Equation (24)]{Kozlowski:2010tv}. To each tuple $\bm{\delta}=(\de_1,\dots,\de_N)$ for which there exists a solution 
$X_{\bm{\delta}}$ 
of \rf{NLIE}, we may thereby associate a polynomial 
$t_{\bm{\delta}}(\lambda)=\prod_{k=1}^{N}(\mu-\tau_k(\bm{\delta}))$, defining 
$\bm{\tau}(\bm{\delta})=(\tau_1(\bm{\delta}),\dots,\tau_N(\bm{\delta}))$. 

As observed in~\cite[Appendix B]{Kozlowski:2010tv}, 
one may construct solutions $X_{\bm{\delta}(\bm{\tau})}$ to the integral equation \rf{NLIE}
from the determinants $K_{\pm}$ considered in the previous subsection. The corresponding 
solutions to the Baxter equations are related simply as
${Q}_{\bm{\tau}}^{\pm}(\la)=\mathrm{Q}_{\bm{\de}(\bm{\tau})}^{\pm}(\la)$.
This is sufficient to prove that, for each solution to the quantization conditions, a solution $X_{\bm{\delta}}$ to~\rf{NLIE} exists.

\subsection{Quantization Conditions and Yang-Yang Function}\label{sec:Yang-Yang}

The quantization conditions (i), (ii) may now be reformulated in terms of the functions
$\mathrm{Q}_{\bm{\de}}^{\pm}(\la)$. We may, as before, consider an ansatz of the form 
\begin{equation} \label{q-ansatz-2}
    q(\lambda) = \frac{\mathrm{Q}^{+}_{\bm{\de}}(\lambda)-\zeta\mathrm{Q}^{-}_{\bm{\de}}(\lambda)}{\prod_{j=1}^{N}e^{-\frac{\pi\lambda}{\hbar}}\sinh\tfrac{\pi}{\hbar}(\lambda-\delta_j)}\,, \qquad\de_j\in\BC\,,\quad j=1,\dots,N\,.
\end{equation}
However, we may now simply assume that the parameters $\de_1,\dots,\de_N$ in \rf{q-ansatz-2} 
coincide with the parameters appearing in \rf{Theta-def}. The 
quantum Wronskian $W[\mathrm{Q}^{+}_{\bm{\de}},\mathrm{Q}^{-}_{\bm{\de}}](\lambda)$ vanishes for $\la=\de_j$
\cite[Equation (22)]{Kozlowski:2010tv}, 
implying that the infinite sequences 
$\mathfrak{q}_{\bm{\de},j}^+$ and $\mathfrak{q}_{\bm{\de},j}^-$, defined as
\[
\mathfrak{q}_{\bm{\de},j}^\pm=\big(\mathrm{Q}_{\bm{\de}}^{\pm}(\de_j-\mathrm{i}\hbar n)\big)_{n\in\BZ}\in\BC^{\BZ}\,,
\qquad j=1,\dots,N\,,
\]
are proportional to each other,
\begin{equation}\label{zetaj-def}
\mathfrak{q}_{\bm{\de},j}^+=\zeta_j^{}(\bm{\de},\Lambda)\,\mathfrak{q}_{\bm{\de},j}^-\,, \qquad
j=1,\dots,N\,.
\end{equation}
The quantization conditions can now be reformulated in terms of the functions 
$\zeta_j(\bm{\de},\Lambda)$ defined in \rf{zetaj-def}:
Conditions (i) and (ii) can be satisfied if and only if
\begin{equation}\label{q-cond-delta}
\zeta_j^{}(\bm{\de},\Lambda)=\zeta_{k}^{}(\bm{\de},\Lambda)\,,\qquad j\neq k\,, \qquad j,k\in\{1,\dots,N\}\,,
\end{equation}
allowing us to choose $\zeta=\zeta_j(\bm{\de},\Lambda)$ to cancel all poles from zeros 
of the denominator in \rf{q-ansatz-2}.
These conditions restrict the choices of 
$\de_1,\dots,\de_N$ to a discrete set once we further require that $P=\sum_{j=1}^N\tau_j=0$. The latter requirement amounts to imposing  $\sum_{j=1}^N\delta_j=0$, as it can be inferred from eq.~(\ref{delta-to-tau}) \cite[Equation (3.20)]{Kozlowski:2010tv}. Using the explicit expression for $\log\zeta_j$ (\ref{eta-explicit}) \cite[Equation (3.13)]{Kozlowski:2010tv}, together with eq.~(\ref{d_theta=P}), one can then infer that when the quantization conditions (\ref{q-cond-delta}) are imposed one has $\prod_{j=1}^N\zeta_j=\zeta^N=1$.

A result from~\cite{Kozlowski:2010tv} that will be crucial for us is the existence of a potential
$\CY(\bm{\de},\Lambda)$
for the functions $\log\zeta_j(\bm{\de},\Lambda)$, satisfying 
\begin{equation}\label{YY-eta-potential}
\log\zeta_j(\bm{\de},\Lambda)=
\frac{\partial}{\partial \de_j}\CY(\bm{\de},\Lambda)\,.
\end{equation}
The function $\CY(\bm{\de},\Lambda)$ can be represented explicitly in terms of the solution $X_{\bm\delta}(\mu)$
of the integral equation \rf{NLIE} as
\begin{subequations} \label{eq:Yang-Yang-def}
\begin{align}
   \label{Wdef} \mathcal{Y}(\boldsymbol{\delta}, \Lambda) &=  \Ypert(\boldsymbol{\delta},\Lambda)+\Yinst(\boldsymbol{\delta}, \Lambda)\,, \\
  \label{Wpert}  \Ypert(\boldsymbol{\delta}, \Lambda) &= N\frac{\ii}{\hbar}(\log\hbar-\log\Lambda)\sum_{j=1}^N\delta_j^2 
  +\sum_{j,k=1}^N\varpi(\delta_k-\delta_j)\,, \\
    \label{Yinst}\Yinst(\boldsymbol{\delta},\Lambda) &= -\int_\BR\frac{d\mu}{2\pi\ii}\left\{\frac{\log (X_{\bm\delta}(\mu)\Theta(\mu))}{2}\log(1+X_{\bm\delta}(\mu))+\Li_2(-X_{\bm\delta}(\mu))\right\}.
\end{align}
\end{subequations}
whe$\varpi(\lambda)$ is such that $\varpi'(\lambda) = \log \Gamma(1+\ii\lambda/\hbar)$. The details
of the proof of \rf{YY-eta-potential}, 
not included in \cite{Kozlowski:2010tv}, can be found in Appendix \ref{W=Y appendix}.

\begin{rem} The function $\CY(\bm{\de},\Lambda)$ has first been introduced in~\cite{Nekrasov:2009rc}
in the context of the instanton counting in 
pure ${\cal N}=2$, $d=4$ supersymmetric Yang-Mills theory with gauge group $\operatorname{SU}(N)$. It has been  
argued in~\cite{Nekrasov:2009rc} and
verified in more detail in~\cite{Meneghelli:2013tia} that the function $\CY(\bm{\de},\Lambda)$
represents the limit $\epsilon_2$ of the instanton partition functions in the Omega-background~\cite{Nekrasov:2002qd}
with parameters $\ep_1,\ep_2$. The relations between quantum integrable models 
and ${\cal N}=2$, $d=4$ supersymmetric quantum field theories discussed in~\cite{Nekrasov:2009rc} 
suggest the conjecture (later proven in~\cite{Kozlowski:2010tv})
that the quantization conditions of the closed quantum Toda chain can be 
formulated in the form above.
\end{rem}

\section{Higher Rank Mathieu Opers}\label{oper-section}

In this section we review the necessary background on opers on the twice-punctured sphere $C_{0,2} = \BP^1\setminus\{0,\infty\}$, such as their definition and parameterization both in terms of connection coefficients as well as (generalized) monodromy data.
This precedes a discussion of the relevant Riemann-Hilbert problem, and the definition of the generating function of opers, which will be central in the comparison with the quantum Toda chain.

\subsection{Opers From Solutions to the Baxter Equation}

We may easily observe that the Fourier-transform of the Baxter equation~(\ref{Baxter-equation}) yields 
an ordinary differential equation of the form
\begin{equation} \label{oper-x}
       t(-\ii\hbar \partial_x) \psi(x) = \Lambda^N\left( \ii^{N}e^x +  \ii^{-N}e^{-x} \right)\psi(x)\,.
\end{equation}
In order to exhibit the analytic properties of $\psi(x)$, it will be convenient to introduce the variable 
$z=e^x$, and the function $\chi(z)=\psi(\log(z))$, satisfying 
\begin{equation} \label{oper-z}
       t(-\ii\hbar z\partial_z) \chi(z) = \Lambda^N\left( \ii^{N}z +  \ii^{-N}z^{-1} \right)\chi(z)\,.
\end{equation}
The ordinary differential equation of $N$-th order  can be recast 
as a first order matrix equation $(-\ii\hbar z \partial_z\psi(z) - A(z))\Psi(z) = 0$
of the form
\begin{equation}\label{oper-connection}
   {A}(z)= \begin{pmatrix}
        0&E_2& E_3 & \cdots & E_{N-1} & a_N(z; \Lambda) \\
        1&0 & 0 &\cdots & 0 & 0 \\
        0&1& 0 &\cdots & 0 & 0 \\
        \vdots & \vdots & \vdots && \vdots & \vdots \\
        0 & 0 & 0 & \cdots&1&0
    \end{pmatrix},\qquad 
    \Psi(z) = \begin{pmatrix}
        \chi_1^{(N-1)} & \cdots & \chi_N^{(N-1)} \\
        \vdots && \vdots \\
          \chi^{(1)}_1 & \cdots & \chi^{(1)}_N \\
        \chi_1^{}& \cdots &\chi_N^{}  \end{pmatrix},
\end{equation}
where $a_N(z; \Lambda) \coloneqq E_N + \Lambda^N \left( \ii^{N}z +  \ii^{-N}z^{-1} \right)$.
We may observe that $\nabla=dz\,(\pa_z-\frac{\ii}{\hbar z}A(z))$ is a connection  that 
is holomorphic on $C_{0,n}=\mathbb{P}^1\setminus\{0,\infty\}$ and has irregular 
singularities at $0$ and $\infty$. However, $\nabla$ is not the most general connection of this kind. 
In fact, one may first note that the singular behavior at $0$ and $\infty$ is of the mildest possible type. 
As discussed in Section \ref{Section:canonical basis}, this can be inferred from the Newton polygon of the associated $N$-th order differential equation. 
Secondly, the connection $\nabla$ is easily seen to be gauge equivalent to a connection 
of the form $\nabla'=dz\,(\pa_z-B(z))$, with
\begin{equation}\label{oper-form}
   B(z)= \begin{pmatrix}
        0&u_2(z)& u_3(z) & \cdots & u_{N-1}(z) & u_N(z) \\
        1&0 & 0 &\cdots & 0 & 0 \\
        0&1& 0 &\cdots & 0 & 0 \\
        \vdots & \vdots & \vdots && \vdots & \vdots \\
        0 & 0 & 0 & \cdots&1&0
    \end{pmatrix},
\end{equation}
where 
\begin{equation}
    u_j(z) = \mathsf{u}_jz^{-j}\,,\qquad j=2,\dots,N-1\,,\qquad u_N(z) = \frac{\Lambda^N}{z^{N+1}} + \frac{\mathsf{u}_N}{z^N} + \frac{\Lambda^N}{z^{N-1}}\,.
\end{equation}
Connections of the form \rf{oper-form} are often called opers in the literature. The condition to be locally 
gauge-equivalent to a connection of the form \rf{oper-form} is known, in general, to define half-dimensional subspaces of the moduli spaces of all holomorphic $\SLNC$-connections on punctured Riemann surfaces $C$ having fixed singularity types at the punctures. In our case one may observe that a more general family of connections having  irregular singularities of the mildest possible type at $0$ and $\infty$ 
can be defined as follows,\footnote{\label{Lax-mat-footnote}In order to see that \rf{gen-connection} has 
a singularity at $0$ of the same type as \rf{oper-connection}, 
one may note that up to terms of higher order in $z$, it can be brought to the form 
\rf{oper-connection} by a diagonal gauge transformation. Then one can study the associated scalar, $N$-th order oper equation (see \cite[Theorem XIII-5-4]{hsieh_sibuya_1999}) by drawing the Newton polygon at $z=0$ and finding that it has a unique positive slope of $1/N$. A similar argument can be used for the 
singularity at $z=\infty$.}
\begin{equation}\label{gen-connection}
   {A}(z)= \begin{pmatrix}
        p_1&\Lambda a_1 & 0 & \cdots & 0 & z^{-1}\Lambda a_N \\
        \Lambda a_1 & p_2 & \Lambda a_2 &\cdots & 0 & 0 \\
        0&\Lambda a_2 & p_3 &\cdots & 0 & 0 \\
        \vdots & \vdots & \vdots && \vdots & \vdots \\
                0 & 0  & 0 && p_{N-1} & \Lambda a_{N-1} \\
        z\Lambda a_N & 0 & 0 & \cdots&\Lambda a_{N-1}&p_N
    \end{pmatrix}.    
\end{equation}
Connections of the form \rf{gen-connection} are related to connections of the form \rf{oper-form}
by mildly singular gauge transformations if one allows for certain singularities called apparent singularities 
in connection matrix elements $u_j(z)$ (see e.g. \cite[Lemma XIII-5-1]{hsieh_sibuya_1999}). The condition that the resulting functions
$u_j(z)$ are holomorphic defines a half-dimensional subspace of the space of 
connections of the form \rf{gen-connection}.
    
\subsection{Monodromy and Stokes Data} \label{Monodromy-Stokes}

It is well-known that flat connections on Riemann surfaces are to a large extent determined by their
monodromy data. In the cases of closed or punctured Riemann surfaces $C$ one may associate to a holomorphic 
connection $\mathbf{A}$ the collection of its holonomies
\begin{equation}
    M_{\gamma}= \textup{Hol}_{\gamma}(\mathbf{A}) =  \exp \oint_\gamma \mathbf{A}(z)\,dz\,,
\end{equation}
defining a representation $\rho:\pi_1(C)\rightarrow \operatorname{GL}(N,\mathbb{C})$ of the fundamental group of $C$. 
The correspondence between representations $\rho$ of $\pi_1(C)$ and gauge equivalence classes of holomorphic connection $\mathbf{A}$ is known as the Riemann-Hilbert correspondence. 

If the connections $\mathbf{A}$ have irregular singularities, one needs to supplement 
the representation $\rho$ by Stokes data describing the relations between canonical bases
defined in sectors around the irregular singularities. 
If $C$ has more than one irregular singularity, 
it may also be useful to consider the matrices relating canonical bases at two different irregular 
singularities. 

In the present case we are interested in the twice-punctured Riemann sphere $C_{0,2}=\mathbb{P}^1\setminus\{0,\infty\}$  
having irregular singularities of the mildest possible type at both $0$ and $\infty$.
We are going to demonstrate that
there exist canonical bases $\mathbf{Y}^{(0,\infty)}_{k}(z)=({Y}^{(0,\infty)}_{k,0},\dots,{Y}^{(0,\infty)}_{k,N-1})$, with $k=0,\dots,2N-1$, characterized by having asymptotic behavior near $0$ and $\infty$ of the following form.  
Introducing the coordinates $y=\tfrac{N\Lambda}{\hbar}z^{1/N}$ and $y'=\tfrac{N\Lambda}{\hbar}z^{-1/N}$ on 
$N$-fold covers of $C_{0,2}$, they are defined by holomorphic functions $Y^{(\infty)}_{k,n}(y)$
and $Y^{(0)}_{k,n}(y')$
admitting uniform asymptotic expansions of the form
\begin{subequations} \label{Y-asymptotics}
\begin{align}
    Y^{(\infty)}_{k,n}(y) &\sim \ii^{N+1}e^{-\alpha_n y}(\alpha_n y)^{-\frac{N-1}{2}}\tilde{f}^{(\infty)}_n\,, &y\to\infty\,, \\
    Y^{(0)}_{k,n}(y') &\sim (-\ii)^{N+1}e^{-\alpha_{-n}y'}(\alpha_{-n}y')^{-\frac{N-1}{2}}\tilde{f}^{(0)}_n\,, &y'\to\infty\,,
\end{align}
\end{subequations}
where $\alpha_n\coloneqq \exp(2\pi\ii n/N)$ and $\tilde{f}^{(0,\infty)}_n$ are formal power series in $y^{-1}$ and ${y'}^{-1}$ with leading coefficients normalized to $1$.\footnote{We have included the factors $(\pm\ii)^{N+1}\alpha_{\pm n}^{-\frac{N-1}{2}}$ in order to simplify the results of Sections~\ref{Section:canonical basis} and~\ref{sec:connection-langlands}.} 
The functions $Y^{(\infty)}_{k,n}(y)$ are holomorphic and admit the uniform asymptotic expansions above 
in closed subsectors\footnote{Sectors defined in a similar way as~(\ref{Sokal-discs}), but with a closed angular range $\theta \in \big[\frac{\pi}{2}+\frac{(k-1)\pi}{N}+\varepsilon, \frac{\pi}{2}+\frac{k\pi}{N}-\varepsilon \big]$ with $\varepsilon>0$ small enough, and similarly for $\theta'$.} of the domains
\begin{subequations} \label{Sokal-discs}
\begin{align} 
    D^{(\infty)}_k &\coloneqq \bigcup_{ \theta \in \left(\frac{\pi}{2}+\frac{(k-1)\pi}{N}, \frac{\pi}{2}+\frac{k\pi}{N} \right)}\left\{y \in \mathbb{C} \,\middle|\;\Re(ye^{\ii\theta}) > R\,\right\}, \\
    D^{(0)}_k &\coloneqq \bigcup_{\theta' \in \left(-\frac{\pi}{2}-\frac{k\pi}{N}, -\frac{\pi}{2}-\frac{(k-1)\pi}{N} \right)}\left\{y' \in \mathbb{C} \,\middle|\;\Re(y'e^{\ii\theta'}) > R' \,\right\},
\end{align}
\end{subequations}
respectively.\footnote{The formulas~(\ref{Sokal-discs}) hold for $N\geq 3$. For $N=2$, the opening angle of the sectors in which uniform asymptotic expansions hold is $2\pi$, rather than $3\pi/2$; moreover the index $k$ only takes values $0,1$. This will be justified in Section~\ref{Section:canonical basis} by analysing the structure of the Borel plane for $f^{(0,\infty)}_0$.} \label{footnote:N=2}
The Stokes matrices $\mathcal{S}_k$ describe the relations between the canonical bases associated to different sectors 
$D^{(0,\infty)}_k$,
\begin{equation}\label{Stokes-matrices-def-1}
\mathbf{Y}^{(0,\infty)}_{k+1}(z)=\mathbf{Y}^{(0,\infty)}_k(z)\mathcal{S}_{k}\,.
\end{equation}

We are furthermore going to show in Section~\ref{Section:canonical basis}
that the Stokes matrices are completely determined by the eigenvalues of the holonomy along the simple closed 
curve $\ga$ separating $0$ and $\infty$ on $C_{0,2}$ in this case. This is an important 
simplifying feature that does not hold in general
for other types of irregular singularities.\footnote{For a discussion of Stokes phenomena arising from flat sections of connections with higher Poincaré rank, see e.g. \cite{Witten:2007td} and references therein.}

As both monodromy and Stokes data are completely determined by the eigenvalues of $M_{\gamma}$, we may formulate a natural variant of the Riemann-Hilbert problem in our case. 
It aims to find the connections $\nabla$ of oper form~\rf{oper-form} having specified eigenvalues of $M_{\gamma}$.
A similar formulation of the relevant problem (RH) of Riemann-Hilbert type 
has been given in the introduction.
In order to construct solutions of this problem,  it will be useful to 
consider Floquet-bases characterized by having diagonal monodromy 
\begin{equation}
    M_{\ga} = \diag(\Sigma_1,\dots,\Sigma_N) =  \mathrm{diag}\big( e^{2\pi\ii\sigma_1},\dots,e^{2\pi\ii\sigma_N} \big)\in\SLNC\,.
\end{equation}
This condition clearly leaves considerable freedom in the definition of the Floquet-bases, 
represented by multiplication of individual basis elements with functions of $\bm{\si}$ and $\Lambda$.
We shall partially fix this ambiguity by 
considering two types of Floquet-bases
$\mathbf{F}^{(0,\infty)}(z)=({F}^{(0,\infty)}_{1},\dots,{F}^{(0,\infty)}_{N})$, 
admitting  expansions of the form $\mathbf{F}^{(0,\infty)}(z)=\bm{\vf}^{(0,\infty)}(w_{0,\infty})+O(\Lambda)$, where
\begin{equation}\label{Floquet-gen}
\begin{aligned}
w_0 &= \left( \frac{\Lambda}{\hbar} \right)^Nz\,,\qquad\quad\;\, \vf^{(0)}_j(w_0)=
N_j^{(0)}(\bm{\si})\sum_{n=0}^{\infty}
\vf_{n,j}^{(0)}(\bm{\si})w_0^{n+\si_j}\,,\\
w_\infty &= \left( \frac{\Lambda}{\hbar} \right)^Nz^{-1}\,,\qquad \vf^{(\infty)}_j(w_\infty)=
N_j^{(\infty)}(\bm{\si})\sum_{n=0}^{\infty}
\vf_{n,j}^{(\infty)}(\bm{\si})w_\infty^{n-\si_j}\,,
\end{aligned} \qquad j=1,\dots,N\,,
\end{equation}
and requiring that the bases $\mathbf{F}^{(0,\infty)}$
are related to the canonical bases by relations of the form
\begin{equation} \label{eq:F-to-Y-def}
\mathbf{Y}^{(0,\infty)}_0(z)=\mathbf{F}^{(0,\infty)}(z)\,C^{(0,\infty)}(\mathbf{\bm\sigma})\,,
\end{equation}
with $C^{(0,\infty)}(\mathbf{\sigma})$ being $N\times N$-matrices that are $\Lambda$-independent.
A prescription for fixing the remaining freedom, represented by the functions
$N_j^{(0,\infty)}(\bm{\si})$ in \rf{Floquet-gen}, will be given below. 

\subsection{Definition of the Generating Function of Opers}

The relation between the Floquet bases associated to the singularities at $0$ and $\infty$, respectively,
must be of the form
\begin{equation}\label{etaj-def}
    {F}^{(0)}_j(z)=e^{2\pi\ii\eta_j }{F}^{(\infty)}_j(z)\,,\qquad j=1,\dots,N\,.
\end{equation}
We may, more canonically, define the parameters $\eta_j$ in terms of the eigenvalues of the 
connection matrix $E$ describing the relation between the bases as
\begin{equation}\label{transpzeroinfty}
\mathbf{Y}^{(0)}_{0}(z)=\mathbf{Y}^{(\infty)}_0(z)E\,.
\end{equation}
Considering generic holomorphic connections $\mathbf{A}$ on $C_{0,2}$ 
with irregular singularities of mildest possible type
at $0$ and $\infty$, one will find that the parameters $\bm{\sigma}=(\si_1,\dots,\si_N)$
and $\bm{\eta}=(\eta_1,\dots,\eta_N)$ are independent in general, apart from the 
relations $\sum_{j=1}^N\si_j=0$ and $\sum_{j=1}^N\eta_j=0$. Taken together, it follows that
$(\bm{\sigma},\bm{\eta})$ can serve as coordinates for the space of monodromy and Stokes 
data of generic connections on $C_{0,2}$ having irregular singularities of the type 
specified above.\footnote{Similar parameterizations for the monodromies of certain families of $\SLNC$-connections on $C_{0,4}$ were used in \cite{Hollands:2017ahy,Gavrylenko:2018ckn,Gavrylenko:2018,Jeong:2018qpc}. We expect that a confluence limit should relate the coordinates defined in these references to the coordinates used in our paper.} In the case $N=2$ it is known that 
$\sigma\equiv\sigma_1=-\sigma_2$ and $\eta\equiv\eta_1=-\eta_2$ are Darboux coordinates for 
$\mathcal{M}_{\textrm{flat}}$~\cite{Its:2014lga}. We conjecture that this holds also for $N>2$.\footnote{A proof should be possible using abelianization \cite{Hollands:2017ahy,Hollands:2021itj}, and by the techniques described in \cite{Jeong:2018qpc}. Cluster coordinates have been studied for the case of our interest in \cite{williams2016toda}.}

Connections $\nabla$ of oper form have $N-1$ free parameters $E_2,\dots,E_N$ apart from $\Lambda$. It follows that
the parameters $\bm{\eta}$ introduced through~(\ref{transpzeroinfty}) must be functions of 
$\bm{\si}$ and $\Lambda$ in this case.

We will see that the relation between $\bm{\eta}$ and $\bm{\sigma}$ characterizing the subspace of oper connections
within the moduli space of generic connections on $C_{0,2}$ 
with irregular singularities of mildest possible type can be conveniently encoded in a generating 
function $\mathcal{S}(\boldsymbol{\sigma}, \Lambda)$, satisfying
\begin{equation} \label{dY-dsigma}
\frac{\partial}{\partial \sigma_j}\mathcal{S}(\boldsymbol{\sigma}, \Lambda) = 2\pi \hbar\,\eta_j(\boldsymbol{\sigma}, \Lambda)\,.
\end{equation}
While the existence of a function $\mathcal{S}(\boldsymbol{\sigma}, \Lambda)$ serving as a potential 
for the functions $\eta_j(\boldsymbol{\sigma}, \Lambda)$ may not yet be obvious at this stage,  
we will show below that such a function $\mathcal{S}$  indeed exists, relating it to the 
Yang-Yang-function $\CY$ of the closed Toda chain.

\section{Canonical Bases of Solutions to the Oper Equation} \label{Section:canonical basis}

In this section we provide a construction of the canonical bases $\mathbf{Y}^{(0,\infty)}_k$ introduced in Section~\ref{Monodromy-Stokes} via Borel-Laplace resummation (see~\cite{Sauzin:2014qzt} for an introduction). 
Specifically we will prove:
\begin{propn}\label{prop-canonical-bases}
  There exist families of canonical bases of solutions $\mathbf{Y}^{(0,\infty)}_k$ to the oper equation~\eqref{oper-z} , with $k=0,\dots,2N-1$\footnote{$k=0,1$ when $N=2$.}  that admit uniform asymptotic expansions~(\ref{Y-asymptotics}) on closed subsectors of the domains~(\ref{Sokal-discs}).
\end{propn}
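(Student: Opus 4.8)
The plan is to analyze the scalar oper equation~\eqref{oper-z} directly near the irregular singularity at $z=\infty$ (the case $z=0$ being entirely analogous under the involution $z\mapsto z^{-1}$, which exchanges the roles of $\Lambda z$ and $\Lambda z^{-1}$ in $a_N(z;\Lambda)$) and to produce the canonical solutions by Borel--Laplace resummation of the formal series appearing in~\eqref{Y-asymptotics}. First I would rewrite~\eqref{oper-z} in the rank-one-irregular coordinate $y=\tfrac{N\Lambda}{\hbar}z^{1/N}$, which is natural because the Newton polygon at $z=\infty$ (as inferred from the argument in footnote~\ref{Lax-mat-footnote}, via \cite[Theorem XIII-5-4]{hsieh_sibuya_1999}) has a single positive slope $1/N$, so the $N$ formal solutions are distinguished by the $N$-th roots of unity $\alpha_n=e^{2\pi\ii n/N}$. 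Substituting the ansatz $Y\sim e^{-\alpha_n y}(\alpha_n y)^{-(N-1)/2}\tilde f^{(\infty)}_n(y^{-1})$ into the transformed equation and matching powers of $y^{-1}$, I would show that the formal power series $\tilde f^{(\infty)}_n$ exists, is unique once its leading coefficient is normalized to $1$, and has Gevrey-$1$ growth of its coefficients — this is the standard consequence of the exponential part being linear in $y$ with the confluence of singularities being of the mildest type.

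Next I would pass to the Borel plane: set $\hat f^{(\infty)}_n(\zeta)$ to be the formal Borel transform of $\tilde f^{(\infty)}_n$ in the variable conjugate to $y$. Gevrey-$1$ summability means $\hat f^{(\infty)}_n$ has a nonzero radius of convergence; the key analytic input is that $\hat f^{(\infty)}_n$ extends to an endlessly continuable function on $\mathbb{C}$ with singularities located only at the points $\alpha_m-\alpha_n$ for $m\neq n$ (the ``resonances'' between the $N$ exponential modes), with at most exponential growth along any direction avoiding these singular points. For $N\geq 3$ this places the singularities on the circle of radius $|\alpha_m-\alpha_n|$, so the singularity nearest to the origin in a generic direction $\theta$ is at distance bounded below, and one gets a Laplace integral $Y^{(\infty)}_{k,n}(y)=\int_0^{\infty e^{\ii\theta_k}} e^{-\zeta y}\,\hat f^{(\infty)}_n(\zeta)\,d\zeta$ converging and defining a holomorphic solution on the Sokal-type disc $D^{(\infty)}_k$ of~\eqref{Sokal-discs}; Watson's lemma then gives the claimed uniform asymptotic expansion on closed subsectors. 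The angular width $\pi/N$ of each $D^{(\infty)}_k$ (hence the index range $k=0,\dots,2N-1$, accounting for both punctures) reflects exactly the spacing of the Stokes directions determined by the points $\alpha_m-\alpha_n$. For $N=2$ there is only one pair $\alpha_m-\alpha_n=\pm 2$, the Borel transform of $\tilde f^{(\infty)}_0$ is singularity-free except at $\zeta=\pm 2$, the admissible sector has opening $2\pi$, and only $k=0,1$ occur — this is the content of footnote~\ref{footnote:N=2} and I would verify it by examining the explicit structure of the $N=2$ Borel plane.

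The main obstacle is establishing the precise location and nature of the singularities of the Borel transforms $\hat f^{(\infty)}_n$, and the endless-continuability-with-exponential-bounds statement, since this is what legitimizes the Laplace contours and pins down the sector openings; the existence and Gevrey-$1$ character of the formal series is routine, and Watson's lemma is standard once the resummation is in place. A clean way to handle this is to note that~\eqref{oper-z} has, besides the irregular points, no other singularities on $C_{0,2}$, so after Borel transform the minor equation is a linear ODE in $\zeta$ whose only possible singularities are at the characteristic differences $\alpha_m-\alpha_n$; endless continuability and the exponential bound then follow from the fact that a linear ODE with polynomial coefficients has solutions that are single-valued on the universal cover of the complement of finitely many points and of finite exponential type there. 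I would also need to check that the $N$ solutions $\mathbf{Y}^{(\infty)}_{k,\bullet}$ obtained this way are linearly independent — immediate from the distinctness of their leading exponentials $e^{-\alpha_n y}$ — so that they indeed form a basis of the solution space. Finally, collecting the $N$ vector-valued solutions $\mathbf{Y}^{(0,\infty)}_k(z)=(Y^{(0,\infty)}_{k,0},\dots,Y^{(0,\infty)}_{k,N-1})$ and repeating the construction near $z=0$ in the coordinate $y'$ yields all the families asserted in Proposition~\ref{prop-canonical-bases}.
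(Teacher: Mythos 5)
Your overall strategy coincides with the paper's: pass to the coordinate $y=\tfrac{N\Lambda}{\hbar}z^{1/N}$, establish existence, uniqueness, and Gevrey-1 growth of the formal series $\tilde f^{(\infty)}_n$ (Lemma~\ref{lemma-gevrey}), determine the singularities of their Borel transforms (Lemma~\ref{lemma-borel-plane}), and resum by Laplace integrals along directions avoiding the singular rays, sweeping over $W_k^{(\infty)}$ to cover the domains~\eqref{Sokal-discs}. The genuine gap lies in the step you flag as the main obstacle. You assert that ``after Borel transform the minor equation is a linear ODE in $\zeta$ with polynomial coefficients,'' from which singularity locations, endless continuability and exponential-type bounds would follow automatically. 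This is not what the Borel transform produces: because multiplication by $y^{-k}$ becomes a $k$-fold iterated integral (convolution with $\zeta^{k-1}/(k-1)!$), the Borel transform of eq.~\eqref{ODE-for-asymptotic-expansion} is a Volterra \emph{integral} equation — this is precisely the paper's eq.~\eqref{Borel-ODE1}. The factor $(\zeta+1)^N-1$ appears in front of $\hat\varphi_0(\zeta)$, but the remaining terms are convolutions, not polynomials in $\zeta$ times derivatives of $\hat\varphi_0$. To turn your argument into a proof one must either differentiate the convolution equation enough times in $\zeta$ to clear the kernels (producing a higher-order ODE together with boundary terms whose coefficients involve derivatives of $\hat\varphi_0$ at $0$), or argue directly from the integral equation as the paper does in Appendix~\ref{canonical-basis-appendix}, where the location of the singularities is obtained by contradiction. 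Your claim that the exponential bound on $\hat\varphi_n$ follows from the ODE structure is also precisely the point the paper does \emph{not} settle — they list it as an assumption with a footnote noting the absence of a proof — so if the ODE reduction were carried out carefully, your proposal would in fact fill a small lacuna in the published argument, not merely reproduce it.

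Two smaller inaccuracies. The singularities $\alpha_m-\alpha_n$ (for fixed $n$) do not lie on a single circle: $|\alpha_m-\alpha_n|=2\,|\sin(\pi(m-n)/N)|$ varies with $m-n$; what matters is only that they are finitely many points bounded away from the origin, together with the $2N$ distinct directions $\phi_k=\arg\omega_{n,j}$ they generate as $n$ ranges over $0,\dots,N-1$. Relatedly, the index range $k=0,\dots,2N-1$ applies to \emph{each} puncture separately — the $2N$ Stokes rays arise because the combined singular directions of the $N$ Borel transforms $\hat\varphi_0,\dots,\hat\varphi_{N-1}$ at a single puncture fill out $2N$ equally-spaced rays — and is not a count ``accounting for both punctures.'' Once these points are tidied, the remainder of your proposal (Watson's lemma on closed subsectors, linear independence from distinct exponentials, the $N=2$ degeneration) agrees with the paper's proof.
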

 In Section~\ref{section:Stokes-matrices} we then compute the Stokes matrices $\mathcal{S}_k$ defined by~\eqref{Stokes-matrices-def-1}; in Section~\ref{Monodromy-matrix-section} we use Stokes matrices $\mathcal{S}_k, \mathcal{S}_{k+1}$ to construct the monodromy matrix $M_k$ for the canonical basis $\mathbf{Y}^{(\infty)}_k$. Thereby, we relate the Stokes constants, namely the non-trivial entries of the Stokes matrices, to the eigenvalues $\boldsymbol{\Sigma}$ of the monodromy matrix. Specifically, we prove the following
\begin{propn}\label{propn:Stokes-constants} \hfill
    \begin{enumerate}
        \item The Stokes matrices $\mathcal{S}_k$ defined by~(\ref{Stokes-matrices-def-1}) are entirely specified by $N-1$ Stokes constants $s_i$, $i=1,\dots,N-1$. 
        \item The Stokes constants $s_i$ are related to the eigenvalues $\boldsymbol{\Sigma}=(\Sigma_1,\dots,\Sigma_N)$ of the monodromy matrix by
    \end{enumerate}
    \begin{equation}
        s_i(\boldsymbol{\Sigma})=(-1)^{i+1}e_i(\boldsymbol{\Sigma})\,,
    \end{equation}
    where $e_i(\boldsymbol{\Sigma})$ is the elementary symmetric polynomial of degree $i$ in $\Sigma_1,\dots,\Sigma_N$.
\end{propn}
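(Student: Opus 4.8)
The plan is to first establish part (1)---that the Stokes matrices have a highly constrained form---by a careful analysis of the Borel plane for the formal solutions $\tilde f^{(0,\infty)}_n$ appearing in~\eqref{Y-asymptotics}, and then to deduce part (2) by computing the monodromy matrix $M_k$ explicitly as an ordered product of the Stokes matrices and a formal (diagonal) exponential, and matching its characteristic polynomial against $\prod_j(\lambda-\Sigma_j)$. I would begin by recalling the structure of the oper equation~\eqref{oper-z} near $z=\infty$: in the coordinate $y=\tfrac{N\Lambda}{\hbar}z^{1/N}$ the equation becomes (to leading order) a rank-$N$ system whose exponential types are $\alpha_n y$ with $\alpha_n=e^{2\pi\ii n/N}$, so the singular directions (anti-Stokes lines) in the $y$-plane are equally spaced at angles that are the bisectors between consecutive $\arg\alpha_n$. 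Crossing one such line, only the pair of solutions whose exponential types are adjacent (consecutive in the cyclic order of the $\alpha_n$) can mix; this is the standard fact that the Stokes matrix across a given line differs from the identity in a single off-diagonal entry. Since there are $N$ directions $\alpha_n$ but the sectors $D^{(\infty)}_k$ in~\eqref{Sokal-discs} range over $k=0,\dots,2N-1$ (so two Stokes lines per ``period'' of $2\pi$ in $\theta$, i.e. $\pi$ in the $z$-plane), I would show that the $2N$ elementary Stokes matrices come in a repeating pattern of ``up'' and ``down'' unipotent factors, each with a single nontrivial entry, and that cyclic symmetry (rotation $y\mapsto \alpha_1 y$ is a symmetry of the leading equation) forces the $N$ ``up'' constants to coincide in a single constant and likewise the $N$ ``down'' constants, leaving $N-1$ independent Stokes constants $s_1,\dots,s_{N-1}$ after accounting for one normalization/triviality relation. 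The cleanest way to package this is to invoke the known Borel-summability of the $\tilde f_n$ (the reference to~\cite{Sauzin:2014qzt} and the Newton-polygon slope $1/N$ computed via~\cite[Theorem XIII-5-4]{hsieh_sibuya_1999} guarantee the relevant Borel transforms have only the expected singularities), so that the jumps of the Laplace integrals across the singular directions are governed entirely by these residues.

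For part (2), I would assemble the full monodromy around $\gamma$ (which in the $y$-plane is the loop $y\mapsto e^{\pi\ii}y$ combined with its analogue in $y'$, or equivalently $2N$ crossings of Stokes lines) as
\[
M_k = \mathcal{S}_k\,\mathcal{S}_{k+1}\cdots\mathcal{S}_{k+2N-1}\cdot \Delta\,,
\]
where $\Delta$ is the diagonal formal-monodromy factor coming from the $(\alpha_n y)^{-(N-1)/2}$ prefactors and the fractional power $z^{1/N}$; with the normalization chosen in the footnote after~\eqref{Y-asymptotics} the factor $\Delta$ should reduce to a permutation times a scalar, so that the eigenvalues of $M_k$ are not obscured by it. The key computational step is then to recognize the product of the $2N$ single-entry unipotent matrices as a tridiagonal-type cyclic matrix whose entries are built from $s_1,\dots,s_{N-1}$, and to compute its characteristic polynomial. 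I expect this to telescope into a continued-fraction/Hill-determinant structure exactly parallel to~\eqref{Hill-det}--\eqref{Hill-formula}, yielding $\det(\lambda\cdot\mathrm{id}-M_k)=\lambda^N - s_1\lambda^{N-1}+s_2\lambda^{N-2}-\cdots\pm s_{N-1}\lambda \mp 1$ (the constant term being $\pm1$ because $M_k\in\SLNC$). Comparing with $\prod_{j=1}^N(\lambda-\Sigma_j)=\sum_i(-1)^i e_i(\boldsymbol\Sigma)\lambda^{N-i}$ gives $s_i=(-1)^{i+1}e_i(\boldsymbol\Sigma)$ directly, with $e_N(\boldsymbol\Sigma)=\prod\Sigma_j=1$ automatically consistent with the constant term.

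The main obstacle, I expect, is \emph{not} the combinatorics of the product but the rigorous control of the Borel plane for general $N$: one must verify that for the formal series $\tilde f^{(\infty)}_n$ the Borel transform has singularities \emph{only} at the expected points $\alpha_m-\alpha_n$ (equivalently, that no ``extra'' Stokes jumps appear beyond the adjacent-exponential ones), and that these singularities are simple enough (logarithmic or simple-pole type) that the Laplace-integral jump is a genuine constant rather than an operator with its own asymptotics. This is where the special ``mildest type'' of the irregular singularity---reflected in the single positive Newton-polygon slope $1/N$---is essential, and I would handle it by reducing the rank-$N$ system to the scalar oper equation, applying the Hukuhara--Turrittin normal form together with the slope computation, and citing the resulting Borel-summability statement; the $N=2$ case (footnote~\ref{footnote:N=2}) must be treated separately since there the relevant sector has opening $2\pi$ and only $s_1$ survives, matching $M\in\mathrm{SL}(2)$ having $\operatorname{tr}M=\Sigma_1+\Sigma_2=s_1$. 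A secondary subtlety is bookkeeping the precise cyclic order in which the ``up'' and ``down'' Stokes factors occur around $\gamma$ and fixing signs/normalizations so that the identification $s_i=(-1)^{i+1}e_i$ comes out with exactly these signs rather than some reshuffling; this I would pin down by checking the first nontrivial case $N=3$ by hand.
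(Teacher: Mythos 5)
Your overall strategy---Borel-plane analysis, cyclic symmetry to cut the Stokes constants to $N-1$, then a characteristic polynomial comparison against $\prod_j(\lambda-\Sigma_j)$---is exactly the paper's. But several concrete steps in your outline are wrong, and they are not cosmetic.

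First, the Stokes matrices $\mathcal{S}_k$ do \emph{not} have a single nontrivial off-diagonal entry, and it is not true that ``only the pair of solutions whose exponential types are adjacent can mix.'' Because $\hp_n(\zeta)=\alpha_{-n}\hp_0(\alpha_{-n}\zeta)$ has singularities at $\omega_{n,j}=\alpha_n(\alpha_j-1)$ for \emph{all} $j=1,\dots,N-1$, and because $\arg\omega_{n,j}$ depends only on $2n+j$, a given Stokes direction $\phi_k$ is crossed by singularities coming from roughly $N/2$ of the $\hp_n$ simultaneously, each mixing $\tp_n$ with $\tp_{k-n\ (\mathrm{mod}\ N)}$ (not with $\tp_{n\pm1}$). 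The resulting $\mathcal{S}_k$ has ones on the diagonal and the Stokes constants arranged on the anti-diagonal $m+n=k$, with $\lfloor (N-1)/2\rfloor$ or $\lceil (N-1)/2\rceil$ nontrivial entries; your ``one entry per matrix'' picture only happens to coincide with this for $N\le 3$. Similarly, the reduction from $N(N-1)$ to $N-1$ independent constants comes precisely from the rotation identity for $\hp_n$ above, not from pairing ``up'' and ``down'' factors (your accounting would give $2$, not $N-1$, constants).

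Second, and more seriously, your monodromy formula $M_k=\mathcal{S}_k\mathcal{S}_{k+1}\cdots\mathcal{S}_{k+2N-1}\Delta$ is the wrong holonomy. The uniformizing coordinate is $y\propto z^{1/N}$, so one loop $z\mapsto ze^{2\pi\ii}$ around $\gamma$ rotates $y$ by only $2\pi/N$. Since the Borel sectors $W^{(\infty)}_k$ have opening $\pi/N$, that rotation crosses exactly two Stokes rays, and the correct factorization is $M_k=\mathcal{S}_k\,\mathcal{S}_{k+1}\,P_N$ where $P_N$ is the cyclic-shift permutation absorbing the formal monodromy of $(\alpha_n y)^{-\frac{N-1}{2}}$. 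Your full product of $2N$ Stokes factors corresponds instead to $y\mapsto ye^{2\pi\ii}$, i.e.\ $N$ turns around $\gamma$, whose eigenvalues are $\Sigma_j^N$ rather than $\Sigma_j$; so even if the product were computed correctly, the characteristic-polynomial comparison would yield $s_i$ in terms of $e_i(\boldsymbol\Sigma^N)$, not $e_i(\boldsymbol\Sigma)$. (Also, $y\mapsto e^{\pi\ii}y$ ``combined with the $y'$ loop'' is not a meaningful description of $\gamma$: the $y$ and $y'$ coordinates cover neighborhoods of the two different punctures and do not combine into a single loop.) The expected outcome $\det(\lambda-M_0)=\lambda^N-\sum_i s_i\lambda^{N-i}+(-1)^N$ and the final identification $s_i=(-1)^{i+1}e_i(\boldsymbol\Sigma)$ are right, and the appeal to the Newton-polygon slope $1/N$ and Gevrey-1 Borel summability is the correct technical backbone, but the intermediate structure of your argument---both the form of the $\mathcal{S}_k$ and the factorization of the monodromy---needs to be replaced before the computation can go through.
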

We will focus initially on the bases $\mathbf{Y}^{(\infty)}_k$.
The treatment of $\mathbf{Y}^{(0)}_k$ is largely analogous, and is discussed in Section~\ref{sec:Y^0}.
In the next two sub-sections we establish the preparatory results that will be needed to prove Propositions~\ref{prop-canonical-bases} and~\ref{propn:Stokes-constants}.

\subsection{Formal Solutions of the Oper Equation}
The exponential behavior of the solutions of the oper equation~(\ref{oper-z})  at $z \rightarrow \infty$ is encoded in the Newton polygon.  This can be easily drawn by rescaling the oper equation~(\ref{oper-z}) so that, once written in terms of differentials $z \partial_z$, it only involves polynomials in $z^{-1}$ holomorphic at infinity. Then one simply plots, for every $k=0,\dots,N$, the degree of the smallest non-vanishing monomial in the polynomial multiplying $(z\partial_z)^k$.  Similarly one can find the Newton polygon at $0$; the result turns out to be the same at both $0$ and $\infty$ and is shown in Figure~\ref{Newton-poly} (see e.g. ~\cite{hsieh_sibuya_1999} for more details).
\begin{figure}[ht!]
    \centering
\begin{tikzpicture}[scale=1.5]
\draw[->] (-0.5,0) -- (5,0);
\foreach \x in {1,2,3,4} {
    \draw (\x,0.1) -- (\x,-0.1) node[below] {\x};
}

\draw[->] (0,-0.5) -- (0,2);
\foreach \y in {1} {
    \draw (0.1,\y) -- (-0.1,\y) node[left] {\y};
}

\draw[RoyalBlue, thick] (0,2) -- (0,0) -- (4,1) -- (4,2);

\foreach \x/\y in {0/0, 1/1, 2/1, 4/1} {
    \fill (\x,\y) circle (2pt);
}

\end{tikzpicture}
\caption{Newton polygon for $N=4$. The point corresponding to $3=N-1$ is missing since there is no $(z\partial_z)^{N-1}$ term in the oper equation (\ref{oper-z}). There is only one positive slope $1/N$, which is the Poincar\'e rank of the singularity \cite[Theorem XIII-7-6(a)]{hsieh_sibuya_1999}. The Newton polygon is the same for both punctures at $z=0, \infty$. }
\label{Newton-poly}
\end{figure}
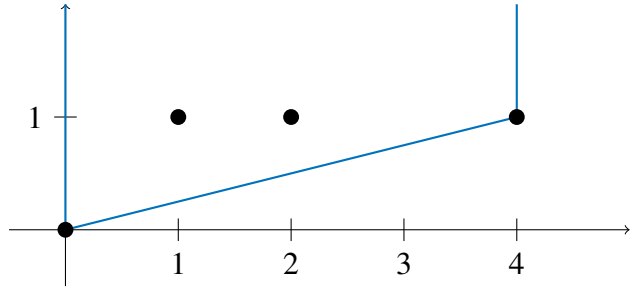
In particular, it can be observed that the only positive slope is $1/N$: this will then prescribe the exponential behavior of the solutions of the differential equations at both $0$ and $\infty$. Referring to~\cite{hsieh_sibuya_1999} for proofs, we conclude that we can write an ansatz for formal solutions to the oper equation as
\begin{equation}
   \tilde{ Y}^{(\infty)}_n(z) \,\propto\, e^{-A_n z^{1/N}}z^{B_n}\tilde{f}_n^{(\infty)}\,,
\end{equation}
where $\tilde{f}_n^{(\infty)}$ is a priori a polynomial in $\log z$ with coefficients being formal power series. 

In this section it will be convenient to work with the variable  $y = Nw^{1/N}= N\tfrac{\Lambda}{\hbar}z^{1/N}$. A simple but important observation is that, in this variable, the monodromy $z\rightarrow ze^{2\pi\ii}$ is represented by $y\to ye^{\frac{2\pi\ii}{N}}$. In this variable, the oper equation~\eqref{oper-z} reads
\begin{equation}\label{oper-y}
            \left[ \partial_y^N + \mathcal{E}_2\partial_y^{N-2}+\dots+\mathcal{E}_N - (-1)^Ny^N -\nu y^{-N} \right]\tilde{Y}^{(\infty)}(y)=0\,,
\end{equation}
where $\nu = (N\Lambda/\hbar)^{2N}$ and $\mathcal{E}_j = E_j (\ii N/\hbar)^j$. It is then easy to argue by simply plugging in the formal solution $\tilde{Y}^{(\infty)}_n(y) = e^{-\alpha y} y^{\beta}\tilde{f}_n^{(\infty)}$  and cancelling the two highest powers of $y$ that one has
\begin{equation}\label{alpha-beta}
    \alpha^N =1\,, \qquad \beta = -\frac{N-1}{2}\,,
\end{equation}
with the convention that $\tilde{f}_n^{(\infty)}$ has a non-vanishing constant term. Thus, the $N$ elements of the formal basis can be associated with the $N$ roots of unity $\alpha_n \coloneqq \exp(2\pi\ii n/N)$. Namely, we define a formal basis of solutions $\mathbf{\tilde{Y}}^{(\infty)}$ by
\begin{equation}\label{formal-basis}
 \tilde{Y}^{(\infty)}_n(y) = \ii^{N+1}e^{-\alpha_n y}(\alpha_ny)^{\beta}\tilde{f}_n^{(\infty)}(y) \,,
\end{equation}
where we indicate $\tilde{f}^{(\infty)}_n(y) = \tilde{f}^{(\infty)}_0(\alpha_ny)$ and $n=0,\dots,N-1$.  We will prove the following
\begin{lem}\label{lemma-gevrey}
For every $n=0,\dots, N-1$, $\tilde{f}_n^{(\infty)}$ is a formal power series in powers of $y^{-1}$ of Gevrey class 1. Namely, if we let $\tilde{f}_n^{(\infty)} = \sum_{k=0}^{\infty}a_ky^{-k}$, there exist constants $A, B$  such that $|a_k|\leq A B^k k! $ for every $k$.
\end{lem}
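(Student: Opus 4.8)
The plan is to convert the statement into a bounded--depth linear recursion for the coefficients of $\tilde f^{(\infty)}_0$ and to close a Gevrey-$1$ estimate for that recursion by strong induction. At the outset one reduces to $n=0$: since $\tilde f^{(\infty)}_n(y)=\tilde f^{(\infty)}_0(\alpha_n y)$ with $|\alpha_n|=1$, writing $\tilde f^{(\infty)}_0=\sum_{k\ge0}a_k y^{-k}$ gives $\tilde f^{(\infty)}_n=\sum_{k\ge0}(\alpha_n^{-k}a_k)\,y^{-k}$ with $|\alpha_n^{-k}a_k|=|a_k|$, so a bound $|a_k|\le AB^k k!$ for $\tilde f^{(\infty)}_0$ transfers verbatim, with the same constants, to every $n=0,\dots,N-1$.

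Next I would insert the formal ansatz $\tilde Y^{(\infty)}_0(y)=\ii^{N+1}e^{-y}y^{\beta}\tilde f^{(\infty)}_0(y)$, $\beta=-\tfrac{N-1}{2}$, into the oper equation~\eqref{oper-y} (equivalently into~\eqref{oper-z} under $z\pa_z=\tfrac1N y\pa_y$), clear the common factor $e^{-y}y^{\beta}$, and collect the coefficient of each power $y^{-m}$. The only operation to be commuted through is the Euler operator $y\pa_y$, with $[y\pa_y,y]=y$, so this is an explicit normal-ordering computation; it produces, for every $k\ge1$, a relation
\begin{equation}\label{eq:gevrey-recursion-plan}
  P(k)\,a_k \;=\; \sum_{j=1}^{M} c_j(k)\,a_{k-j}\,,\qquad a_0=1,\qquad a_i:=0\ \ \text{for}\ \ i<0,
\end{equation}
with $M$ a fixed integer of size $O(N)$ and $P,c_1,\dots,c_M$ polynomials. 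The three structural facts driving the proof, all of which fall out of this computation, are: (i) the depth $M$ is finite, because~\eqref{oper-y} has only finitely many terms, the widest index gap coming from the perturbation $-\nu y^{-N}$; (ii) $\deg P=1$ with $|P(k)|\ge c\,k$ for some $c>0$ and all $k\ge1$ --- this is the analytic incarnation of the Newton polygon of Figure~\ref{Newton-poly} having a single positive slope $1/N$, equivalently of the singularity being of Poincar\'e rank $1$ once re-expressed in $y$, and it is here that the normalization $\beta=-\tfrac{N-1}{2}$ from~\eqref{alpha-beta} enters, arranging that $P$ has no constant term, so $P(k)\ne0$ for all $k\ge1$ and~\eqref{eq:gevrey-recursion-plan} determines the $a_k$ uniquely from $a_0=1$; (iii) $\deg c_j\le j+1$ for every $j$.

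Finally I would prove $|a_k|\le AB^k k!$ by strong induction on $k$. Dividing~\eqref{eq:gevrey-recursion-plan} by $P(k)$ and using (ii)--(iii) gives $|a_k|\le \sum_{j=1}^{M}C\,k^{j}\,|a_{k-j}|$ for a constant $C$ and all $k\ge1$. Granting $|a_i|\le AB^i i!$ for $i<k$, each summand obeys
\begin{equation*}
  C\,k^{j}\,AB^{k-j}(k-j)! \;=\; AB^k k!\cdot\frac{C}{B^{j}}\cdot\frac{k^{j}}{k(k-1)\cdots(k-j+1)} \;\le\; C\,(2/B)^{j}\,AB^k k!
\end{equation*}
once $k\ge 2M$ (so that $k-j+1\ge k/2$). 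Summing the at most $M$ terms, $|a_k|\le AB^k k!\cdot C\sum_{j\ge1}(2/B)^{j}$, which is $\le AB^k k!$ as soon as $B$ is taken large enough; the finitely many remaining cases, with $k$ bounded by a constant depending only on $N$, are absorbed by enlarging $A$.

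The main obstacle is the middle step: extracting the precise shape of~\eqref{eq:gevrey-recursion-plan}, and in particular checking (ii). The induction itself is routine, but $\deg P=1$ is exactly the feature that makes it close at the Gevrey-$1$ rate and no faster --- it is where the geometry of the single slope $1/N$ is actually used --- and establishing it requires carrying the commutators $[y\pa_y,y]=y$ carefully through the $N$-fold conjugation.
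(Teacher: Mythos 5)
Your approach is essentially the same as the paper's (Appendix C.1): reduce to $n=0$, plug the ansatz into~\eqref{oper-y} to obtain the ODE~\eqref{ODE-for-asymptotic-expansion} for $\tilde{f}_0^{(\infty)}$, read off a finite-depth linear recurrence for its coefficients, and close a Gevrey-$1$ bound from polynomial-degree bookkeeping; your points (i)--(iii) reproduce, after reindexing, the paper's recurrence $\sum_{k=1}^N Q_k(m)\,a_{m+N-k}=\nu\,a_{m-N}$ with $\deg Q_k=k$ and $Q_1$ linear. The one place you genuinely go beyond the paper is the final step: the paper only asserts that the ratio $a_{m+k}/a_m$ being $O(m^k)$ ``is compatible with'' $|a_m|\le AB^m(m!)^r$, which is not a proof, whereas your strong induction --- dividing by $P(k)$, using $k^{j}/\bigl(k(k-1)\cdots(k-j+1)\bigr)\le 2^{j}$ once $k\ge 2M$, summing the geometric series in $2/B$, and absorbing the finitely many small $k$ into $A$ --- actually closes the estimate with $r=1$. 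Your own caveat about (ii) is the right one to flag: the claim that $\beta=-\tfrac{N-1}{2}$ eliminates the constant term of $P$ is correct (for $N=2$ a direct computation gives $P_1\equiv0$ in~\eqref{ODE-for-asymptotic-expansion}), but it must actually be verified by carrying the commutators through the $N$-fold conjugation rather than being asserted, and this is precisely where the single slope $1/N$ of the Newton polygon is converted into a depth-one recursion with $|P(k)|\ge ck$. Finally, the Lemma also asserts that $\tilde{f}_0^{(\infty)}$ is a genuine power series in $y^{-1}$ with no logarithmic terms; you obtain this as a corollary of $P(k)\ne0$ for $k\ge1$, which is legitimate once (ii) is established, but the paper proves it by an independent route (the Newton polygon of~\eqref{ODE-for-asymptotic-expansion} has slopes $0$ and $1$ only, and an expansion in powers of $\nu$ bootstraps the log-free property from the known Meijer $G$ asymptotics at $\nu=0$), which is worth knowing in case verifying (ii) for general $N$ becomes cumbersome.
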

The proof is in Appendix~\ref{canonical-basis-appendix}.

\subsection{The Structure of the Borel Plane}\label{Borel-plane-section}
For convenience, let us redefine $y^{-1}\tilde{f}_0 ^{(\infty)} \eqqcolon \tp_0$,\footnote{In order to reduce clutter we omit the superscript $(\infty)$ for quantities defined in this section.} and similarly $(y\alpha_n)^{-1}\tilde{f}^{(\infty)}_n \eqqcolon \tp_n$. The Borel transform $\mathcal{B}[\tp_n]$ of $\tp_n$ is 
\begin{equation}\label{def-Borel-transform}
    \mathcal{B}[\tp_n](\zeta) \coloneqq \alpha_{-n}\sum_{k=0}^{\infty}\frac{a_k}{k!}(\alpha_{-n}\zeta)^k\,,\qquad \tp_n(y) = \sum_{k=0}^{\infty}a_k(\alpha_{n}y)^{-k-1} \,.
\end{equation}
A straightforward corollary of Lemma~\ref{lemma-gevrey} is
\begin{cor}
    The Borel transform $\mathcal{B}[\tp_n]$ defines an analytic function $\hp_n(\zeta)$ in a neighbourhood of the origin $\zeta=0$.
\end{cor}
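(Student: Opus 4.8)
The plan is to extract analyticity near the origin directly from the Gevrey-$1$ bound provided by Lemma~\ref{lemma-gevrey}. Write $\tilde f_n^{(\infty)}=\sum_{k\ge 0}a_k y^{-k}$, so that $\tp_n(y)=\sum_{k\ge 0}a_k(\alpha_n y)^{-k-1}$ as recorded in~\eqref{def-Borel-transform}. Lemma~\ref{lemma-gevrey} supplies constants $A,B>0$ with $|a_k|\le A\,B^k\,k!$ for all $k$, and this estimate is the only input needed.

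First I would insert this bound into the series defining the Borel transform, $\mathcal{B}[\tp_n](\zeta)=\alpha_{-n}\sum_{k\ge 0}\frac{a_k}{k!}(\alpha_{-n}\zeta)^k$. Since $|\alpha_{-n}|=1$, the modulus of the $k$-th coefficient of the auxiliary power series $\sum_{k\ge 0}\frac{a_k}{k!}\zeta^k$ is bounded by $|a_k|/k!\le A\,B^k$, so by the root test its radius of convergence is at least $1/B>0$. Hence the series converges absolutely and locally uniformly on the disc $\{\,|\zeta|<1/B\,\}$ and defines a holomorphic function there; composing with the rotation $\zeta\mapsto\alpha_{-n}\zeta$ and multiplying by the constant $\alpha_{-n}$ — both biholomorphisms of $\BC$ — does not change this, so $\mathcal{B}[\tp_n]$ is the restriction to a neighbourhood of the origin of an analytic function $\hp_n(\zeta)$, as claimed.

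There is essentially no obstacle at this step: all the genuine work sits inside Lemma~\ref{lemma-gevrey} (established in Appendix~\ref{canonical-basis-appendix}), where the recursively determined coefficients $a_k$ must be controlled. The only point worth flagging is the bookkeeping shift $\tp_n=(\alpha_n y)^{-1}\tilde f_n^{(\infty)}$: because of the extra factor $y^{-1}$, the Borel transform of $\tp_n$ genuinely starts at order $\zeta^0$ rather than carrying a $\delta$-type contribution at $\zeta=0$, which is precisely why the reduction from $\tilde f_n^{(\infty)}$ to $\tp_n$ was made and why the corollary is phrased for $\tp_n$.
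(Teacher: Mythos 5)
Your argument is correct and is exactly the ``straightforward'' deduction the paper has in mind: the Gevrey-$1$ bound $|a_k|\le A B^k k!$ makes the coefficients $a_k/k!$ of the Borel series geometrically bounded, giving a positive radius of convergence via the root test, and the unimodular rotation $\zeta\mapsto\alpha_{-n}\zeta$ changes nothing. The paper leaves this implicit after Lemma~\ref{lemma-gevrey}; your write-up simply spells it out.
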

Next, we find the following

\begin{lem}\label{lemma-borel-plane}
The Borel transform $\hp_0(\zeta) = \mathcal{B}[\tp_0]$ has singularities at $\zeta = \omega_{0,j}$, where $(\omega_{0,j} +1)^N=1$ and $\omega_{0,j} \neq 0$. Moreover, let $\hp_n(\zeta) = \mathcal{B}[\tp_n]$. Then one has $\hp_n(\zeta) = \alpha_{-n}\hp_0(\alpha_{-n}\zeta)$.
\end{lem}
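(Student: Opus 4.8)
\textbf{Proof plan for Lemma~\ref{lemma-borel-plane}.}

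The plan is to extract an ordinary (not fractional-power) differential equation for $\tp_0$ from the oper equation~\eqref{oper-y}, show that its leading-order behaviour near $\zeta=0$ in the Borel plane is governed by a convolution equation whose symbol vanishes precisely at the points where $(\omega+1)^N=1$, and then invoke the standard theory (see~\cite{Sauzin:2014qzt}) that singularities of the Borel transform of a Gevrey-$1$ formal solution can only occur at the zeros of that symbol. The second assertion, $\hp_n(\zeta)=\alpha_{-n}\hp_0(\alpha_{-n}\zeta)$, follows immediately from the definitions: since $\tilde f_n^{(\infty)}(y)=\tilde f_0^{(\infty)}(\alpha_n y)$ by~\eqref{formal-basis}, we have $\tp_n(y)=\tp_0(\alpha_n y)$ coefficient-by-coefficient, and term-by-term application of~\eqref{def-Borel-transform} gives the rescaling relation. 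So the real content is locating the singularities.

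First I would substitute the formal ansatz $\tilde Y_0^{(\infty)}(y)=e^{-y}y^{\beta}\tilde f_0^{(\infty)}(y)$ (i.e.\ the $\alpha_0=1$ branch) into~\eqref{oper-y} and divide out $e^{-y}y^{\beta}$, obtaining a linear ODE with polynomial coefficients (in $y$ and $y^{-1}$) for $g(y)\coloneqq\tilde f_0^{(\infty)}(y)$; multiplying through by an appropriate power of $y$ makes all coefficients polynomial in $y$. The conjugation by $e^{-y}$ sends $\partial_y\mapsto\partial_y-1$, so the principal symbol of the resulting operator in the variable $\partial_y$ is $(\partial_y-1)^N-(-1)^N\cdot(\text{leading }y^N\text{ term})$ — more precisely, after clearing denominators, the top-degree-in-$y$ part of the operator annihilating $g$ is, up to sign and a factor $y^N$, the expression $(\partial_y-1)^N-1$ acting appropriately, since the $y^N$ and $y^{-N}$ potentials contribute; the $\nu y^{-N}$ term is lower order in $y$ and does not affect the principal symbol at $\zeta=0$. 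Under the Borel transform, $\partial_y\mapsto$ (multiplication by $-\zeta$, up to the convolution structure) and $y^{-1}\mapsto$ integration, so $\partial_y-1\mapsto -(\zeta+1)$ at the level of the symbol; hence the symbol controlling the location of singularities of $\hp_0$ is, up to normalisation, $(\zeta+1)^N-1$, whose nonzero roots are exactly the $\omega_{0,j}$ with $(\omega_{0,j}+1)^N=1$. I would make this precise by writing the Borel-transformed equation as a linear convolution equation of the form $P(\zeta)\hp_0(\zeta)=(\text{lower-order convolution terms involving }\hp_0)$ with $P(\zeta)=c\,[(\zeta+1)^N-1]$ for a nonzero constant $c$, and then argue that $\hp_0$, being analytic near $0$ by the Corollary, extends analytically along any path avoiding the zeros of $P$: the standard majorant/Grönwall argument for convolution equations (as in~\cite{Sauzin:2014qzt}) shows that the only obstructions to analytic continuation are the zeros of the symbol $P$.

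The main obstacle I expect is the bookkeeping needed to verify that the Borel-transformed equation genuinely has principal symbol $(\zeta+1)^N-1$ and that the $\nu y^{-N}$ potential, the subleading $\mathcal E_j$ terms, and the power $y^{\beta}$ prefactor all contribute only to the convolution (lower-order) part — i.e.\ that none of them shifts or adds singular points at this order. Concretely one must track how multiplying by $y^{\beta}$ with $\beta=-\tfrac{N-1}{2}$ interacts with the Borel transform (it produces a Riemann–Liouville-type fractional integration operator, smoothing rather than creating singularities) and confirm that clearing the $y^{-N}$ denominator does not promote lower-order terms into the principal part. Once the convolution equation is in the stated normal form, the continuation-along-paths argument and the identification of the singular locus with $\{\zeta:(\zeta+1)^N=1,\ \zeta\neq0\}$ is routine; I would relegate the detailed symbol computation to the same appendix (Appendix~\ref{canonical-basis-appendix}) where Lemma~\ref{lemma-gevrey} is proved, since the Gevrey-$1$ estimate there already supplies the analyticity of $\hp_0$ near the origin that this argument takes as its starting point.
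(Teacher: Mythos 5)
Your route is structurally the same as the paper's: substitute the formal ansatz into~\eqref{oper-y}, clear the $e^{-y}y^\beta$ prefactor to obtain the linear ODE~\eqref{ODE-for-asymptotic-expansion} for $\tilde f_0^{(\infty)}$ with integer-power coefficients in $y^{-1}$ (so your worry about the half-integer $\beta$ producing a fractional-integration operator is unfounded — the $y^{\beta}$ factor cancels cleanly when deriving the scalar equation for $\tilde f_0^{(\infty)}$), then Borel-transform to obtain a convolution equation with leading symbol $(-1)^N[(\zeta+1)^N-1]$, which is exactly eq.~\eqref{Borel-ODE1}. Your treatment of the rescaling relation $\hp_n(\zeta)=\alpha_{-n}\hp_0(\alpha_{-n}\zeta)$ is correct and matches the paper's.

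The gap is in the final step. You prove that singularities of $\hp_0$ \emph{can only occur} at the points $\omega_{0,j}$ where the symbol vanishes, via Gr\"onwall-type estimates giving analytic continuation along paths that avoid those points. But the lemma asserts that $\hp_0$ \emph{has} singularities at each $\omega_{0,j}$ — the existence direction, not the containment direction. The paper's proof establishes existence by contradiction: if $\hp_0$ were analytic on a neighbourhood $\Omega_j$ of $[0,\omega_{0,j}]$, then by~\cite[Lemma 5.3]{Sauzin:2014qzt} the convolution terms on the right of~\eqref{Borel-ODE1} would be analytic on $\Omega_j$ as well, and dividing by the vanishing coefficient $(-1)^N[(\zeta+1)^N-1]$ would then force $\hp_0$ to have a pole at $\omega_{0,j}$, contradicting the assumption. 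Your containment argument is actually the one implicitly invoked in the proof of Proposition~\ref{prop-canonical-bases} to justify continuing $\hp_n$ along non-Stokes rays, so it is valuable; but to prove the lemma as stated you must supplement it with an argument forcing an actual singularity at each $\omega_{0,j}$, for instance the paper's contradiction.
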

The proof is in Appendix~\ref{canonical-basis-appendix}.

The singularities of $\hat\varphi_0(\zeta)$ are conveniently parameterized as $\omega_{0,j} = \alpha_j-1$, $j=1,\dots,N-1$.
Furthermore, as a consequence of Lemma~\ref{lemma-borel-plane}, it is easy to see that the poles of $\hp_n(\zeta)$ lie at $\zeta = \omega_{n,j} \coloneqq \alpha_n\omega_{0,j}$.

The angle of the ray intersecting $\omega_{n,j}$ is given by
\begin{equation}
    \arg\omega_{n,j} = \frac{\pi}{2} + \frac{\pi}{N}(2n+j) \eqqcolon \phi_{2n+j}\,,
\end{equation}
so that we can label the Stokes lines by $\phi_k$, $k=0,\dots,2N-1$.
The Borel plane is therefore divided into $2N$ equal sectors,\footnote{This is only true for $N \geq 3$, as anticipated in~\ref{footnote:N=2}. For $N=2$, there are only $2$ sectors in the Borel plane, as the only singularities are  at $\arg \omega_{0,1}=\pi$, $\arg \omega_{1,1}=0$.} as depicted in figure~\ref{fig:Borel plane2}.

\begin{figure}[ht!]
\centering
\begin{subfigure}[b]{0.48\textwidth}
\begin{tikzpicture}[scale=0.7]
    \def\n{5} 
    \def\r{2} 
    \def\shiftangle{1} 
    
    \draw[->] (-4,0) -- (4,0) ;
    \draw[->] (0,-4) -- (0,4)  ;

    \coordinate (S0) at ({0*360/\n}:\r);
    \coordinate (S1) at ({1*360/\n}:\r);
    \coordinate (S2) at ({2*360/\n}:\r);
    \coordinate (S3) at ({3*360/\n}:\r);
    \coordinate (S4) at ({4*360/\n}:\r);

    \coordinate (A0) at ({0*360/\n + \shiftangle}:\r);
    \coordinate (A1) at ({1*360/\n + \shiftangle}:\r);
    \coordinate (A2) at ({2*360/\n + \shiftangle}:\r);
    \coordinate (A3) at ({3*360/\n + \shiftangle}:\r);
    \coordinate (A4) at ({4*360/\n + \shiftangle}:\r);

    \foreach \i in {0,...,4} {
        \filldraw[black] (S\i) circle (1pt);
    }

    \draw[->, black] (S0) -- (A1);
     \draw[->, black] (S0) -- (A2);
     \draw[->, black] (S0) -- (A3);
    \draw[->, black] (S0) -- (A4);

    \draw[->, red] (S1) -- (A0);
    \draw[->, red] (S1) -- (A2);
    \draw[->, red] (S1) -- (A3);
    \draw[->, red] (S1) -- (A4);

    \draw[->, blue] (S2) -- (A0);
    \draw[->, blue] (S2) -- (A1);
    \draw[->, blue] (S2) -- (A3);
    \draw[->, blue] (S2) -- (A4);

    \draw[->, brown] (S3) -- (A0);
   \draw[->, brown] (S3) -- (A1);
   \draw[->, brown] (S3) -- (A2);
  \draw[->, brown] (S3) -- (A4);

    \draw[->, green] (S4) -- (A0);
    \draw[->, green] (S4) -- (A1);
    \draw[->, green] (S4) -- (A2);
    \draw[->, green] (S4) -- (A3);

    \node[above right, text=black] at (S0) { $1 = \alpha_0$};
    \node[above , text=red] at (S1) {$\alpha_{1}$};
    \node[left, text=blue] at (S2) {$\alpha_{2}$};
    \node[below, text=brown] at (S3) {$\alpha_{3}$};
    \node[right, text=green] at (S4) {$\alpha_{4}$};

\end{tikzpicture}
    \subcaption{Pictorial solution to the solutions to $\alpha_n + \omega_{n,j} = \alpha_{n+j}$ for $N=5$. These are all the possible oriented diagonals of the $N$-agon. The diagonals issuing from a vertex have the same color of the vertex label.}
    \label{fig:Pentagon}
\end{subfigure}%
\hfill
\begin{subfigure}[b]{0.48\textwidth}
         \centering
       \begin{tikzpicture}[scale=0.7]
    \def\n{5}  
    \def\theta{72} 

    \foreach \k in {0,1,2,3,4} {
        \pgfmathsetmacro{\angle}{360*\k/\n} 
        \pgfmathsetmacro{\x}{2*(cos(\angle) - 1)} 
        \pgfmathsetmacro{\y}{2*sin(\angle)}     

        \fill (\x,\y) circle (1pt); 

        \ifnum\k>0
            \draw[->, thick] (0,0) -- (\x,\y);
        \fi
    }

    \foreach \k in {1,2,3,4} {
        \pgfmathsetmacro{\angle}{360*\k/\n} 
        \pgfmathsetmacro{\x}{2*(cos(\angle) - 1)}
        \pgfmathsetmacro{\y}{2*sin(\angle)}

        \pgfmathsetmacro{\xr}{\x*cos(\theta+1) - \y*sin(\theta+1)}
        \pgfmathsetmacro{\yr}{\x*sin(\theta+1) + \y*cos(\theta+1)}
         \fill (\xr,\yr) circle (1pt); 
        \draw[->, thick , red] (0,0) -- (\xr,\yr); 
    } 
     \foreach \k in {1,2,3,4} {
        \pgfmathsetmacro{\angle}{360*\k/\n} 
        \pgfmathsetmacro{\x}{2*(cos(\angle) - 1)}
        \pgfmathsetmacro{\y}{2*sin(\angle)}

        \pgfmathsetmacro{\xr}{\x*cos(2*\theta) - \y*sin(2*\theta)}
        \pgfmathsetmacro{\yr}{\x*sin(2*\theta) + \y*cos(2*\theta)}
         \fill (\xr,\yr) circle (1pt); 
        \draw[->, thick , blue] (0,0) -- (\xr,\yr); 
    }
     \foreach \k in {1,2,3,4} {
        \pgfmathsetmacro{\angle}{360*\k/\n} 
        \pgfmathsetmacro{\x}{2*(cos(\angle) - 1)}
        \pgfmathsetmacro{\y}{2*sin(\angle)}

        \pgfmathsetmacro{\xr}{\x*cos(4*\theta+1) - \y*sin(4*\theta+1)}
        \pgfmathsetmacro{\yr}{\x*sin(4*\theta+1) + \y*cos(4*\theta+1)}
         \fill (\xr,\yr) circle (1pt); 
        \draw[->, thick , green] (0,0) -- (\xr,\yr); 
    }
     \foreach \k in {1,2,3,4} {
        \pgfmathsetmacro{\angle}{360*\k/\n} 
        \pgfmathsetmacro{\x}{2*(cos(\angle) - 1)}
        \pgfmathsetmacro{\y}{2*sin(\angle)}

        \pgfmathsetmacro{\xr}{\x*cos(3*\theta+1) - \y*sin(3*\theta+1)}
        \pgfmathsetmacro{\yr}{\x*sin(3*\theta+1) + \y*cos(3*\theta+1)}
         \fill (\xr,\yr) circle (1pt); 
        \draw[->, thick , brown] (0,0) -- (\xr,\yr); 
    }

    \draw[->] (-4,0) -- (4,0) node[right] {$\Re \zeta$};
    \draw[->] (0,-4) -- (0,4) node[above] {$\Im \zeta$};

    \pgfmathsetmacro{\X}{2*cos(80)}
\pgfmathsetmacro{\Y}{2*sin(80}
\draw[thick] (\X,\Y) arc (80:150:2);
\pgfmathsetmacro{\x}{2*cos(150)}
\pgfmathsetmacro{\y}{2*sin(150)}

    \draw[thick] (0,0) -- (\x,\y);
     \draw[thick] (0,0) -- (\X,\Y);

     \node[above] at (0.9,2) {$W^{(\infty)}_0$};
        \node[above] at (-3,1.5) {$W^{(\infty)}_2$};
          \node[above] at (-0.85,2) {$W^{(\infty)}_1$};
     
\end{tikzpicture}

         \caption{
        Full structure of the Borel plane for $N=5$. It is obtained by drawing all the oriented diagonals of the pentagon and bringing them to the origin. The arrows are slightly slanted merely to show the overlap of the colours. The black arc denotes an angle of $2\pi/5$, implementing the analytic continuation of the canonical basis $\mathbf{Y}^{(\infty)}_0$. Note that it crosses $2$ Stokes lines which contain in total $4= N-1$ singularities which cannot be obtained by rotations of  $2\pi/5$.
         }
         \label{fig:Borel plane2}
         \end{subfigure}
         \caption{}
     \end{figure}

\subsection{\texorpdfstring{Canonical Bases $\mathbf{Y}_k^{(\infty)}$: Proof of Proposition~\ref{prop-canonical-bases}}{Canonical Bases Yk∞: Proof of Proposition 2}}

Having understood the location of the singularities in the Borel plane, we are now able to provide a definition of a canonical basis. In general, given a formal power series $\tp(y)$ and its Borel transform $\mathcal{B}[\tp](\zeta)\equiv\hp(\zeta)$ defined in eq.~\eqref{def-Borel-transform}, one defines the Borel resummation of $\tp$ as the Laplace transform of $\hp(\zeta)$ along the direction $\theta$ 
\begin{equation}
    \mathscr{L}^{\theta}[\tp](y) \coloneqq \int_0^{e^{\ii\theta}\infty }d\zeta\, e^{-y\zeta}\hp(\zeta)\,,
\end{equation}
where by a little abuse of notation we write again $\hp(\zeta)$ for the analytic continuation along the line $\mathbb{R}^+e^{\ii\theta}$ of the Borel transform $\hp(\zeta)$. Referring to~\cite{Sauzin:2014qzt} for further details, we now prove Proposition~\ref{prop-canonical-bases}.
\begin{proof}
 In the previous section, the structure of the Borel plane for $\tp_n$ was analyzed: the rays issuing from the origin $\zeta=0$ and intersecting $\omega_{n,j}$ (where $n=0,\dots,N-1$, $j=1,\dots,N-1$) will slice the Borel plane in
 $2N$ sectors\footnote{For $N=2$ there are only $2$: $W^{(\infty)}_0 = \{ \zeta \in \mathbb{C}^\times \,|\, \Im \zeta >0\}$, $W^{(\infty)}_1 = \{ \zeta \in \mathbb{C}^\times \,|\, \Im \zeta <0\}$.} $W^{(\infty)}_k$ given by
 \begin{equation}\label{W-sector-Borel-plane}
     W^{(\infty)}_k = \left\{ \zeta \in \mathbb{C}^{\times} \,\middle|\, \phi_{k-1} <\arg \zeta <  \phi_k \right\},\qquad\phi_k = \frac{\pi}{2}+\frac{\pi}{N}k\,,
 \end{equation}
where $k=0,\dots,2N-1$ (see figure~\ref{fig:Borel plane2}). We then define a canonical basis  by resumming the formal series by Borel-Laplace resummation. The \textit{whole set} of Borel transforms $\hp_n(\zeta)$, for all $n=0,\dots,N-1$, can be analytically continued along any  straight line $\mathbb{R}^+e^{\ii\theta}$ issuing from the origin $\zeta=0$ unless $\theta= \phi_k$ for some $k=0,\dots,2N-1$.\footnote{But $k=0,1$ if $N=2$.} Thus, we are led to define the family of canonical bases $\mathbf{Y}^{(\infty)}_{k}(y)= (Y^{(\infty)}_{k,0},\dots,Y^{(\infty)}_{k,N-1})(y)$ by
\begin{equation} \label{canonical-basis-def}
    Y^{(\infty)}_{k,n}(y) \coloneqq \mathscr{L}^{\theta_k}[\tilde{Y}^{(\infty)}_n](y)= \ii^{N+1}e^{-\alpha_ny}(\alpha_ny)^{\beta+1}\mathscr{L}^{\theta_k}[\tp_n](y)\,,
\end{equation}

where $e^{\ii\theta_k}\in W_k^{(\infty)}$. Each resummation $\mathscr{L}^{\theta_k}[\tp_n](y)$ will converge to a holomorphic function on a half-plane 
\begin{equation} \label{half-plane}
    \mathbb{H}(\theta_k)\coloneqq\left\{y \in \mathbb{C} \,\middle|\,\textup{Re}\big(ye^{\ii\theta_k}\big)>R\right\},
\end{equation}
under the assumption that, for $\zeta$ in a strip surrounding a line $\arg \zeta = \theta_k \neq \phi_k$, in the Borel plane, one can find a bound $|\hp_n(\zeta)| < Ce^{R|\zeta|}$ for some constants $R,C \in \mathbb{R}_+$.\footnote{It should be possible to argue that from the integral equation~(\ref{Borel-ODE1}) but we do not have an explicit proof of this.}
However, the integration contour of the Laplace transform appearing in (\ref{canonical-basis-def}) can be deformed continuously to any straight line $\mathbb{R}^+e^{\ii\theta}$ for $\theta \in W_k^{(\infty)}$. This defines an analytic continuation  $\varphi_{k,n}(y)$ of  $\mathscr{L}^{\theta_k}[\tp_n](y)$ which admits $\tp_n$ as asymptotic power series, uniformly on closed subsectors of
\begin{equation}
{D}^{(\infty)}_k = \bigcup_{e^{\ii\theta} \in W_k^{(\infty)}}\mathbb{H}(\theta)\,,
\end{equation}
which are the domains given in eq.~(\ref{Sokal-discs}). It follows that, after reinserting the  prefactors $\ii^{N+1}e^{-\alpha_ny}(\alpha_ny)^{\beta+1}$, the asymptotic expansions claimed in~(\ref{Y-asymptotics}) hold. 
\end{proof}

\subsection{\texorpdfstring{Stokes Matrices and Monodromy Matrices: Proof of Proposition~\ref{propn:Stokes-constants}}{Stokes Matrices and Monodromy Matrices: Proof of Proposition 3}}\label{section:Stokes-matrices}
Borel sums $\varphi_{k+1,n}(y)$ and $\varphi_{k,n}(y)$ obtained by Borel resummation of the same formal power series $\tp_n$ in adjacent sectors $W_{k+1}^{(\infty)}, W_k^{(\infty)}$ can be related by the Stokes automorphism $\mathfrak{S}_k$. In general, the Stokes automorphism $\mathfrak{S}_\phi$ can be defined to act on a formal series $\tp$ so that the following holds for $\epsilon >0$ sufficiently small:
\begin{equation}
    \mathscr{L}^{\phi+\epsilon}[\tp](y)=    \mathscr{L}^{\phi-\epsilon}[\mathfrak{S}_{\phi}\tp](y)\,.
\end{equation}
The action of $\mathfrak{S}_\phi$ is non-trivial only when the Borel transform $\hp(\zeta)$ of $\tp$ is singular at some $\zeta=\omega$, with $\arg \omega = \phi$. In this case, the action of $\mathfrak{S}_\phi$ on $\tp$ will read
\begin{equation}
    \mathfrak{S}_{\phi}\tp= \tp + \sum_{i} e^{-\omega_i y}\tp_i\,,
\end{equation}
where the sum runs over all singularities $\omega_i$  of $\hp(\zeta)$ such that $\arg \omega_i = \phi$, and $\tp_i$ are formal power series (we refer to e.g. \cite{Sauzin:2014qzt,Dorigoni_2019} for further details).

Thus, in light of the analysis carried out in the previous sections, the matrices $\mathcal{S}_k$ defined by~(\ref{Stokes-matrices-def-1}), namely
\begin{equation} \label{Stokes-matrix-def}
    \mathbf{Y}^{(\infty)}_{k+1}(y) = \mathbf{Y}^{(\infty)}_k(y)\mathcal{S}_k\,, \qquad y \in D^{(\infty)}_{k+1}\cap  D^{(\infty)}_{k}
\end{equation}
can be viewed as the matrix representation of the Stokes automorphism $\mathfrak{S}_k$ acting on the formal solutions $\mathbf{\tilde{Y}}^{(\infty)}$~\eqref{formal-basis}, or equivalently on the formal series $\tp_n = (\alpha_ny)^{-1}f_n^{(\infty)}$. 

In the following subsection we will show how this observation enables us to determine the shape of the Stokes matrices $\mathcal{S}_k$.

\subsubsection{The Action of the Stokes Automorphism}

The key observation to make is that the $\mathbb{Z}_{N}$ invariance $y \rightarrow \alpha_ny$ of the oper equation~(\ref{oper-y}) constrains the action of the Stokes automorphism. 
Consider a formal solution $\tilde\varphi_n$, and the Stokes line at angle $\phi_{2n+j} = \arg\omega_{n,j}$, which separates the sectors $W^{(\infty)}_{2n+j}$ and $W^{(\infty)}_{2n+j+1}$ in the Borel plane. Denoting for simplicity $\mathfrak{S}_k \equiv \mathfrak{S}_{\phi_k}$, the resummations in these sectors will be related by the Stokes automorphism: with $e^{\ii\theta_k} \in W^{(\infty)}_k$, we have
\begin{equation}\label{Stokes-automorphism}
    \mathscr{L}^{\theta_{2n+j+1}}[\tp_n] = \mathscr{L}^{\theta_{2n+j}}[\mathfrak{S}_{2n+j}\tp_n]\,, \qquad \mathfrak{S}_{2n+j}\tp_n = \tp_n + e^{-\omega_{n,j}y}\tp_{n,j}\,,
\end{equation}
where $\tp_{n,j}$ is an as-yet undetermined formal power series.
By reinserting the prefactors $e^{-\alpha_ny}(\alpha_ny)^{\beta+1}$, we observe that the exponential behavior of the second term is $e^{-(\alpha_n+\omega_{n,j})y} = e^{-\alpha_{n+j}y}$. Therefore, since both terms should individually solve the oper equation (formally), we conclude that the unspecified $\tp_{n,j}$ is in fact proportional to $\tp_{n+j}$. Specifically, we may write
\begin{equation}\label{Stokes-aut-2}
     \mathfrak{S}_{2n+j}\tp_n = \tp_n + s_{n+j,n}\alpha_j^{\beta+1}e^{-\omega_{n,j}y}\tp_{n+j}\,,
\end{equation}
where $s_{m,n}$ are Stokes constants and $n+j$ is to be understood modulo $N$.
We may phrase this as saying that the action of the Stokes automorphism $\mathfrak{S}_k$ on $\tp_n$ only receives a nontrivial contribution from $\tp_{k-n}$.

We are now able to prove point (a) of Proposition~\ref{propn:Stokes-constants}.
\begin{proof} \hfill
(i)   First, let us argue that only $N-1$ Stokes constants are independent. We recall that the  action of the Stokes automorphism $\mathfrak{S}_{k}$ on $\tp_n$ only depends on the local properties of $\hp_n(\zeta)$ in neighbourhoods of the line $\arg \zeta = \phi_{k}$ (we refer to e.g.~\cite{Sauzin:2014qzt} for details).
    It follows then from Lemma~\ref{lemma-borel-plane} that, since one has as $\hp_n(\zeta)=\alpha_{-n}\hp_0(\zeta\alpha_{-n})$, the Stokes constants $s_{m,n}$ cannot be independent, namely: the $N$ singularities $\omega_{n,j}$, $n=0,1,\dots,N-1$ must be associated with the same Stokes constant. 
Recalling~\eqref{Stokes-aut-2}, we have specifically
\begin{equation}
    m-n = m'-n' \quad\implies\quad s_{m,n} = s_{m',n'}\,.
\end{equation}
Evidently a similar relation also holds when the left hand side is only satisfied mod $N$, however one must account for the monodromy of $y^{\beta+1}$, which is nontrivial when $N$ is even and implies $\tilde Y^{(\infty)}_{n-N}=(-1)^{N-1}\tilde Y^{(\infty)}_n$.
Therefore the minimal number of independent Stokes constants can be labeled by $s_i$ with $i=1,\dots, N-1$, if we define $s_{n-m}\coloneqq s_{m,n}$ and, by consistency with the previous observation, we let  $s_{i-N}=(-1)^{N-1}s_i$.

(ii) Second, let us demonstrate how the Stokes constants $s_i$, $i=1,\dots,N-1$ determine any Stokes matrix $\mathcal{S}_k$ entirely. Recalling the definition~(\ref{canonical-basis-def}) of the canonical bases $Y_{k,n}^{(\infty)}$,  we may rewrite the action of the Stokes automorphism~(\ref{Stokes-aut-2}) in terms of Stokes matrices $\mathcal{S}_k$
\begin{equation} \label{eq:nontrivial-jumps}
    Y^{(\infty)}_{k+1,n}(y) = Y^{(\infty)}_{k,n}(y) + Y^{(\infty)}_{k,k-n}(y)[{\cal S}_k]_{k-n,n}\,,
\end{equation}
where $k-n$ is to be understood modulo $N$.
It follows from the above equation that no two Stokes matrices $\mathcal{S}_k, \mathcal{S}_{k'}$ share off-diagonal entries, and we in fact have $[{\cal S}_k]_{m,n} = s_{m,n}$. 
From eq.~(\ref{eq:nontrivial-jumps}) we infer that the Stokes matrices $\mathcal{S}_k$ all have entries $1$ on the main diagonal.

Let us now describe the exact location of the non-vanishing, off-diagonal entries of the Stokes matrix $\mathcal{S}_k$. It is a consequence of Lemma~\ref{lemma-borel-plane} that, while there are in total $2N$  Stokes lines in the Borel plane,\footnote{But only $2$ Stokes lines for $N=2$.} each Borel transform $\hp_n(\zeta)$ only has singularities on the $N-1$ lines which intersect its singularities $\omega_{n,j}$.
Therefore, given $\tp_n$ and $\phi_k$, a Stokes jump for $\tp_n$ can only occur if there exists a $j\in\{1,\dots,N-1\}$ such that $k = 2n+j \mod 2N$.
This along with equation~\eqref{eq:nontrivial-jumps} constrains the Stokes matrices to be of the following form
\begin{equation} \label{eq:Stokes-matrix-structure}
    [{\cal S}_k]_{m,n} =
    \begin{cases}
        1 & m=n \\
        s_{n-m} & m = k-n \,(\operatorname{mod} N) \,\land\, k-2n\; (\operatorname{mod} 2N)\in\{1,\dots,N-1\} \, \\
        0 & \text{otherwise}
    \end{cases},
\end{equation}
where $k=0,\dots,2N-1$ and $m,n=0,\dots,N-1$.
This results in matrices with ones on the diagonal and Stokes constants on the anti-diagonal given by $m+n=k$, starting from the first available element just above the diagonal, and continuing towards the top right, wrapping around to the opposite side if necessary. 
\end{proof}

\begin{rem}
    As was noted previously, the case $N=2$ is somewhat exceptional, as there are only two singularities in the Borel plane, as opposed to $2N=4$, and therefore only two sectors and associated Stokes matrices.
    However eq.~\eqref{eq:Stokes-matrix-structure} implies that ${\cal S}_0={\cal S}_2=\mathds{1}$, such that this merging of sectors is also accounted for in the general treatment.
    In particular this means that the subsequent results, while obtained using the general formula, also apply to this case, keeping in mind that only the nontrivial Stokes matrices ${\cal S}_1,{\cal S}_3$ describe a change in sector.
\end{rem}
\begin{rem}
With the conventions above, the entries in the lower-triangle of $\mathcal{S}_k$ are multiplied by $(-1)^{N-1}$ when expressed in terms of $s_i$, $i=1,\dots,N-1$. Furthermore, the Stokes constant $s_i$ is associated with the Borel plane singularities at $\omega_{n, N-i}$, for all $n=0,\dots,N-1$. 
Moreover, it is easy to check by counting the solutions to the second inequality in~(\ref{eq:Stokes-matrix-structure}) that 
the number of nonzero Stokes constants in ${\cal S}_k$ is given by $\tfrac{N-1}{2}$ for $N$ odd, $\tfrac{N}{2}-1$ for $N$ and $k$ even, and $\tfrac{N}{2}$ for $N$ even and $k$ odd. The Stokes matrices are therefore sparse (the number of non-zero entries grows only linearly in $N$). Finally, it also follows from eq.~(\ref{eq:Stokes-matrix-structure}) that $\det\mathcal{S}_k=1$.
\end{rem}

\subsubsection{Stokes Constants From Monodromy}\label{Monodromy-matrix-section}

Having understood in eq.~(\ref{eq:Stokes-matrix-structure}) the shape of the Stokes matrices, we may now construct the monodromy matrix $M_k$ for a canonical basis $\mathbf{Y}^{(\infty)}_k(y)$. 
\begin{lem}\label{monodromy-lem}
    The monodromy matrix $M_k$ defined by $ \mathbf{Y}^{(\infty)}_k(y)M_k \coloneqq \mathbf{Y}^{(\infty)}_k(y\alpha_1)$ can be expressed in terms of Stokes matrices as
    \begin{equation}\label{eq:monodromy-mat-Scanonical basis}
        M_k = {\cal S}_{k}{\cal S}_{k+1}P_N\,, \qquad P_N = \begin{pmatrix}
            0 &&& (-1)^{N-1}\\
            1 & 0 \\
            & \ddots & \ddots \\
            && 1 & 0 
        \end{pmatrix}.
    \end{equation}
\end{lem}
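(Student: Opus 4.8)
The plan is to compute both sides of the defining relation $\mathbf{Y}^{(\infty)}_k(y)M_k=\mathbf{Y}^{(\infty)}_k(y\alpha_1)$ by tracking how the rotation $y\mapsto y\alpha_1$ acts on the Borel--Laplace resummations~\eqref{canonical-basis-def}, and then to trade the resulting relation between canonical bases in different sectors for a product of Stokes matrices. I would start at the formal level: since $\tilde{Y}^{(\infty)}_n(y)=\ii^{N+1}e^{-\alpha_ny}(\alpha_ny)^{\beta}\tilde{f}^{(\infty)}_0(\alpha_ny)$ with $\beta=-\tfrac{N-1}{2}$ and $\alpha_n=e^{2\pi\ii n/N}$, the substitution $y\mapsto y\alpha_1$ sends $\alpha_ny$ to $\alpha_{n+1}y$, which for $n=0,\dots,N-2$ gives $\tilde{Y}^{(\infty)}_{n+1}(y)$ on the nose, while for $n=N-1$ the argument winds once around the origin and the branch of $(\,\cdot\,)^{\beta}$ contributes the phase $e^{2\pi\ii\beta}=(-1)^{N-1}$ (the exponential and the integer-power series $\tilde f^{(\infty)}_0$ being single-valued). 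In row-vector notation this is exactly $\tilde{\mathbf{Y}}^{(\infty)}(y\alpha_1)=\tilde{\mathbf{Y}}^{(\infty)}(y)\,P_N$, which is where $P_N$, and in particular its corner entry $(-1)^{N-1}$, comes from.

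Next I would propagate this through the resummation. The substitution $\zeta=\alpha_{-1}\eta$ in the Laplace integral~\eqref{canonical-basis-def} evaluated at $y\alpha_1$ turns $e^{-y\alpha_1\zeta}$ into $e^{-y\eta}$, rotates the integration ray by $2\pi/N$, and changes the integrand into $\alpha_{-1}\hp_n(\alpha_{-1}\eta)$, which equals $\hp_{n+1}(\eta)$ by the scaling identity $\hp_n(\zeta)=\alpha_{-n}\hp_0(\alpha_{-n}\zeta)$ of Lemma~\ref{lemma-borel-plane}. Hence $\mathscr{L}^{\theta_k}[\tp_n](y\alpha_1)=\mathscr{L}^{\theta_k+2\pi/N}[\tp_{n+1}](y)$; since the Borel sectors~\eqref{W-sector-Borel-plane} have angular width $\pi/N$, the rotated direction lies in $W^{(\infty)}_{k+2}$, and because resummations along two rays inside one sector are mutual analytic continuations, this is precisely the integral defining $Y^{(\infty)}_{k+2,n+1}$. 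Reinserting the prefactors $\ii^{N+1}e^{-\alpha_ny}(\alpha_ny)^{\beta+1}$ and repeating the $n=N-1$ branch bookkeeping of the previous step (the winding phase $e^{2\pi\ii(\beta+1)}=e^{2\pi\ii\beta}$ being again $(-1)^{N-1}$) yields the clean identity $\mathbf{Y}^{(\infty)}_k(y\alpha_1)=\mathbf{Y}^{(\infty)}_{k+2}(y)\,P_N$, with the canonical-basis indices read mod $2N$.

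Finally, applying the Stokes relation~\eqref{Stokes-matrix-def} twice on the common overlap domain gives $\mathbf{Y}^{(\infty)}_{k+2}(y)=\mathbf{Y}^{(\infty)}_k(y)\,\mathcal{S}_k\mathcal{S}_{k+1}$, whence $\mathbf{Y}^{(\infty)}_k(y)M_k=\mathbf{Y}^{(\infty)}_k(y)\,\mathcal{S}_k\mathcal{S}_{k+1}P_N$; the components of $\mathbf{Y}^{(\infty)}_k$ being $N$ linearly independent solutions of~\eqref{oper-z}, the common left factor cancels and $M_k=\mathcal{S}_k\mathcal{S}_{k+1}P_N$ follows. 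The step I expect to need the most care is the phase accounting: checking that a full turn contributes $e^{2\pi\ii\beta}=(-1)^{N-1}$ rather than $1$ (this sign being precisely the corner entry of $P_N$), and that a $2\pi/N$ rotation advances by exactly two Borel sectors. The case $N=2$, with only two sectors, is still covered by this argument since there $\mathcal{S}_0=\mathcal{S}_2=\mathds{1}$, so the general formula degenerates correctly.
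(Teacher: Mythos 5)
Your proof is correct and takes essentially the same approach as the paper's: both track the rotation $y\mapsto\alpha_1 y$ through the Laplace integral via the scaling identity $\hp_n(\zeta)=\alpha_{-n}\hp_0(\alpha_{-n}\zeta)$, observe that the $2\pi/N$ rotation advances exactly two Borel sectors (hence two Stokes matrices) since the sectors have width $\pi/N$, and account for the branch factor $e^{2\pi\ii(\beta+1)}=(-1)^{N-1}$ at $n=N-1$ which produces the corner of $P_N$. The only cosmetic difference is that you make the intermediate identity $\mathbf{Y}^{(\infty)}_k(y\alpha_1)=\mathbf{Y}^{(\infty)}_{k+2}(y)P_N$ explicit and then apply the Stokes relation~\eqref{Stokes-matrix-def} twice, whereas the paper phrases the two-sector crossing directly as the composition $\mathfrak{S}_{k+1}\circ\mathfrak{S}_k$ of Stokes automorphisms.
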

\begin{proof}
In the $y\,\propto\,z^{1/N}$ variable, the monodromy matrix amounts to expressing the analytic continuation of $\mathbf{Y}_k^{(\infty)}(y)$, namely $\mathbf{Y}_k^{(\infty)}(\alpha_1y)$ in terms of the original basis $\mathbf{Y}_k^{(\infty)}(y)$. Explicitly, we have
\begin{align}
   {Y}_{k,n}^{(\infty)}(\alpha_1y)
    =   \mathscr{L}^{\theta_k}[\tilde{Y}^{(\infty)}_{k,n}](\alpha_1y)
    = \ii^{N+1}e^{-\alpha_{n+1}y}(\alpha_{n+1}y)^{\beta+1}\mathscr{L}^{\theta_k}[\tp_n](y\alpha_1)\,.
\end{align}
We may then notice that the resummation in the last term can be related to the resummation of $\tp_{n+1}$ in a different sector by
\begin{align}
    \mathscr{L}^{\theta_k}[\tp_n](y\alpha_1)
    &= \int_{0}^{e^{\ii\theta_k}\infty }d\zeta e^{-\zeta \alpha_1 y}\hp_{n}(\zeta) = \int_{0}^{e^{\ii\theta_k+\frac{2\pi}{N}}\infty }d\zeta e^{-\zeta y}\hp_{n}(\alpha_{-1}\zeta)\alpha_{-1} \\ \nonumber
   & =     \mathscr{L}^{\theta_k+\frac{2\pi}{N}}[\tp_{n+1}](y)\,,
\end{align}
where to pass to the last equality we use Lemma~\ref{lemma-borel-plane}. It now suffices to observe that
\begin{equation}
    \mathscr{L}^{\theta_k + \frac{2\pi}{N}}[\tp_{n+1}](y) =  \mathscr{L}^{\theta_k}[\mathfrak{S}_{k+1} \circ\mathfrak{S}_{k}\tp_{n+1}](y)\,.
\end{equation}
This can be explained as follows. Recall that the sectors $W_k^{(\infty)}$ defined in (\ref{W-sector-Borel-plane}) have an opening angle of $\pi/N$. Therefore, if we fix $\theta_k \in W_k^{(\infty)}$ we will have
\begin{equation}
    \theta_k < \phi_k< \phi_{k+1}<\theta_k +\frac{2\pi}{N}\,.
\end{equation}
It follows that, in order to relate the Laplace transforms along $\theta_k$ and $\theta_k + 2\pi/N$, the action of the two Stokes automorphisms $\mathfrak{S}_{k+1}, \mathfrak{S}_{k}$ associated with the angles $\phi_{k+1}$, $\phi_k$ is required. An example for $N=5$ is shown in figure~\ref{fig:Borel plane2} by the black arc. To reach~(\ref{eq:monodromy-mat-Scanonical basis}) one has to first represent the Stokes automorphism by Stokes matrices as in~(\ref{Stokes-aut-2},~\ref{eq:nontrivial-jumps}); secondly one has to represent the relabelling $\tp_n \rightarrow\tp_{n+1}$ by a permutation matrix $P_N$ which sends $Y^{(\infty)}_{k,n}\to Y^{(\infty)}_{k,n+1}$ for $n=0,\dots,N-2$ and $Y^{(\infty)}_{k,N-1}\to(-1)^{N-1}Y^{(\infty)}_{k,0}$. 
Observe that $\det P_N=1$ so that $\det M_k=1$, which is in agreement with $M_k$ being (conjugated to) the holonomy matrix of a flat $\slNC$ connection.
\end{proof}
We are now ready to prove point (b) of Proposition~\ref{propn:Stokes-constants}.
\begin{proof}
   Without loss of generality, we will focus on the case $k=0$ and compute the corresponding monodromy matrix $M_0$.
   
First, we compute $\mathcal{S}_0$. From eq.~\eqref{eq:Stokes-matrix-structure} we may infer from the first condition that the Stokes constants all lie on the anti-diagonal with $m+n = N$ (the one below the main one). From the second condition $2N-2n \in \{1,\dots,N-1\}$,  we find $n= \ceil*{\tfrac{N+1}{2}},\ceil*{\tfrac{N+1}{2}}+1,\dots,N-1$, which implies that  the Stokes constants $s_i$ are further constrained to lie on the upper triangle of $\mathcal{S}_0$. Note that by setting $k=0$ we only find constants $s_{N-2}, s_{N-4},\dots$ having a label of the same parity of $N$.

Next, we compute ${\cal S}_1$. Similarly to the previous case, we find that the Stokes constants are constrained to lie on the anti-diagonal $m+n=N+1$. Moreover, from the second condition $1-2n\; (\operatorname{mod} 2N)\in\{1,\dots,N-1\}$ we find a non-trivial entry on the second row and first column ($k=1, n=0$) corresponding to $s_{1,0} = (-1)^{N-1}s_{N-1}$, together with entries in the upper triangle on the columns $n= \ceil*{\tfrac{N+2}{2}},\ceil*{\tfrac{N+2}{2}}+1,\dots,N-1$. This time, we only find Stokes constants $s_{N-1}, s_{N-3},\dots$ having a label of the opposite parity of $N$. 

Due to the sparsity of the Stokes matrices, the product ${\cal S}_0{\cal S}_1$ is found to be the matrix such that the diagonal elements equal to $1$ and the entry $[{\cal S}_0{\cal S}_1]_{m,n}$ is non-vanishing if and only if either $[\mathcal{S}_0]_{m,n}$ or $[\mathcal{S}_1]_{m,n}$ is non-vanishing as well, and equal thereto. Note that the product ${\cal S}_0{\cal S}_1$ contains all the independent Stokes constants $s_1,\dots,s_{N-1}$, as can be easily verified by counting the different solutions for the column index $n$ provided above. Recalling Lemma~\ref{monodromy-lem}, we then obtain the monodromy matrix by $M_0 = {\cal S}_0{\cal S}_1P_N$
\begin{equation} \label{eq:canonical-monodromy}
    M_0 = 
    \begin{pmatrix}
        0 & 0 & 0 & \dots & 0 & 0 & (-1)^{N-1} \\
        1 & 0 & 0 & \dots & 0 & s_{N-2} & s_{N-1} \\
        0 & 1 & 0 &\dots & s_{N-4} & s_{N-3} & 0 \\
        0 & 0 & 1 & \dots & s_{N-5} & 0 & 0 \\
        \vdots & \vdots & \vdots && \vdots & \vdots & \vdots \\
        0 & 0 & 0 & \dots & 1 & 0 & 0 \\
        0 & 0 & 0 & \dots & 0 & 1 & 0
    \end{pmatrix}.
\end{equation}
The ``staircase'' sequence of Stokes constants terminates at $s_1$, which lies on the main diagonal, with $s_2$ and $s_1$ aligned vertically if $N$ is even, and horizontally if $N$ is odd.

The characteristic polynomial of $M_0$ is easily evaluated by recursively expanding the determinant by the Laplace rule, first with respect to the last column and then the first row, which ultimately gives
\begin{equation}
    \det(\lambda-M_0) = \lambda^N - \sum_{i=1}^{N-1} s_i\lambda^{N-i} + (-1)^{N} \,.
\end{equation}
It follows that the Stokes constants are related to the eigenvalues $\Sigma_j = e^{2\pi\ii\sigma_j}$ of the monodromy matrix by
\begin{equation}
    s_i = (-1)^{i+1}e_i(\boldsymbol\Sigma)\,,\qquad i=1,\dots,N-1\,,
\end{equation}
where $e_i(\boldsymbol{\Sigma)}$ is the elementary symmetric polynomial of degree $i$ in the $N$ variables $\Sigma_j$.
\end{proof}

\subsection{\texorpdfstring{Canonical Bases $\mathbf{Y}^{(0)}_k$}{Canonical Bases Yk0}} \label{sec:Y^0}

We consider formal asymptotic solutions to the oper equation as $z\to0$ in the variable $y'=\frac{N\Lambda}{\hbar}z^{-1/N}$, so that the limit $z\to0$ corresponds to $y'\to\infty$. 
In this way the problem becomes largely analogous to the one for $z\to\infty$ since both punctures have the same Poincaré rank. Specifically, the formal solutions to the oper equation will take the form
\begin{equation}
    \tilde Y^{(0)}_n(y') = (-\ii)^{N+1}e^{-\alpha_{-n}y'}(\alpha_{-n}y')^\beta \tilde f^{(0)}_n(y')\,,\qquad n=0,\dots,N-1\,,
\end{equation}
where $\tilde f^{(0)}_n = 1+O\big({y'}^{-1}\big)$ is a formal power series in ${y'}^{-1}$.
\begin{lem}
    For the canonical bases of solutions $\mathbf{Y}^{(0)}_k=(Y^{(0)}_{k,0},\dots,Y^{(0)}_{k,N-1})$ defined as the Borel resummations of formal solutions $\tilde Y^{(0)}_n(y')$ the Stokes and monodromy matrices are the same as those for $\mathbf{Y}^{(\infty)}_k$, that is
    \begin{equation}
        \mathbf{Y}^{(0)}_{k+1}(y') = \mathbf{Y}^{(0)}_k(y'){\cal S}_k\,,\qquad \mathbf{Y}^{(0)}_k(y'\alpha_{-1}) = \mathbf{Y}^{(0)}_k(y')M_k\,.
    \end{equation}
\end{lem}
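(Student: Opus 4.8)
The cleanest route is to reduce the statement to the analysis already carried out for $\mathbf{Y}^{(\infty)}_k$, using the involution $z\mapsto 1/z$. Setting $w=1/z$ and using $z\partial_z=-w\partial_w$, a direct substitution shows that the oper equation~\eqref{oper-z} becomes, in the variable $w$, again an equation of the same form~\eqref{oper-z}, but with the energies $E_m$ replaced by $\widehat E_m\coloneqq(-1)^mE_m$ (so that $\widehat E_1=E_1=0$ still holds); the right-hand side is reproduced exactly after using $\ii^{2N}=(-1)^N$. Since $y'=\tfrac{N\Lambda}{\hbar}z^{-1/N}=\tfrac{N\Lambda}{\hbar}w^{1/N}$, the variable $y'$ plays for the $w$-equation precisely the role that $y$ played for the $z$-equation. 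Hence Lemmas~\ref{lemma-gevrey} and~\ref{lemma-borel-plane}, Propositions~\ref{prop-canonical-bases} and~\ref{propn:Stokes-constants}, and Lemma~\ref{monodromy-lem} all transfer to the $y'$-problem word for word, once the convention-dependent data are matched.

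Two remarks make this transfer effective. First, the shift $E_m\mapsto\widehat E_m$ is immaterial: the positions $\omega_{0,j}=\alpha_j-1$ of the Borel singularities (Lemma~\ref{lemma-borel-plane}), the $\mathbb{Z}_N$-symmetry of the Borel plane, and the combinatorial shape of the Stokes and monodromy matrices (eq.~\eqref{eq:Stokes-matrix-structure} and the formula $M_k=\mathcal{S}_k\mathcal{S}_{k+1}P_N$ of Lemma~\ref{monodromy-lem}) do not involve the $E_m$; the energies enter only through the holonomy eigenvalues $\boldsymbol\Sigma$. Second, $\mathbf{Y}^{(0)}_k$ and $\mathbf{Y}^{(\infty)}_k$ are bases of solutions of the \emph{same} equation~\eqref{oper-z} on $C_{0,2}$ — they are related by a constant matrix, cf.~\eqref{transpzeroinfty} — and $y'\mapsto y'\alpha_{-1}$ corresponds to $z\mapsto ze^{2\pi\ii}$, i.e.\ to traversing the curve $\gamma$ in the very same sense as $y\mapsto y\alpha_1$ does. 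Thus the monodromy matrices along $\gamma$ for $\mathbf{Y}^{(0)}_k$ and for $\mathbf{Y}^{(\infty)}_k$ are conjugate, so they share the eigenvalues $\boldsymbol\Sigma$.

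Concretely: (i) from the $z\mapsto1/z$ reduction one reads off that $\tilde Y^{(0)}_n$ has the stated form with the same exponent $\beta=-\tfrac{N-1}{2}$ and leading exponentials $\{-\alpha_{-n}y'\}$; (ii) Lemmas~\ref{lemma-gevrey},~\ref{lemma-borel-plane} applied in the $w$-picture give Gevrey-$1$ series, Borel singularities at $\alpha_j-1$, and $\widehat\varphi^{(0)}_n(\zeta)=\alpha_n\widehat\varphi^{(0)}_0(\alpha_n\zeta)$; (iii) the proof of Proposition~\ref{prop-canonical-bases} produces the Borel sums on the sectors~\eqref{Sokal-discs}; (iv) the proofs of Proposition~\ref{propn:Stokes-constants}(a) and Lemma~\ref{monodromy-lem} yield $\mathbf{Y}^{(0)}_{k+1}=\mathbf{Y}^{(0)}_k\mathcal{S}^{(0)}_k$ with $\mathcal{S}^{(0)}_k$ of the shape~\eqref{eq:Stokes-matrix-structure}, and $\mathbf{Y}^{(0)}_k(y'\alpha_{-1})=\mathbf{Y}^{(0)}_k(y')\,\mathcal{S}^{(0)}_k\mathcal{S}^{(0)}_{k+1}P_N$ with the same $P_N$, its off-diagonal $(-1)^{N-1}$ again being the monodromy factor of $(y')^{\beta}$; (v) comparing the characteristic polynomial of this monodromy matrix, computed as in the proof of Proposition~\ref{propn:Stokes-constants}(b), with the eigenvalues $\boldsymbol\Sigma$ from the previous paragraph gives $s^{(0)}_i=(-1)^{i+1}e_i(\boldsymbol\Sigma)=s_i$, whence $\mathcal{S}^{(0)}_k=\mathcal{S}_k$ and $M^{(0)}_k=M_k$.

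The genuinely new point is step (iv), which is where the orientation bookkeeping lives. The conventions fixing the sectors $D^{(0)}_k$ in~\eqref{Sokal-discs} and the normalization $(-\ii)^{N+1}(\alpha_{-n}y')^{\beta}$ of the formal solutions $\tilde Y^{(0)}_n$ are the mirror images — under the orientation reversal induced by $z\mapsto 1/z$ together with the index reflection $n\mapsto-n$ — of those used for $\mathbf{Y}^{(\infty)}_k$. The thing to verify is that the reflected sector ordering, the index reflection, and the accompanying diagonal and overall rescalings compose to the identity at the level of matrices, so that one recovers $\mathcal{S}_k$ and $P_N$ literally and not, say, $\mathcal{S}_{2N-k}$, a transpose, or $P_N^{-1}$; this is exactly why the prefactors $(\pm\ii)^{N+1}\alpha_{\pm n}^{-(N-1)/2}$ were built into~\eqref{Y-asymptotics}. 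Confirming it amounts to a careful index chase through eqs.~\eqref{eq:nontrivial-jumps}–\eqref{eq:canonical-monodromy} with $n\mapsto-n$ and the sector ordering reversed, which is routine once set up but is the one place where a sign or a relabeling could go wrong.
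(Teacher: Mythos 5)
Your proof is correct and follows essentially the same strategy as the paper: both transfer the Borel--Laplace analysis from $\infty$ to $0$ via the inversion $z\mapsto 1/z$, choose conventions (reversed ordering of basis elements and Stokes lines) so that the resulting Stokes and monodromy matrices are literally of the form \eqref{eq:Stokes-matrix-structure} and \eqref{eq:monodromy-mat-Scanonical basis}, and then conclude equality of the Stokes constants by comparing characteristic polynomials, since the two monodromy matrices represent the same holonomy along $\gamma$. Your explicit computation that the inversion sends $E_m\mapsto(-1)^mE_m$ is a useful sanity check not spelled out in the paper, and you correctly identify the orientation bookkeeping in step (iv) as the one delicate point --- the paper handles exactly this via its explicit redefinitions of the Stokes lines and sectors and the footnote explaining why the same $P_N$ appears.
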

\begin{proof}
For the most part the method is exactly the same, however one must take care of the change in orientation coming from the inversion, which can be compensated by a change of ordering of the basis elements as well as the Stokes lines.
Here we only cover the necessary definitions and modifications and state the result, with the technical points being analogous to those for the $z\to\infty$ case.

We define $\tp^{(0)}_n(y') = (\alpha_{-n}y')^{-1}\tilde f^{(0)}_n(y')$ and compute its Borel transform
\begin{equation}
    \hp^{(0)}_n(\zeta) \coloneqq \CB\left[ \tp^{(0)}_n \right] = \alpha_n\sum_{k=0}^\infty\frac{a_k'}{k!}(\alpha_n\zeta)^k\,,\qquad \tp^{(0)}_n(y) = \sum_{k=0}^\infty a_k'(\alpha_{-n}y')^{-k-1}\,.
\end{equation}
Each function $\hp^{(0)}_n(\zeta)$ again has $N-1$ singularities lying at the non-zero solutions to the equation $(\omega+1)^N=1$, which we label as
\begin{equation}
    \omega^{(0)}_{n,j} = \alpha_{-n}(\alpha_{-j}-1)\,,\qquad j=1,\dots,N-1\,,
\end{equation}
so that the Stokes lines are given by
\begin{equation}
    \phi^{(0)}_{2n+j} = \arg\omega^{(0)}_{n,j} \quad\implies\quad \phi^{(0)}_k = -\frac{\pi}{2}-\frac{\pi}{N}k\,,\qquad k=0,\dots,2N-1\,.
\end{equation}
Defining the sectors 
\begin{equation}
    W^{(0)}_k = \left\{ \zeta\in\BC^{\times} \,\middle|\, \phi^{(0)}_k < \arg\zeta < \phi^{(0)}_{k-1} \right\}
\end{equation}
in the Borel plane, the resummed series are then obtained as the Laplace transforms
\begin{equation} \label{eq:Laplace-0}
    \varphi^{(0)}_{k,n}(y') \coloneqq \mathscr{L}^{\theta_k}\left[ \tp^{(0)}_n \right](y') = \int_0^{e^{\ii\theta_k}\infty}d\zeta\, e^{-\zeta y'}\hp^{(0)}_n(\zeta)\,,\qquad \theta_k\in W^{(0)}_k\,,
\end{equation}
which, after assembling together all the resummations for $\theta_k\in W_k^{(0)}$, yield holomorphic functions admitting uniform asymptotic expansions on closed subsectors of the sectors $D^{(0)}_k$ given by eq.~\eqref{Sokal-discs}.\footnote{We remark that, if we assume $\Lambda/\hbar\in\BR$, the sectors  $D_k^{(0, \infty)}$cover the same angular range in the $z$ variable.}  The full canonical solutions in their respective sectors are then given by
\begin{equation}
    Y^{(0)}_{k,n}(y') = (-\ii)^{N+1}e^{-\alpha_{-n}y'}(\alpha_{-n}y')^{\beta+1}\varphi^{(0)}_{k,n}(y')\,.
\end{equation}
Defining the Stokes matrices analogously as the transition matrices
\begin{equation}
    \mathbf{Y}^{(0)}_{k+1}(y') = \mathbf{Y}^{(0)}_k(y'){\cal S}'_k\,,
\end{equation}
it is straightforward to verify from the above definitions that their matrix elements $s'_{ij} \coloneqq [{\cal S}'_k]_{ij}$ are subject to the same constraints~\eqref{eq:Stokes-matrix-structure} as those of ${\cal S}_k$.
The monodromy matrix $\mathbf{Y}^{(0)}_k(\alpha_{-1}y') \eqqcolon \mathbf{Y}^{(0)}_kM'_k$ can then be obtained from the Laplace transform~\eqref{eq:Laplace-0} as
\begin{equation}
    M'_k = {\cal S}'_k{\cal S}'_{k+1}P_N\,,
\end{equation}
with $P_N$ defined as in Lemma~\ref{monodromy-lem}.\footnote{Here it is important that we have chosen to orient both the basis vectors and the Stokes lines in the opposite way from the $z\to\infty$ case. The former ensures that we get the same permutation matrix $P_N$, despite rotating $y'$ in the opposite direction, while the latter ensures that this does not modify the structure of the Stokes matrices.}
By comparing characteristic polynomials it then follows that $s'_{ij} = s_{ij}$ and therefore ${\cal S}'_k = {\cal S}_k$ and $M'_k = M_k$ for all $k=0,\dots,2N-1$.
\end{proof}

\section{Floquet Bases and Riemann-Hilbert Problem} \label{sec:Floquet}

In this section we construct explicit Floquet solutions to the oper equation, by making precise the relation between the Toda chain and opers on $C_{0,2}$.
In particular, this allows us to solve the direct monodromy problem, as well as the Riemann-Hilbert problem outlined in Section~\ref{oper-section}, in terms of a single nonlinear integral equation.

\subsection{Construction of Floquet Bases}
In order to construct Floquet solutions, one may use an ansatz of the form
    \begin{equation}\label{Floquet-ansatz}
        F_\sigma(z) = \sum_{n\in\BZ}q_nz^{\sigma+n}\,,
    \end{equation}
    parameterized by a complex number $\si$.
    The monodromy of this solution around $z=0$ is diagonal, represented by multiplication with 
    $e^{2\pi \mathrm{i}\si}$.
Inserting this ansatz into the oper equation~\eqref{oper-z} yields a recurrence relation of the form
    \begin{equation} \label{eq:Floquet-recurrence}
        t(-\ii\hbar(\sigma+n))q_n = \Lambda^N\left( \ii^Nq_{n-1} + \ii^{-N}q_{n+1} \right)\,.
    \end{equation}
The solutions $(q_n)_{n\in\BZ}$ of this recurrence relations in general will not define a convergent series
when inserted into the right side of \rf{Floquet-ansatz}. Only for special choices 
of the parameter $\si$ one will find convergent series of the form \rf{Floquet-ansatz} defining 
analytic functions of the variable $z$.

    A simple, but crucial observation is the following: The recurrence relation \rf{eq:Floquet-recurrence}
    is closely related to the 
     Baxter equation~\eqref{Baxter-equation}. It follows immediately that for any solution $q(\lambda)$
     to~\eqref{Baxter-equation}, the definition 
    $q_n \coloneqq  q(-\ii\hbar(\sigma+n))$ yields a solution to the recurrence relation~\eqref{eq:Floquet-recurrence}.
We may therefore apply the results reviewed in Section~\ref{Toda-review} for the construction of Floquet-solutions.

At the end of Section~\ref{Toda-review} we had in particular 
observed that the sequences $(Q^\pm_{\bm\tau}(-\ii\hbar(\sigma+n)))_{n\in\BZ}$
are rapidly decaying for $n\ra\pm\infty$ if 
the parameter $\si$ is proportional to one of the zeros of the Hill determinant. These observations 
lead us to the 
following result.

\begin{propn}\label{Proposition-sigma-delta}
    For given tuple $\bm{\tau}$ of roots of $t(\la)$, and given parameter $\Lambda$, let us define
    the Laurent  series
    \begin{subequations} \label{eq:Floquet-from-Baxter}
    \begin{align}
        F^{(0)}_j(z) &= \sum_{n\in\BZ} Q_{\boldsymbol\tau}^+\big(\de_j(\bm{\tau},\Lambda)-\ii\hbar n\big)z^{\sigma_j+n}\,,\qquad j=1,\dots,N\,, \\
        F^{(\infty)}_j(z) &= \sum_{n\in\BZ}Q_{\boldsymbol\tau}^-\big(\de_j(\bm{\tau},\Lambda)-\ii\hbar n\big)z^{\sigma_j+n}\,,\qquad j=1,\dots,N\,, 
    \end{align}
    \end{subequations}
    with 
    $\de_j(\bm{\tau},\Lambda)$ defined to be the zeros of the quantum Wronskian through eq.~\rf{Hill-formula}, and 
     parameters $\si_j$ being related to $\de_j(\bm{\tau},\Lambda)$ as $\si_j = \frac{\mathrm{i}}{\hbar}\de_j(\bm{\tau},\Lambda)$.
     The Laurent series on the right of eq.~\rf{eq:Floquet-from-Baxter}
    are absolutely convergent, 
    defining two  bases of Floquet solutions to the oper equation~\eqref{oper-z}.
\end{propn}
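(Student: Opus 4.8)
The plan is to verify that the Laurent coefficients in~\rf{eq:Floquet-from-Baxter} solve the Floquet recurrence~\rf{eq:Floquet-recurrence}, then to establish absolute convergence of the series from the two-sided super-exponential decay of $Q^\pm_{\bm\tau}$ recorded at the end of Section~\ref{sec:Baxter-det}, and finally to read off that the resulting functions are genuine Floquet solutions and that the $N$ of them form a basis. The non-routine input is entirely the two-sided decay; everything else is bookkeeping.

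The starting point is the elementary substitution underlying the proposition. Writing $\sigma_j=\tfrac{\ii}{\hbar}\de_j(\bm\tau,\Lambda)$, one has $-\ii\hbar(\sigma_j+n)=\de_j(\bm\tau,\Lambda)-\ii\hbar n$, so the sequences $q^{(j),\pm}_n:=Q^\pm_{\bm\tau}\big(\de_j(\bm\tau,\Lambda)-\ii\hbar n\big)$ are exactly of the form $q_n=q(-\ii\hbar(\sigma_j+n))$ with $q=Q^\pm_{\bm\tau}$. Since $Q^\pm_{\bm\tau}$ solve the Baxter equation~\rf{Baxter-equation} with the polynomial $t$ whose roots are $\bm\tau$, the observation made just before the proposition statement shows that $(q^{(j),\pm}_n)_{n\in\BZ}$ solves~\rf{eq:Floquet-recurrence} with $\sigma=\sigma_j$.

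The technical core is absolute convergence, and here the choice of the $\de_j(\bm\tau,\Lambda)$ as zeros of the quantum Wronskian is essential. Along the ``good'' direction, $Q^+_{\bm\tau}(\de_j-\ii\hbar n)$ decays super-exponentially as $n\to-\infty$ because $\Im(\de_j-\ii\hbar n)\to+\infty$ and $Q^+_{\bm\tau}(\lambda)\sim|\lambda|^{-N|\lambda|/\hbar}$ there. Along the direction $n\to+\infty$, where $Q^+_{\bm\tau}$ would a priori grow, one uses that vanishing of $W[Q^+_{\bm\tau},Q^-_{\bm\tau}]$ at $\lambda=\de_j$ forces $\mathfrak{q}^+_{\bm\tau}(\de_j)=\xi_j(\bm\tau,\Lambda)\,\mathfrak{q}^-_{\bm\tau}(\de_j)$ — two solutions of the second-order recurrence in $n$ that agree up to scale at two consecutive values of $n$ agree up to scale everywhere — so $q^{(j),+}_n=\xi_j q^{(j),-}_n$ and the decay now follows from that of $Q^-_{\bm\tau}$ as $\Im\to-\infty$; the same argument with $\pm$ exchanged handles $q^{(j),-}_n$. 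This is precisely the two-sided decay of the sequences $\mathfrak{q}^\pm_{\bm\tau}(\de_j)$ already stated at the end of Section~\ref{sec:Baxter-det}; quantitatively it yields a bound $|q^{(j),\pm}_n|\le A\,\rho^{|n|}|n|^{-N|n|}$ for suitable $A,\rho>0$, which beats any geometric factor and hence makes $\sum_{n\in\BZ}q^{(j),\pm}_n z^{\sigma_j+n}$ absolutely convergent for every $z\in\BC^\times$, uniformly on compact subsets of $\BC^\times$; equivalently the Laurent part $g^{(0,\infty)}_j(z):=\sum_n q^{(j),\pm}_n z^{n}$ is holomorphic on $\BC^\times$.

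Granted convergence, the rest is routine. Uniform convergence on compacts permits applying $t(-\ii\hbar z\pa_z)$ term by term; since this operator multiplies $z^{\sigma_j+n}$ by $t(-\ii\hbar(\sigma_j+n))$, inserting~\rf{eq:Floquet-recurrence} and reindexing the two resulting sums by $n\mapsto n\pm1$ gives $t(-\ii\hbar z\pa_z)F^{(0,\infty)}_j=\Lambda^N(\ii^N z+\ii^{-N}z^{-1})F^{(0,\infty)}_j$, i.e.\ $F^{(0,\infty)}_j$ solves the oper equation~\rf{oper-z}. Writing $F^{(0,\infty)}_j(z)=z^{\sigma_j}g^{(0,\infty)}_j(z)$ with $g^{(0,\infty)}_j$ single-valued on $\BC^\times$ exhibits the monodromy of $F^{(0,\infty)}_j$ around $z=0$ as multiplication by $e^{2\pi\ii\sigma_j}$, so these are Floquet solutions in the required sense. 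Finally, the $N$ functions $F^{(0)}_1,\dots,F^{(0)}_N$ lie in the $N$-dimensional solution space of~\rf{oper-z} and are eigenvectors of the $z=0$ monodromy with eigenvalues $e^{2\pi\ii\sigma_j}$; these eigenvalues are pairwise distinct exactly when the $\de_j(\bm\tau,\Lambda)$ are pairwise distinct modulo $\ii\hbar$ (the generic situation), in which case the $F^{(0)}_j$, and likewise the $F^{(\infty)}_j$, are linearly independent and hence a basis. The one genuinely delicate point is the two-sided decay estimate of the third paragraph: the decay as $n\to+\infty$ for $Q^+$ is not a property of $Q^+_{\bm\tau}$ alone but hinges on the Wronskian vanishing at $\de_j$, which is exactly why the proposition ties $\bm\sigma$ to the Hill-determinant zeros; the mild genericity needed for the word ``basis'' (including non-vanishing of the $\xi_j$) is a secondary issue.
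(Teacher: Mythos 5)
Your proposal is correct and follows essentially the same route as the paper: substitute the Baxter $Q^\pm_{\bm\tau}$ values into the Floquet recurrence, invoke the two-sided super-exponential decay of the sequences $\mathfrak{q}^\pm_{\bm\tau}(\de_j)$ (which the paper records at the end of Section~\ref{sec:Baxter-det} and which hinges on the Wronskian vanishing at the Hill zeros) to get absolute convergence, then identify the series as Floquet eigenvectors of the monodromy and hence a basis. You have spelled out the implicit steps — notably the two-term Wronskian argument for proportionality along the full sequence, and the genericity needed for the word ``basis'' — which the paper itself leaves to the surrounding discussion and Appendix~\ref{app:G-expansion}.
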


See Appendix~\ref{app:G-expansion} for
more details on the analytic properties of the Floquet solutions. 

The construction of the two linearly independent solutions $Q_{\boldsymbol\tau}^\pm$ reviewed in 
Subsection~\ref{Toda-review} can now be applied to the solution theory of the oper differential 
equation. Recall that the monodromy data of the oper equation of interest here are completely 
determined by the eigenvalues $\si_j$, $j=1,\dots,N-1$. Proposition~\ref{Proposition-sigma-delta}
relates these data to the zeros $\de_j(\bm{\tau},\Lambda)$ of the quantum Wronskian (\ref{Quantum-Wronskian}). Computing the monodromies
of the oper therefore reduces to finding said zeros. 

One could alternatively replace the 
functions $Q_{\boldsymbol\tau}^\pm$ in the construction \rf{eq:Floquet-from-Baxter}
by the closely related functions $\mathrm{Q}_{\boldsymbol\de}^\pm$
constructed 
from the solutions to the nonlinear integral equation \rf{NLIE} in Section~\ref{BaxterfromNLIE}. 
Recall that this construction produces solutions $\mathrm{Q}_{\boldsymbol\de}^\pm$ 
to the Baxter equation which have the property that the quantum 
Wronskian of $\mathrm{Q}_{\boldsymbol\de}^\pm$ vanishes at $\la=\de_j$, for $j=1,\dots,N$,
from given input data
$\bm{\de}=(\de_1,\dots,\de_N)$.
The relation $\mathrm{Q}_{\boldsymbol\de(\bm{\tau})}^\pm=Q_{\boldsymbol\tau}^\pm$ observed in Section~\ref{BaxterfromNLIE} 
immediately allows us 
to conclude that we may solve the natural analog of the Riemann-Hilbert problem in this way,
recovering the oper from the monodromy data $\bm\si= \frac{\mathrm{i}}{\hbar}\bm\de$. 

\subsection{Connection Matrix and Generating Function}\label{subsection:generating-function}

Another useful application of the solution theory of the Baxter equation to 
the study of solutions to the oper equation follows from the simple observation that  
equations \rf{zetaj-def} and \rf{eq:Floquet-from-Baxter} immediately imply that
the matrix elements of the connection matrix defined in \rf{etaj-def}
satisfy
\begin{equation}\label{eta_j-zeta_j relation}
e^{2\pi \mathrm{i}\,\eta_j(\bm{\si},\Lambda)}=\zeta_j(\bm{\de},\Lambda)\Big|_{\bm{\de} = -{\mathrm{i}}{\hbar}\bm{\si}}=
\frac{\mathrm{Q}_{\boldsymbol\de}^+(\de_j)}{\mathrm{Q}_{\boldsymbol\de}^-(\de_j)}\Bigg|_{\bm{\de} = -{\mathrm{i}}{\hbar}\bm{\si}}.
\end{equation}
Inserting the expressions for $\mathrm{Q}_{\boldsymbol\de}^\pm(\de_j)$ in terms of the solution 
to the integral equation, and using \rf{YY-eta-potential}, it becomes possible to verify that
\begin{equation}
2\pi\ii\,\eta_j=\frac{\pa}{\pa \de_j}\CY(\bm{\de},\Lambda)\,,
\end{equation}
see Appendix~\ref{W=Y appendix} for the details. 
It follows that the Yang-Yang function $\CY(\bm{\de},\Lambda)$ can serve as
a potential for the functions $\eta_j(\si,\Lambda)$, $j=1,\dots,N$, suggesting to define
\begin{equation}\label{CW-from-CY}
\mathcal{S}(\bm{\si},\Lambda)\coloneqq\CY(\bm{\de},\Lambda)\Big|_{\bm{\de} = -{\mathrm{i}}{\hbar}\bm{\si}}\,.
\end{equation}

It can furthermore be shown (see Appendix~\ref{W=Y appendix} for the proof) that eq.~\rf{CW-from-CY} implies in addition
\begin{equation}\label{dLambda-deriv} 
\ii\hbar\frac{\partial \mathcal{S}(\boldsymbol{\delta},\Lambda)}{\partial \log \Lambda^{2N}} = u(\boldsymbol{\delta}, \Lambda) \coloneqq -E_2(\bm\de,\La)\,,
\end{equation}
where $u(\boldsymbol{\delta},\Lambda)\coloneq-E_2(\boldsymbol{\tau}(\boldsymbol{\delta},\Lambda))$. This relation generalizes relations that are known to be satisfied by the generating functions 
of opers for the cases with $N=2$ and regular singularities \cite{Teschner:2010je,Litvinov:2013sxa}.
We are thereby led to the conclusion that the generating function $\mathcal{S}$ 
of the subspace of opers  for $C_{0,2}$ with minimally irregular singularities
coincides with the Yang-Yang function $\CY$,
verifying a consequence of the conjectures of~\cite{Nekrasov:2009rc} and~\cite{Nekrasov:2011bc}.

\section{Connection Matrix and Analytic Langlands Correspondence} \label{sec:connection-langlands}

The equality between the Yang-Yang function of the Toda chain and the generating function of opers observed in the previous section suggests a geometric reformulation of the Toda quantization conditions~\eqref{q-cond-delta} in the spirit of the analytic Langlands correspondence~\cite{Etingof:2019pni}.
Here we show that this is indeed the case by relating them to the condition of trivial parallel transport between the canonical bases $\mathbf{Y}^{(0)}_0$ and $\mathbf{Y}^{(\infty)}_0$.
To do so, we solve the connection problem by computing the connection matrix $E(\bm\si)$ explicitly via the change of basis between the Floquet and canonical bases.
The latter is obtained through an explicit integral representation~\eqref{eq:Stokes-max} for the maximally decaying solutions $\chi^{(0,\infty)}$ at $z=0,\infty$ respectively, from which the full canonical basis can be constructed using the knowledge of the Stokes matrices obtained in Section~\ref{section:Stokes-matrices}.

\subsection{An Explicit Construction of Maximally Decaying Solutions}

As a first step towards solving the connection problem we construct the following two solutions.
\begin{lem}
    Let
    \begin{equation}\label{eq:little q}
       q_{\bm\de}^{\pm}(\lambda)= \frac{Q^{\pm}_{\boldsymbol{\delta}}(\lambda)}{\prod_{j=1}^N e^{-\frac{\pi \lambda}{\hbar}}\sinh{\frac{\pi}{\hbar}(\lambda-\delta_j)}}\,,
    \end{equation}
    which has poles at $\sigma_j+\BZ$, $j=1,\dots,N$ with $\sigma_j=\tfrac{\ii}{\hbar}\delta_j$.
    Then the functions
    \begin{subequations} \label{eq:Stokes-max}
    \begin{align}
        \chi^{(0)}(z) =& \lim_{n\to\infty}\int_{L_n^{(0)}}\frac{ds}{2\pi\ii}\, q^+_{\bm\de}(-\ii\hbar s)z^s\,,\\
        \chi^{(\infty)}(z) =& \lim_{n\to\infty}\int_{L_n^{(\infty)}}\frac{ds}{2\pi\ii}\, q^-_{\bm\de}(-\ii\hbar s)z^s\,,
    \end{align}
    \end{subequations}
    where the sequences of contours $L_n^{(0,\infty)}$ are chosen as in figure~\ref{fig:chi-contours1}, are well-defined analytic solutions to the oper equation~(\ref{oper-z}) on $C_{0,2}$.
\end{lem}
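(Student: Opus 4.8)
The plan is to read the formulas~\rf{eq:Stokes-max} as Mellin--Barnes type representations and to deduce the three assertions --- convergence of the integrals, existence of the limit $n\to\infty$, and that the result solves the oper equation~\rf{oper-z} --- from the analytic properties of $Q^{\pm}_{\bm{\de}}$ recalled in Section~\ref{Toda-review}. The starting point is an algebraic observation about the denominator $D(\lambda)\coloneqq\prod_{j=1}^{N}e^{-\pi\lambda/\hbar}\sinh\tfrac{\pi}{\hbar}(\lambda-\delta_{j})$ in~\rf{eq:little q}: each of its factors is invariant under $\lambda\mapsto\lambda+\ii\hbar$, since $e^{-\pi(\lambda+\ii\hbar)/\hbar}=-e^{-\pi\lambda/\hbar}$ and $\sinh\tfrac{\pi}{\hbar}(\lambda+\ii\hbar-\delta_{j})=-\sinh\tfrac{\pi}{\hbar}(\lambda-\delta_{j})$, so $D(\lambda\pm\ii\hbar)=D(\lambda)$. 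Together with the Baxter equation~\rf{Baxter-equation} for $Q^{\pm}_{\bm{\de}}$ this shows that the functions $q^{\pm}_{\bm{\de}}$ of~\rf{eq:little q} again solve~\rf{Baxter-equation}, hence that $g^{\pm}(s)\coloneqq q^{\pm}_{\bm{\de}}(-\ii\hbar s)$ satisfies $t(-\ii\hbar s)\,g^{\pm}(s)=\Lambda^{N}\bigl(\ii^{N}g^{\pm}(s-1)+\ii^{-N}g^{\pm}(s+1)\bigr)$. This is exactly the relation needed for $\int\tfrac{ds}{2\pi\ii}\,g^{\pm}(s)z^{s}$ to solve~\rf{oper-z}, because under the integral $-\ii\hbar z\partial_{z}$ acts as multiplication by $-\ii\hbar s$ while multiplication by $z^{\pm1}$ acts as the contour shift $s\mapsto s\mp1$. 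Thus the lemma reduces to two analytic points: absolute, locally uniform convergence of the integrals in $z$, and the admissibility of the shifts $s\mapsto s\pm1$ up to corrections that vanish as $n\to\infty$.

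Next I would assemble the quantitative estimates. Using the explicit Gamma-function form of $Q^{\pm}_{\bm{\de}}$ (cf.~\rf{QfromY},~\rf{v+}--\rf{v-}), together with the asymptotics quoted in Section~\ref{Toda-review} --- super-exponential decay $Q^{\pm}_{\bm{\de}}(\lambda)\sim|\lambda|^{-N|\lambda|/\hbar}$ as $\Im\lambda\to\pm\infty$ respectively, at most exponential growth in the horizontal directions --- and the asymptotics of $D(\lambda)$ (bounded and bounded away from its zeros along horizontal directions, at most exponential along vertical ones), one finds that $g^{\pm}(s)$ decays exponentially in $|\Im s|$ uniformly on vertical strips, away from the poles at $s\in\sigma_{j}+\BZ$, $\sigma_{j}=\tfrac{\ii}{\hbar}\delta_{j}$. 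A short residue computation, using the $\ii\hbar$-periodicity of $D$, shows moreover that the residue of $q^{+}_{\bm{\de}}(-\ii\hbar s)z^{s}$ at $s=\sigma_{j}+m$ equals $c_{j}\,Q^{+}_{\bm{\de}}(\delta_{j}-\ii\hbar m)\,z^{\sigma_{j}+m}$ with $c_{j}$ independent of $m$, and similarly for $q^{-}_{\bm{\de}}$. The contours $L^{(0,\infty)}_{n}$ of figure~\ref{fig:chi-contours1} run off to $\Im s\to\pm\infty$ precisely in the directions where $g^{+}$, respectively $g^{-}$, decays exponentially; for $z$ in the relevant punctured neighbourhood (an angular sector compatible with the canonical sectors of Section~\ref{section:Stokes-matrices}) this decay dominates the at-most-exponential growth of $z^{s}$ along the contour, so each integral in~\rf{eq:Stokes-max} with $L^{(0,\infty)}_{n}$ fixed converges absolutely and locally uniformly, defining a holomorphic function of $z$; differentiation under the integral sign is justified by the same bounds.

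To pass to the limit I would apply the Cauchy criterion. The difference $\chi^{(0)}_{n+1}-\chi^{(0)}_{n}$ equals the integral over the compactly supported region swept between $L^{(0)}_{n}$ and $L^{(0)}_{n+1}$, which by Cauchy's theorem is a finite sum of residues of the type computed above, each proportional to $Q^{+}_{\bm{\de}}(\delta_{j}-\ii\hbar m)\,z^{\sigma_{j}+m}$ with $|m|$ growing linearly in $n$; by the two-sided rapid decay of the sequences $\mathfrak{q}^{\pm}_{\bm{\de},j}$ established in Section~\ref{Toda-review} (valid precisely because $\delta_{j}$ is a zero of the quantum Wronskian of $Q^{\pm}_{\bm{\de}}$) this difference is $O(n^{-cn})$ uniformly on compact subsets of $C_{0,2}$, so $(\chi^{(0)}_{n})$ is Cauchy and its limit $\chi^{(0)}$ is holomorphic on $C_{0,2}$; the same for $\chi^{(\infty)}$. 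Finally, applying $L\coloneqq t(-\ii\hbar z\partial_{z})-\Lambda^{N}(\ii^{N}z+\ii^{-N}z^{-1})$ to $\chi^{(0)}_{n}$ and moving it under the integral yields, via $z^{\pm1}z^{s}=z^{s\pm1}$, the sum of three integrals of $t(-\ii\hbar s)g^{+}(s)z^{s}$, $\Lambda^{N}\ii^{N}g^{+}(s)z^{s}$ and $\Lambda^{N}\ii^{-N}g^{+}(s)z^{s}$ over the contours $L^{(0)}_{n}$, $L^{(0)}_{n}+1$ and $L^{(0)}_{n}-1$ respectively; by the three-term relation of the first paragraph these cancel exactly once all three are taken over the \emph{same} contour, and deforming $L^{(0)}_{n}\pm1$ back to $L^{(0)}_{n}$ only crosses poles whose residue contribution is again the quantity shown to be $O(n^{-cn})$. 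Hence $L\chi^{(0)}_{n}\to0$, and since $\chi^{(0)}_{n}\to\chi^{(0)}$ locally uniformly together with its $z$-derivatives, $L\chi^{(0)}=0$; the argument for $\chi^{(\infty)}$ is the mirror image with $Q^{+}_{\bm{\de}}\to Q^{-}_{\bm{\de}}$ and the contours traversed in the opposite sense.

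I expect the main obstacle to lie in the bookkeeping for the contours $L^{(0,\infty)}_{n}$ of figure~\ref{fig:chi-contours1}: one must choose their staircase shape so that simultaneously (a) $g^{\pm}(-\ii\hbar s)z^{s}$ decays along them for $z$ in the desired domain, (b) the shifted contours $L^{(0,\infty)}_{n}\pm1$ differ from $L^{(0,\infty)}_{n}$ only by regions enclosing poles controlled by the two-sided decay of $\mathfrak{q}^{\pm}_{\bm{\de},j}$, and (c) the same shape makes $\chi^{(0,\infty)}_{n}$ manifestly holomorphic on all of $C_{0,2}$ rather than on a proper subdomain, which requires tracking the branch of $z^{s}$ and the monodromy around $0$ and $\infty$. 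Everything else --- the convergence estimates, differentiation under the integral, and the dominated/Cauchy arguments for the limit --- is routine given the asymptotics imported from Section~\ref{Toda-review}.
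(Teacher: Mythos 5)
Your proposal is correct but takes a genuinely different route from the paper's proof. The paper argues more economically: the contours $L_n^{(0,\infty)}$ are designed to encircle a growing finite set of poles at $\sigma_j + \mathbb{Z}$, so each finite-$n$ integral is, by the residue theorem, precisely a finite truncation of a linear combination of the Floquet series $F_j^{(0,\infty)}$ from \rf{eq:Floquet-from-Baxter}. Taking $n\to\infty$ then recovers the full (convergent) Floquet series, whose convergence and solution property were already established in Proposition~\ref{Proposition-sigma-delta} via the two-sided rapid decay of $\mathfrak{q}^\pm_{\bm\de,j}$ (vanishing of the quantum Wronskian). Nothing further needs to be proved. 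Your argument, by contrast, reproves the solution property from scratch: you first observe that the $\mathrm{i}\hbar$-periodicity of the denominator $D(\lambda)$ makes $q^\pm_{\bm\de}$ solutions of the Baxter equation, then verify directly that $t(-\ii\hbar z\partial_z) - \Lambda^N(\ii^N z + \ii^{-N} z^{-1})$ annihilates the Mellin--Barnes integral via the contour shifts $s\mapsto s\pm 1$ and the induced three-term relation, controlling the pole-crossing errors by the same Wronskian-zero decay. Both proofs hinge on identical analytic inputs (super-exponential decay of $Q^\pm$ away from poles, two-sided decay of $\mathfrak{q}^\pm_{\bm\de,j}$ at the Hill zeros); the paper's version is shorter because it factors the verification through the already-proven Proposition~\ref{Proposition-sigma-delta}, while yours is more self-contained at the cost of the contour-deformation bookkeeping you flag at the end.

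One small slip to fix: you write that the contours "run off to $\Im s\to\pm\infty$," but per Figure~\ref{fig:chi-contours1} the staircases $L_n^{(0)}$ and $L_n^{(\infty)}$ run off to $\Re s\to -\infty$ and $\Re s\to +\infty$ respectively. Since $\lambda = -\ii\hbar s$ gives $\Im\lambda = -\hbar\Re s$ (for $\hbar$ real), the super-exponential decay of $Q^+$ as $\Im\lambda\to+\infty$ corresponds to $\Re s\to -\infty$, which is exactly the direction $L_n^{(0)}$ opens to; likewise for $Q^-$ and $L_n^{(\infty)}$. Your estimates are pointed in the right conceptual direction, but the phrase "decays exponentially in $|\Im s|$" and "run off to $\Im s\to\pm\infty$" should read $\Re s$.
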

\begin{figure}[ht!]
\centering

\begin{tikzpicture}[scale=0.7]


\draw [thick, ->] (-5,0) -- (5,0);
\draw [thick, ->] (0,-3) -- (0,4);
\draw (-5,-3) rectangle (5,4);

\draw [rounded corners=5pt, thick, RoyalBlue, decoration={markings, mark=at position 0.1 with {\arrow{<}}}, postaction={decorate}] (-4,-2.2) -- (2.1,-2.2) -- (2.1,-1.6) -- (-2.4,-1.6) -- (-2.4,0.2) -- (1,0.2) -- (1,1.7) -- (-4,1.7);
\draw [dashed, thick, RoyalBlue] (-4,-2.2) -- (-5,-2.2);
\draw [dashed, thick, RoyalBlue] (-4,1.7) -- (-5,1.7);

\begin{scope}
    \clip (-5,-3) rectangle (5,4);
    
    \foreach \i in {1,...,7}
        \draw [rounded corners=5pt, ultra thin, RoyalBlue, opacity=1/\i] (-4+\i,-2.2) -- (2.1+\i,-2.2) -- (2.1+\i,-1.6) -- (-2.4+\i,-1.6) -- (-2.4+\i,0.2) -- (1+\i,0.2) -- (1+\i,1.7) -- (-4+\i,1.7);
\end{scope}

\node at (-2,2.2) {\footnotesize $L_0^{(0)}$};

\foreach \i in {1,...,3}
    \node [scale=0.5, ultra thin, draw=RoyalBlue, draw opacity=1/\i, rounded corners, fill=white] at (1+\i,0.95) {$L_{\i}^{(0)}$};

\node at (4.75,3.75) {$s$};
\draw (4.5,4) -- (4.5,3.5) -- (5,3.5);

\foreach \i in {-2,...,7}
    \filldraw [gray] (-2.8+\i,-0.5) circle (1.5pt);
\foreach \i in {-5,...,4}
    \filldraw [gray] (0.6+\i,1.4) circle (1.5pt);
\foreach \i in {-6,...,3}
    \filldraw [gray] (1.8+\i,-1.9) circle (1.5pt);

\filldraw (-2.8,-0.5) circle (2pt) node [left] {$\sigma_1$};
\filldraw (0.6,1.4) circle (2pt) node [below] {$\sigma_2$};
\filldraw (1.8,-1.9) circle (2pt) node [left] {$\sigma_3$};

\draw [thick, ->] (6,0) -- (16,0);
\draw [thick, ->] (11,-3) -- (11,4);
\draw (6,-3) rectangle (16,4); 

\draw [rounded corners=5pt, thick, RoyalBlue, decoration={markings, mark=at position 0.95 with {\arrow{>}}},
        postaction={decorate}] (15,-2.2) -- (12.5,-2.2) -- (12.5,-0.8) -- (7.9,-0.8) -- (7.9,0.2) -- (11.3,0.2) -- (11.3,1.7) -- (15,1.7);
\draw [dashed, thick, RoyalBlue] (15,-2.2) -- (16,-2.2);
\draw [dashed, thick, RoyalBlue] (15,1.7) -- (16,1.7);

\begin{scope}
    \clip (6,-3) rectangle (16,4);

    \foreach \i in {1,...,6}
        \draw [rounded corners=5pt, ultra thin, RoyalBlue, opacity=1/\i] (15-\i,-2.2) -- (12.5-\i,-2.2) -- (12.5-\i,-0.8) -- (7.9-\i,-0.8) -- (7.9-\i,0.2) -- (11.3-\i,0.2) -- (11.3-\i,1.7) -- (15-\i,1.7);
\end{scope}

\node at (13,2.2) {\footnotesize $L_0^{(\infty)}$};

\foreach \i in {1,...,3}
    \node [scale=0.5, ultra thin, draw=RoyalBlue, draw opacity=1/\i, rounded corners, fill=white] at (11.3-\i,0.95) {$L_{\i}^{(\infty)}$};
    
\node at (15.75,3.75) {$s$};
\draw (15.5,4) -- (15.5,3.5) -- (16,3.5);

\foreach \i in {-2,...,7}
    \filldraw [gray] (8.2+\i,-0.5) circle (1.5pt);
\foreach \i in {-5,...,4}
    \filldraw [gray] (11.6+\i,1.4) circle (1.5pt);
\foreach \i in {-6,...,3}
    \filldraw [gray] (12.8+\i,-1.9) circle (1.5pt);

\filldraw (8.2,-0.5) circle (2pt) node [right] {$\sigma_1$};
\filldraw (11.6,1.4) circle (2pt) node [right] {$\sigma_2$};
\filldraw (12.8,-1.9) circle (2pt) node [right] {$\sigma_3$};

\end{tikzpicture}

\caption{Integration contours $L_n^{(0)}$ (left) and $L_n^{(\infty)}$ (right), shown here for $N=3$ in the complex $s$-plane. Each pole $\sigma_j$ is accompanied by an infinite family of poles at integer spacing, depicted in gray. The limit $n\to\infty$ amounts to encircling all poles.}
\label{fig:chi-contours1}
\end{figure}
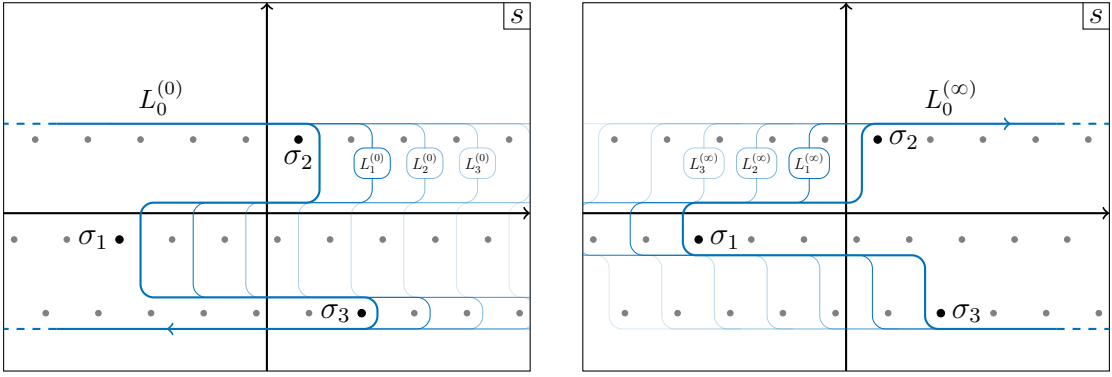
\begin{proof}
    Due to the rapid decay of the integrand along the integration contour, the integral along any $L_n^{(\infty)}$ can be evaluated as a sum over residues.
    This results in a linear combination of the $N$ Floquet solutions $\mathbf{F}^{(\infty)}$ with the principal parts truncated at order $-n$.
    Taking the limit $n\to\infty$ then amounts to summing over the infinite terms in the principal part.
    Convergence of the semi-infinite sum over residues for finite $n$, as well as the limit $n\to\infty$ is equivalent to convergence of the Floquet series as discussed in Section~\ref{sec:Floquet}, and follows from the vanishing of the quantum Wronskian.
\end{proof}
In Section~\ref{Section:canonical basis}, the possible exponential behavior of the solutions of the oper equation~\eqref{oper-z} was discussed (see equation~\eqref{alpha-beta}). 
We now prove that the oper solutions $\chi^{(\infty)}, \chi^{(0)}$ defined above exhibit the maximal possible decay (that is, associated with the root of unity $\alpha_0=1$).
\begin{propn}\label{maximal-decay-prop}
    The functions $\chi^{(0,\infty)}$ defined in eq.~\eqref{eq:Stokes-max} have the leading asymptotics
    \begin{subequations} \label{eq:chi-0-inf-asymptotics}
    \begin{align}
        \chi^{(0)}(w') &\sim \frac{(-1)^{N+1}}{(\pi\ii)^N}\sqrt{\frac{(2\pi)^{N-1}}{N}}e^{-N{w'}^{-1/N}}{w'}^\frac{N-1}{2N}\,,\; &\abs{w'}\to0\,,\quad w \in \mathscr{D}^{(0)} 
        \\
        \chi^{(\infty)}(w) &\sim \frac{1}{(\pi\ii)^N}\sqrt{\frac{(2\pi)^{N-1}}{N}}e^{-Nw^{1/N}}w^{-\frac{N-1}{2N}}\,,\; &\abs{w}\to\infty\,,\quad w \in \mathscr{D}^{(\infty)}
    \end{align}
    \end{subequations}
    with $w=(\Lambda/\hbar)^Nz$ and $w'=(\hbar/\Lambda)^Nz$ as usual, and
    \begin{subequations} \label{asympt-sectors-intermediate-basis}
    \begin{align}
    \mathscr{D}^{(0)} &\coloneqq \bigcup_{\theta\in \left(-\frac{\pi}{2}-\frac{\pi}{N},\frac{\pi}{2}+\frac{\pi}{N}\right)}
    \left\{w' \in \mathbb{C} \,\middle|\, \Re\left( N{w'}^{1/N}e^{\ii\theta} \right) < R^{-1}\right\}, \\
        \mathscr{D}^{(\infty)}&\coloneqq \bigcup_{\theta\in \left(-\frac{\pi}{2}-\frac{\pi}{N},\frac{\pi}{2}+\frac{\pi}{N}\right)}\left\{w \in \mathbb{C} \,\middle|\, \Re\left( Nw^{1/N}e^{\ii\theta} \right)> R \right\}.
    \end{align}
    \end{subequations}
\end{propn}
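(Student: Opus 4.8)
The plan is to convert the contour integrals~\eqref{eq:Stokes-max} into Mellin--Barnes integrals and read off their asymptotics by steepest descent. First I would simplify the integrand. Substituting $\lambda=-\ii\hbar s$ in~\eqref{eq:little q} and~\eqref{Q^-delta} and writing $\sigma_j=\tfrac{\ii}{\hbar}\delta_j$, one has $e^{-\pi\lambda/\hbar}\sinh\tfrac{\pi}{\hbar}(\lambda-\delta_j)=-\ii\,e^{\ii\pi s}\sin\pi(s-\sigma_j)$ and $Q^{-}_{\bm\delta}(-\ii\hbar s)=\big(\tfrac{\Lambda}{\hbar}\big)^{Ns}e^{\ii N\pi s}\,v_\downarrow(-\ii\hbar(s+1))\big/\prod_{k}\Gamma(1+s-\sigma_k)$. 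Applying the reflection formula $\sin\pi x=\pi/(\Gamma(x)\Gamma(1-x))$ together with $\Gamma(x+1)=x\Gamma(x)$, and observing that the factors $e^{\pm\ii N\pi s}$ cancel, the integrand collapses to
\[
q^{-}_{\bm\delta}(-\ii\hbar s)\,z^{s}=\frac{1}{(\pi\ii)^{N}}\,w^{s}\,v_\downarrow(-\ii\hbar(s+1))\prod_{j=1}^{N}\Gamma(\sigma_j-s)\,,\qquad w=\big(\tfrac{\Lambda}{\hbar}\big)^{N}z\,,
\]
and similarly $q^{+}_{\bm\delta}(-\ii\hbar s)\,z^{s}=\tfrac{(-1)^{N}}{(\pi\ii)^{N}}\,{w'}^{s}\,v_\uparrow(-\ii\hbar s)\prod_{j=1}^{N}\Gamma(s-\sigma_j)$ with $w'=\big(\tfrac{\hbar}{\Lambda}\big)^{N}z$. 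By~\eqref{v+},~\eqref{v-} one has $v_\downarrow(-\ii\hbar(s+1))\to1$ and $v_\uparrow(-\ii\hbar s)\to1$ as $|s|\to\infty$, so away from the saddle --- up to this slowly varying factor --- $\chi^{(\infty)}$ is essentially a generalized-Bessel/Meijer-type function and its large-$|s|$ structure is carried by a product of $N$ Gamma functions.

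Next I would locate the saddle. By Stirling, $\prod_j\Gamma(\sigma_j-s)\sim(2\pi)^{N/2}(-s)^{-Ns-N/2}e^{Ns}$ as $|s|\to\infty$ off the poles, where $\sum_j\sigma_j=0$ kills the would-be $(-s)^{\sum_j\sigma_j}$ correction. Writing the $\chi^{(\infty)}$-integrand as $\tfrac{1}{(\pi\ii)^{N}}g(s)e^{\Phi(s)}$ with $\Phi(s)=s\log w-Ns\log(-s)+Ns$ and $g(s)=(2\pi)^{N/2}(-s)^{-N/2}v_\downarrow(-\ii\hbar(s+1))$, the equation $\Phi'(s)=\log w-N\log(-s)=0$ has the $N$ roots $s=-\alpha_nw^{1/N}$, $n=0,\dots,N-1$, with $\Phi(s)=Ns=-N\alpha_nw^{1/N}$; the saddle relevant for $\chi^{(\infty)}$ --- the one onto which the integration contour can be pushed --- is $s_*=-w^{1/N}$, attached to the root of unity $\alpha_0=1$, and hence to maximal decay, with $\Phi''(s_*)=N/w^{1/N}$ and $g(s_*)\sim(2\pi)^{N/2}w^{-1/2}$. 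For $\chi^{(0)}$, the same computation with $\prod_j\Gamma(s-\sigma_j)$ gives $s_*=(w')^{-1/N}$, $\Phi(s_*)=-N(w')^{-1/N}$, $\Phi''(s_*)=N(w')^{1/N}$, and $g(s_*)\sim(2\pi)^{N/2}(w')^{1/2}$; these already account for the exponential and power prefactors in~\eqref{eq:chi-0-inf-asymptotics}.

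The third and main step is the contour manipulation. One first rewrites $\lim_{n}\int_{L^{(\infty)}_n}$ as an integral over a single contour $\mathcal C$: the (absolutely convergent) sum of residues collected by $L^{(\infty)}_n$ in the limit equals, up to orientation, $\int_{\mathcal C}$ with $\mathcal C$ keeping the poles at $s\in\sigma_j+\BZ$ on the appropriate sides --- those of $\prod_j\Gamma(\sigma_j-s)$ to its right, those of $v_\downarrow(-\ii\hbar(s+1))$ to its left --- the auxiliary closing arcs contributing nothing because $w^{s}\prod_j\Gamma(\sigma_j-s)$ decays factorially between consecutive poles. One then tilts $\mathcal C$ by an $\arg w$-dependent angle to keep the integral convergent and deforms it onto the path of steepest descent through $s_*$: in the central directions of $\mathscr D^{(\infty)}$, where $\Re(s_*)$ is large and negative, this crosses no poles, while in the remaining directions of $\mathscr D^{(\infty)}$ one keeps track of the finitely many residues swept past and checks that they are subdominant to $e^{\Phi(s_*)}$. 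The classical Laplace estimate
\[
\int_{\mathcal C}g(s)e^{\Phi(s)}\,\frac{ds}{2\pi\ii}\ \sim\ \frac{1}{2\pi\ii}\,g(s_*)\,e^{\Phi(s_*)}\sqrt{\frac{2\pi}{|\Phi''(s_*)|}}
\]
then gives, upon inserting $e^{\Phi(s_*)}=e^{-Nw^{1/N}}$, $g(s_*)\sim(2\pi)^{N/2}w^{-1/2}$, $\sqrt{2\pi/|\Phi''(s_*)|}=\sqrt{2\pi/N}\,w^{1/(2N)}$, and collecting powers of $w$ ($w^{-1/2}w^{1/(2N)}=w^{-(N-1)/(2N)}$) and of $2\pi$ ($(2\pi)^{N/2}(2\pi)^{1/2}/(2\pi)=(2\pi)^{(N-1)/2}$), precisely the first line of~\eqref{eq:chi-0-inf-asymptotics}, once the branch of the Gaussian factor and the orientation of the contour are pinned down. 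The same steps for $\chi^{(0)}$, with $L^{(0)}_n$ oriented oppositely to $L^{(\infty)}_n$, give the second line, the $(-1)^{N+1}$ being the product of the $(-1)^{N}$ in the prefactor of $q^{+}_{\bm\delta}(-\ii\hbar s)z^{s}$ and that change of orientation. The domains $\mathscr D^{(0,\infty)}$ of~\eqref{asympt-sectors-intermediate-basis} are precisely those on which the steepest-descent path exists and the tail bounds hold uniformly, giving the stated uniform asymptotics on their closed subsectors; in particular this identifies $\chi^{(0,\infty)}$, up to the displayed constant, with the maximal-decay canonical solutions $Y^{(0,\infty)}_{k,0}$ of Proposition~\ref{prop-canonical-bases}, and its full asymptotic expansion with the Gevrey-$1$ series of Lemma~\ref{lemma-gevrey}.

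The hard part is exactly this contour bookkeeping: passing rigorously from the limiting ``comb'' contour $\lim_n L^{(\infty)}_n$, which winds around infinitely many poles, to a contour that carries a uniform saddle-point estimate. This requires controlling the convergence of the intermediate Mellin--Barnes integrals (hence the $\arg w$-dependent tilt of $\mathcal C$), proving that the arcs between consecutive poles vanish, and bounding the residues crossed during the deformation in the non-central part of $\mathscr D^{(\infty)}$. The remaining ingredients --- uniform Stirling remainders, the Gaussian error term in the Laplace estimate, and the careful tracking of branches and the overall sign so that the numerical constant in~\eqref{eq:chi-0-inf-asymptotics} is reproduced exactly --- are routine but must be done with care.
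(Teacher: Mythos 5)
The saddle-point computation itself is correct: the simplification of the integrand to $\tfrac{1}{(\pi\ii)^N}w^{s}v_\downarrow(-\ii\hbar(s+1))\prod_j\Gamma(\sigma_j-s)$, the location $s_*=-w^{1/N}$, and the bookkeeping of the Gaussian prefactors all reproduce the constants in \eqref{eq:chi-0-inf-asymptotics}. This is the same mechanism that underlies the Meijer $G$ asymptotics the paper invokes. But the way you set up the contour deformation has a genuine gap.

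You claim that $\lim_{n}\int_{L_n^{(\infty)}}$ equals $\int_{\mathcal C}$ for a single contour $\mathcal C$ keeping the $\Gamma$-poles to its right and the $v_\downarrow$-poles to its left. That cannot be right: such a contour, closed to either side, collects only \emph{one} of the two families of poles, whereas the comb $\lim_n L_n^{(\infty)}$ collects the residues at \emph{all} of $\sigma_j+\BZ$ (both the $\Gamma$-poles at $s=\sigma_j+n$, $n\geq0$, and the $v_\downarrow$-poles at $s=\sigma_j-m$, $m\geq1$). There is no convergent fixed contour that reproduces the full sum, since the integrand grows super-exponentially as $\Re s\to-\infty$ along horizontal lines; the comb integral only makes sense as a limit of residue sums. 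Relatedly, the assertion that "in the central directions of $\mathscr{D}^{(\infty)}$\dots this crosses no poles" is false: the $v_\downarrow$-poles lie precisely between any Mellin--Barnes line and the saddle $s_*=-w^{1/N}$, and the number crossed grows like $w^{1/N}$, not finitely. Worse, the individual residues are not subdominant --- the residue at $s=\sigma_j-1$ is $\propto c^{(\infty)}_{j,1}w^{\sigma_j-1}$, a power of $w$ that dwarfs $e^{-Nw^{1/N}}$; the eventual exponential smallness of $\chi^{(\infty)}$ relies on a delicate cancellation between the $\Gamma$-pole contributions and the $v_\downarrow$-pole contributions, which your decomposition destroys.

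The paper avoids exactly this by first expanding $v_\downarrow$ in a Mittag--Leffler series (eq.~\eqref{eq:Mittag-Leffler}), turning each term of the residue sum into a genuine Meijer $G$-function $G^{N+1,0}_{1,N+1}$ whose single $v_\downarrow$-pole and all $\Gamma$-poles are enclosed by one convergent Hankel contour, so that the cancellation yielding $e^{-Nw^{1/N}}w^{-(N+1)/(2N)}$ is built in term by term. The hard analytic work is then proving uniform convergence of this series and that its sum stays subleading (the Weierstrass $M$-test and H\"older estimate in Appendix~B.2), after which the asymptotics follow from the known large-$w$ behaviour of the $G^{N,0}_{0,N}$ and $G^{N+1,0}_{1,N+1}$ pieces, and the sector $\mathscr{D}^{(0,\infty)}$ from the Borel-plane analysis of Appendix~C.3. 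Your underlying intuition --- that the result is a steepest-descent estimate of a Mellin--Barnes-type integral --- is right, and this is what the literature on Meijer $G$ asymptotics actually proves, but a direct saddle-point argument on the full $\chi^{(0,\infty)}$ as you propose it would have to be restructured along the lines of the Mittag--Leffler decomposition to be correct.
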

\begin{proof}
    The proof follows immediately from their uniform expansion in Meijer $G$-functions detailed in Appendix~\ref{app:G-expansion-maximaldecay}. The opening angle of the sector in which the asymptotics are valid follows from the Borel plane analysis explained in Appendix~\ref{range-of-asympt:section}.
\end{proof}
The explicit construction of these solutions in terms of $q$-functions is useful, as we may now apply known relations between $q^\pm_{\bm\de}$ in order to infer certain properties of $\chi^{(0,\infty)}$.
It is natural to ask when a solution may have maximal decay simultaneously at zero and infinity.
This leads us to the following corollary.

\begin{cor}\label{quantiz+decay=Corollary}
    There exists a solution to the oper equation with simultaneous maximal decay at zero and infinity if and only if the quantization conditions~(\ref{q-cond-delta}) are satisfied for $\boldsymbol\delta = -\ii\hbar\boldsymbol\sigma$.
\end{cor}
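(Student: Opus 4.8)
The plan is to evaluate the two contour integrals \rf{eq:Stokes-max} by residues, identify $\chi^{(0)}$ and $\chi^{(\infty)}$ as explicit linear combinations of the Floquet solutions \rf{eq:Floquet-from-Baxter}, and then read off exactly when these two solutions are proportional. The key structural observation is that the functions $q^{\pm}_{\bm{\de}}$ of \rf{eq:little q} share the \emph{same} denominator $D(\lambda)\coloneqq\prod_{j=1}^{N}e^{-\pi\lambda/\hbar}\sinh\tfrac{\pi}{\hbar}(\lambda-\delta_j)$, whose zeros are simple and located on $\delta_j+\ii\hbar\BZ$. First I would compute $r_j\coloneqq\operatorname{Res}_{\lambda=\delta_j-\ii\hbar m}D(\lambda)^{-1}$ and observe that the factors of $(-1)^m$ coming from the $N$ exponential factors, the $N-1$ remaining hyperbolic factors, and the differentiation combine to $(-1)^{2Nm}=1$, so that $r_j$ is independent of $m$. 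Collapsing the contours $L_n^{(0,\infty)}$ onto the poles at $s=\sigma_j+m$, and using $\sigma_j=\tfrac{\ii}{\hbar}\delta_j$ together with $\bm{\de}=-\ii\hbar\bm{\si}$, one obtains
\[
\chi^{(0)}(z)=c\sum_{j=1}^{N}r_j\,F^{(0)}_j(z)\,,\qquad
\chi^{(\infty)}(z)=c\sum_{j=1}^{N}r_j\,F^{(\infty)}_j(z)\,,
\]
with a common nonzero constant $c$ and, crucially, the \emph{same} coefficients $r_j$ in both sums, since the only difference between $q^{+}_{\bm{\de}}$ and $q^{-}_{\bm{\de}}$ is the numerator $Q^{\pm}_{\bm{\de}}$, and the Laurent coefficients of $F^{(0,\infty)}_j$ are precisely these numerators evaluated at $\delta_j-\ii\hbar m$. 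Under the standing genericity assumption that the $\sigma_j$ are distinct modulo $\BZ$ (already needed for $\{F^{(0)}_j\}$ and $\{F^{(\infty)}_j\}$ to be bases and for $D$ to have simple zeros), one has $r_j\neq0$ for every $j$.

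The second ingredient is the proportionality $F^{(0)}_j=\zeta_j(\bm{\de},\Lambda)\,F^{(\infty)}_j$, obtained by comparing the Laurent coefficients in \rf{eq:Floquet-from-Baxter} with \rf{zetaj-def}; equivalently it expresses the vanishing of the quantum Wronskian at $\lambda=\delta_j$, which is exactly what makes the Floquet series converge. Substituting this into the first formula gives $\chi^{(0)}=c\sum_j r_j\zeta_j\,F^{(\infty)}_j$, so the coordinate vectors of $\chi^{(0)}$ and $\chi^{(\infty)}$ with respect to the basis $\{F^{(\infty)}_j\}_{j=1}^{N}$ are $c\,(r_j\zeta_j)_{j}$ and $c\,(r_j)_{j}$. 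Since all $r_j\neq0$, these are proportional if and only if $\zeta_j$ does not depend on $j$, which is precisely the quantization condition \rf{q-cond-delta} with $\bm{\de}=-\ii\hbar\bm{\si}$. In particular, when \rf{q-cond-delta} holds we get $\chi^{(0)}=\zeta\,\chi^{(\infty)}$ with $\zeta=\zeta_1=\dots=\zeta_N$, so the single solution $\chi^{(0)}$ — which decays maximally at $z=0$ by Proposition~\ref{maximal-decay-prop} — also decays maximally at $z=\infty$; this settles the ``if'' implication.

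For the converse I need that ``maximal decay'' determines the solution up to scale: the space of solutions of \rf{oper-z} decaying like $e^{-N{w'}^{-1/N}}{w'}^{\frac{N-1}{2N}}$ at $z=0$ (resp.\ like $e^{-Nw^{1/N}}w^{-\frac{N-1}{2N}}$ at $z=\infty$) is one-dimensional, hence spanned by $\chi^{(0)}$ (resp.\ $\chi^{(\infty)}$). I expect this to be the main obstacle. It should follow from the Borel--Laplace picture of Section~\ref{Section:canonical basis}: the asymptotics \rf{eq:chi-0-inf-asymptotics} hold on the sectors \rf{asympt-sectors-intermediate-basis}, whose opening in the variable $y\propto z^{1/N}$ (in which the singularity has Poincar\'e rank one) exceeds $\pi$, and a recessive solution of such a singularity is unique up to a constant on any sector of opening larger than $\pi$ — after matching leading coefficients, the difference of two solutions with the same maximal-decay asymptotics has an expansion subordinate to $e^{-y}y^{-(N-1)/2}$ throughout a sector wide enough to contain a Stokes direction on which a different exponential $e^{-\alpha_n y}$ becomes dominant, which forces the difference to vanish. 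Granting this, any solution with simultaneous maximal decay at $0$ and $\infty$ is proportional to both $\chi^{(0)}$ and $\chi^{(\infty)}$, hence $\chi^{(0)}\propto\chi^{(\infty)}$, and the coefficient comparison above yields \rf{q-cond-delta} for $\bm{\de}=-\ii\hbar\bm{\si}$. This completes the proof.
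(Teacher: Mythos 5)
Your proof is correct and follows essentially the same route as the paper's: the paper's one-paragraph argument reduces simultaneous maximal decay to $\chi^{(0)}\propto\chi^{(\infty)}$ (via uniqueness of the recessive solution), and then to proportionality of the residues of $q^{+}_{\bm\de}$ and $q^{-}_{\bm\de}$, which by \rf{eq:little q} is exactly the quantization condition \rf{q-cond-delta}. You spell out the same chain of equivalences in more detail — collapsing the contour onto the poles, observing that the residues $r_j$ of $D^{-1}$ are $m$-independent and nonzero, identifying the coefficient vectors $(r_j\zeta_j)_j$ and $(r_j)_j$ in the Floquet basis, and explicitly flagging the uniqueness of the maximally decaying solution as the nontrivial ingredient with a Borel–Laplace justification — but the structure and key observations coincide with the paper's.
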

\begin{proof}
    Since maximal decay at one end uniquely specifies a solution to the oper equation, the statement of simultaneous maximal decay is equivalent to $\chi^{(0)}\,\propto\,\chi^{(\infty)}$.
    From the integral representation~\eqref{eq:Stokes-max} it follows in turn that this is equivalent to the proportionality of the residues of $q^+_{\bm\de}$ and $q^-_{\bm\de}$. Recalling~(\ref{eq:little q}), this condition is equivalent to the quantization conditions~\eqref{q-cond-delta}.
\end{proof}

\subsection{Analytic Spectral Duality}

As noted in Section~\ref{oper-section}, the Baxter equation~\eqref{Baxter-equation} and the oper equation~\eqref{oper-z} are related by a formal Fourier transform (using the variable $x=\log z$). 
Generically, this cannot be extended to an analytic Fourier transform of solutions to these equations
without severe modifications of the integration contour, as in Figure~\ref{fig:chi-contours1}.
Validity of the quantization conditions will turn out to be sufficient, and probably also necessary
for the possibility to relate solutions of the Baxter equation to solutions of the oper equation 
by a Fourier-transformation. We may prove, on the one hand, the following result.

\begin{propn}\label{Fourier-transf-prop}
    The unique (up to normalization) maximally decaying solution $\chi^{(0)}(z)$ to the oper equation can be expressed as the Fourier transform
    \begin{equation}
        \chi^{(0)}(x) = -\int_{-\infty}^\infty\frac{d\lambda}{2\pi\hbar}\,q(\lambda)e^\frac{\ii x\lambda}{\hbar}\,,
    \end{equation}
    where $x=\log z$, and $q(\lambda)=q^+_{\bm\de}(\lambda)-\zeta q^-_{\bm\de}(\lambda)$ is a solution to the Baxter equation satisfying the quantization conditions.
\end{propn}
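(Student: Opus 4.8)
The plan is to evaluate the Fourier integral on the right-hand side explicitly and to show that, after one contour deformation and an application of the quantization conditions, it collapses to the residue sum that defines $\chi^{(0)}$ in~\eqref{eq:Stokes-max}. Assume throughout that~\eqref{q-cond-delta} holds, so that $\zeta\coloneqq\zeta_j(\bm\de,\Lambda)$ is independent of $j$ and $q(\lambda)=q^+_{\bm\de}(\lambda)-\zeta\,q^-_{\bm\de}(\lambda)$ is an entire solution of~\eqref{Baxter-equation} obeying the uniform decay~\eqref{q-asym} on a strip about $\BR$; in particular the integral $-\int_{\BR}\tfrac{d\lambda}{2\pi\hbar}\,q(\lambda)e^{\ii x\lambda/\hbar}$ converges absolutely and is holomorphic in $x$ on $\{\,|\Im x|<\tfrac{N\pi}{2}\,\}$. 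Writing $z=e^{x}$ and substituting $\lambda=-\ii\hbar s$, so that $e^{\ii x\lambda/\hbar}=z^{s}$, the right-hand side becomes, up to the Jacobian $-\ii\hbar$, an integral of $q(-\ii\hbar s)\,z^{s}$ along the imaginary $s$-axis --- the object that has to be matched with~\eqref{eq:Stokes-max}.

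I would then split $q=q^+_{\bm\de}-\zeta\,q^-_{\bm\de}$ --- legitimate after an innocuous indentation of $\BR$, the combination being pole-free --- and, for $z$ in a sector around $z\to 0$, deform the $q^+_{\bm\de}$-part of the $\lambda$-contour towards $\Im\lambda\to+\infty$ and the $q^-_{\bm\de}$-part towards $\Im\lambda\to-\infty$. By the estimates recalled at the end of Section~\ref{sec:Baxter-det}, $Q^+_{\bm\de}$ (resp.\ $Q^-_{\bm\de}$) decays super-exponentially in these directions, and on the $q^+_{\bm\de}$-side this decay dominates the growing companion factor $e^{\ii x\lambda/\hbar}$ as $z\to0$; hence the ``ends'' of the deformed contour drop out and one is left with a convergent sum of residues at $\lambda\in\delta_j+\ii\hbar\BZ$, $j=1,\dots,N$. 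Evaluating these --- the common denominator $\prod_k e^{-\pi\lambda/\hbar}\sinh\tfrac\pi\hbar(\lambda-\delta_k)$ of~\eqref{eq:little q} contributing the same factor at every pole of a given family, the $m$-dependent signs cancelling --- and using the identity $Q^+_{\bm\de}(\delta_j+\ii\hbar m)=\zeta\,Q^-_{\bm\de}(\delta_j+\ii\hbar m)$, which holds for \emph{all} $m\in\BZ$ precisely when~\eqref{q-cond-delta} is satisfied (see~\eqref{zetaj-def}), the residues of the $q^-_{\bm\de}$-part can be re-expressed through $q^+_{\bm\de}$; the ``upper'' and ``lower'' families then merge into the sum over \emph{all} poles of $q^+_{\bm\de}(-\ii\hbar s)\,z^{s}$, which by~\eqref{eq:Stokes-max} equals $\chi^{(0)}(z)$, the Jacobian of $\lambda=-\ii\hbar s$ and the orientation of the contours $L_n^{(0)}$ accounting for the precise constants. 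That the residue series converges and may be exchanged with the limit $n\to\infty$ is the same statement as the absolute convergence of the Floquet series in Proposition~\ref{Proposition-sigma-delta}, which rests on the vanishing of the quantum Wronskian. As a cross-check, the oper equation can also be obtained directly: since $q$ is entire and decays in horizontal strips, translating the $\BR$-contour by $\pm\ii\hbar$ turns multiplication by $\lambda$ into $-\ii\hbar\partial_x$ and the shifts $q(\lambda\pm\ii\hbar)$ into multiplication by $e^{\mp x}$, so that~\eqref{Baxter-equation} becomes~\eqref{oper-x}.

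The crux is the contour analysis: one must establish the uniform-in-$\Re\lambda$ super-exponential decay of $q^\pm_{\bm\de}(\lambda)$ as $\Im\lambda\to\pm\infty$ with enough strength to beat the growing companion factor as the contour is pushed out, and to pin down the exact sector of $z$ in which the deformation --- hence the representation --- is valid; this is the origin of the opening angle $\big(-\tfrac\pi2-\tfrac\pi N,\tfrac\pi2+\tfrac\pi N\big)$ of the domain $\mathscr{D}^{(0)}$ in Proposition~\ref{maximal-decay-prop}. The quantization conditions~\eqref{q-cond-delta} enter twice and essentially: they render $q$ entire, so that the Fourier integral over $\BR$ is defined and the splitting meaningful, and the ensuing lattice identity $Q^+_{\bm\de}=\zeta\,Q^-_{\bm\de}$ on $\delta_j+\ii\hbar\BZ$ is exactly what fuses the two deformed residue families into the single maximally decaying solution $\chi^{(0)}$; without them the two pieces would instead reassemble into distinct Floquet solutions and no Fourier representation of one decaying oper solution would be available.
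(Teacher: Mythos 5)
Your argument is essentially the paper's: after the substitution $\lambda=-\ii\hbar s$ you split $q=q^+_{\bm\de}-\zeta q^-_{\bm\de}$, deform the two contributions into Hankel contours opening into opposite half-planes, invoke the residue identity along the lattices $\delta_j+\ii\hbar\BZ$ (equivalent to~\eqref{q-cond-delta}) to fuse the two residue families into the single sum over all poles of $q^+_{\bm\de}(-\ii\hbar s)z^s$, and recognize that sum as $\chi^{(0)}$ via~\eqref{eq:Stokes-max}. The only slip is in your closing ``cross-check'': under the kernel $e^{\ii x\lambda/\hbar}$ the shift $q(\lambda\pm\ii\hbar)$ produces multiplication by $e^{\pm x}$, not $e^{\mp x}$ (this matters for $N$ odd, but is peripheral to the actual proof).
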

\begin{proof}
    Consider the following pair of integrals
    \begin{equation}
        I_\pm(z) = \int_{-\ii\infty}^{\ii\infty}\frac{ds}{2\pi\ii}\,q_{\bm\de}^\pm(-\ii\hbar s)z^s\,,
    \end{equation}
which converge when $\abs{\arg(\Lambda/\hbar)^{\pm N}z}<N\pi/2$ assuming w.l.o.g. that there are no poles on the imaginary line, that is $\Re\sigma_j\notin\BZ$ for all $j=1,\dots,N$ (otherwise simply deform the contour slightly away from the poles and move it back at the end).
Due to the super-exponential decay of $q_{\bm\de}^\pm(-\ii\hbar s)$ away from their poles and a strip around $s\to\pm\infty$ the vertical contours may be deformed to Hankel contours encircling all poles in the left/right half-plane respectively.\footnote{This is analogous to the two equivalent contours for Meijer $G$-functions $G^{N,0}_{0,N}$~\cite{Luke1969}.} 
The integrals $I_\pm$ may therefore be evaluated as sums over residues, with $I_+$ only picking up the poles in the left half plane, and $I_-$ only picking up the poles in the right half plane.
Since the quantization conditions are equivalent to the relation $\Res_{s=\frac{\ii}{\hbar}\delta_j+n}q_{\bm\de}^+(-\ii\hbar s) = \zeta\Res_{s=\frac{\ii}{\hbar}\delta_j+n}q_{\bm\de}^-(-\ii\hbar s)$ for all $j=1,\dots,N$, we see that
\begin{align}
    \chi^{(0)}(z) &= -\sum_{j=1}^N\sum_{n\in\BZ} \Res_{s=\frac{\ii}{\hbar}\delta_j+n}q_{\bm\de}^+(-\ii\hbar s) z^s = \zeta I_-(z) - I_+(z) \nonumber\\
    &= \int_{-\ii\infty}^{\ii\infty}\frac{ds}{2\pi\ii}\left[ \zeta q_{\bm\de}^-(-\ii\hbar s)-q_{\bm\de}^-(-\ii\hbar s) \right]z^s = -\int_{-\ii\infty}^{\ii\infty}\frac{ds}{2\pi\ii}\,q(-\ii\hbar s)z^s\,.
\end{align}
The result then follows after the changes of variables to $\lambda=-\ii\hbar s$ and $x=\log z$.
\end{proof}

Concerning a possible converse to this statement, let us make the following observations. 
Solutions $q(\lambda)$ to the Baxter equation which are entire will generically be unbounded along the integration contour, spoiling the convergence of the Fourier transform.
For generic choice of $\de_1,\dots,\de_N$ 
there exist solutions $q(\lambda)$  to the Baxter equation \rf{Baxter-equation}
which are decaying on a strip around the real line, but which 
will generically have poles at $\lambda=\delta_j$ for some $j=1,\dots,N$. This will  imply that its Fourier transform
will not solve the oper equation.
Imposing both the condition to be entire  and to have rapid decay simultaneously 
amounts to imposing the quantization conditions~\eqref{q-asym}.
On the oper side it is furthermore clear that generic solutions can not be obtained as ordinary Fourier transforms of $L^1$ functions, as they are unbounded, contradicting the Riemann-Lebesgue lemma.\footnote{This explains the necessity of taking a  limit in the integral representation~\eqref{eq:Stokes-max}, which allows us to avoid the aforementioned obstructions.} It therefore seems likely that an analytic implementation of spectral duality 
will require that the quantisation conditions hold. 

\subsection{Connection Matrix and Quantization Conditions}

Since the connection matrix is diagonal in the Floquet basis, the solution to the connection problem amounts to computing the change of basis between the Floquet and canonical bases. 
\begin{lem}
    The change of basis matrices $C_0^{(0,\infty)}(\bm\sigma)$ defined in eq.~\eqref{eq:F-to-Y-def} take the form
    \begin{equation}
        C^{(0)}(\bm\sigma) = C^{(\infty)}(\bm\sigma) \equiv \mathsf{C}(\bm\si) = \mathsf{D}(\bm\sigma)\mathsf{V}(\bm\sigma)\mathsf{M}(\bm\sigma)^{-1}\,,
    \end{equation}
    with
    \begin{equation}
        \mathsf{D}(\bm\sigma) = \ii^{N+1}\sqrt{\frac{N(\pi\Lambda)^{N-1}}{(2\hbar)^{N-1}}}\diag\left( \frac{e^{-\pi\ii(2\floor{N/2}+N)\sigma_1}}{\prod_{j=2}^N\sin\pi(\sigma_1-\sigma_j)},\dots,\frac{e^{-\pi\ii(2\floor{N/2}+N)\sigma_N}}{\prod_{j=1}^{N-1}\sin\pi(\sigma_N-\sigma_j)} \right).
    \end{equation}
    The matrix $\mathsf{V}(\bm\sigma)$ takes a (reflected) Vandermonde form
    \begin{equation}
        \mathsf{V}(\boldsymbol\sigma) =
        \begin{pmatrix}
            \Sigma_1^{N-1} & \Sigma_1^{N-2} & \dots & \Sigma_1^0 \\ 
            \Sigma_2^{N-1} & \Sigma_2^{N-2} & \dots & \Sigma_2^0 \\
            \vdots & \vdots && \vdots \\
            \Sigma_N^{N-1} & \Sigma_N^{N-2} & \dots & \Sigma_N^0
        \end{pmatrix},
    \end{equation}
    and $\mathsf{M}(\bm\sigma)$ can be computed in components in terms of the canonical monodromy matrix $M_0$ given in eq.~\eqref{eq:canonical-monodromy} as
    \begin{equation}
        [\mathsf{M}(\bm\sigma)]_{n,k} = \left[ M_0^{\ceil{N/2}-k} \right]_{n,0}\,,\qquad n=0,\dots,N-1\,,\qquad k=1,\dots,N\,.
    \end{equation}
\end{lem}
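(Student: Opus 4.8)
The plan is to exploit the fact that the two bases linked by $C^{(0,\infty)}(\bm\sigma)$ transform in a controlled way under transport around $\gamma$. Combining the defining relation~\eqref{eq:F-to-Y-def} with the transformation laws $\mathbf{F}^{(0,\infty)}(ze^{2\pi\ii})=\mathbf{F}^{(0,\infty)}(z)\diag(\bm\Sigma)$ (diagonal monodromy of the Floquet basis) and $\mathbf{Y}^{(0,\infty)}_0(ze^{2\pi\ii})=\mathbf{Y}^{(0,\infty)}_0(z)M_0$, where by the Lemma of Section~\ref{sec:Y^0} the matrix $M_0$ of~\eqref{eq:canonical-monodromy} is the \emph{same} for both punctures, one obtains the intertwining relation $\diag(\bm\Sigma)\,C^{(0,\infty)}=C^{(0,\infty)}M_0$. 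Hence every row of $C^{(0,\infty)}$ is a left $\Sigma_j$-eigenvector of $M_0$; for generic $\bm\sigma$ the $\Sigma_j=e^{2\pi\ii\sigma_j}$ are pairwise distinct, so these rows are determined up to a scalar, i.e. $C^{(0,\infty)}=\mathsf{D}^{(0,\infty)}(\bm\sigma)\,\mathsf{R}(\bm\sigma)$ for a diagonal matrix $\mathsf{D}^{(0,\infty)}$ and any fixed $\mathsf{R}$ whose $j$-th row is such an eigenvector; the non-generic locus (coincident $\Sigma_j$) is dealt with afterwards by continuity, since both sides are holomorphic in $\bm\sigma$.

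The second step identifies $\mathsf{R}=\mathsf{V}\mathsf{M}^{-1}$. By Proposition~\ref{propn:Stokes-constants} the characteristic polynomial of $M_0$ equals $\prod_{j=1}^N(\lambda-\Sigma_j)=\sum_{i=0}^N(-1)^i e_i(\bm\Sigma)\lambda^{N-i}$, and the near-companion staircase shape of~\eqref{eq:canonical-monodromy} makes $e_0=(1,0,\dots,0)^{\mathsf T}$ a cyclic vector, so $\mathsf{M}$, whose columns are $M_0^{\ceil{N/2}-k}e_0$ with $k=1,\dots,N$, is invertible. Writing Cayley--Hamilton as $M_0^N=\sum_{i=1}^N(-1)^{i+1}e_i(\bm\Sigma)M_0^{N-i}$ and multiplying by $M_0^{\ceil{N/2}-N}$ shows that $M_0$ returns the first column of $\mathsf{M}$ into the column span while merely shifting the others, so $\mathsf{M}^{-1}M_0\mathsf{M}$ is the companion matrix of $\prod_j(\lambda-\Sigma_j)$. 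A short check then verifies that the reflected-Vandermonde rows $(\Sigma_j^{N-1},\dots,\Sigma_j^0)$ of $\mathsf{V}$ are its left eigenvectors: the conditions from columns $k\ge 2$ read $\Sigma_j^{N-k+1}=\Sigma_j\cdot\Sigma_j^{N-k}$, and the one from the first column reads $\sum_{i=1}^N(-1)^{i+1}e_i(\bm\Sigma)\Sigma_j^{N-i}=\Sigma_j^N$, which is exactly $\prod_k(\Sigma_j-\Sigma_k)=0$ evaluated at $\lambda=\Sigma_j$. Therefore $\mathsf{V}\mathsf{M}^{-1}M_0=\diag(\bm\Sigma)\mathsf{V}\mathsf{M}^{-1}$, and $C^{(0,\infty)}=\mathsf{D}^{(0,\infty)}(\bm\sigma)\,\mathsf{V}(\bm\sigma)\,\mathsf{M}(\bm\sigma)^{-1}$.

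It remains to pin down $\mathsf{D}^{(0,\infty)}$, which I would do by matching the single column $n=0$ of $C^{(0,\infty)}$. On the one hand $M_0^0e_0=e_0$ is precisely the $k=\ceil{N/2}$ column of $\mathsf{M}$, so the $n=0$ column of $\mathsf{M}^{-1}$ is the standard vector $e_{\ceil{N/2}}$, giving $(\mathsf{V}\mathsf{M}^{-1})_{j,0}=\mathsf{V}_{j,\ceil{N/2}}=\Sigma_j^{\floor{N/2}}=e^{2\pi\ii\floor{N/2}\sigma_j}$. On the other hand, this column of $C^{(0,\infty)}$ is the Floquet-basis expansion of the maximally decaying canonical solution $Y^{(0,\infty)}_{0,0}$; by uniqueness of the maximally decaying solution (cf. the proof of Corollary~\ref{quantiz+decay=Corollary}) and Proposition~\ref{maximal-decay-prop} one has $Y^{(\infty)}_{0,0}=\kappa\,\chi^{(\infty)}$ and $Y^{(0)}_{0,0}=\kappa\,\chi^{(0)}$ with one and the same constant $\kappa$ (the $(-1)^{N+1}$ distinguishing the two asymptotics in~\eqref{eq:chi-0-inf-asymptotics} cancels the $\ii^{N+1}$ versus $(-\ii)^{N+1}$ of~\eqref{Y-asymptotics}), while evaluating the residues in~\eqref{eq:Stokes-max} through~\eqref{eq:little q} and~\eqref{eq:Floquet-from-Baxter} yields, for both punctures,
\begin{equation*}
\chi^{(0,\infty)}(z)=\sum_{j=1}^N\frac{c\,e^{-\ii N\pi\sigma_j}}{\prod_{k\neq j}\sin\pi(\sigma_j-\sigma_k)}\,F^{(0,\infty)}_j(z)\,,
\end{equation*}
with an explicit $j$-independent constant $c$. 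Dividing these two expressions then gives $\mathsf{D}^{(0)}=\mathsf{D}^{(\infty)}$, with the phase collapsing to $e^{-\pi\ii(2\floor{N/2}+N)\sigma_j}$ as claimed; the remaining $\Lambda$- and $\hbar$-dependent prefactor $\ii^{N+1}\sqrt{N(\pi\Lambda)^{N-1}/(2\hbar)^{N-1}}$ appears once the Baxter-built Floquet solutions~\eqref{eq:Floquet-from-Baxter} are re-expressed in the normalization~\eqref{Floquet-gen} fixed by the prescription for $N^{(0,\infty)}_j(\bm\sigma)$ and the changes of variables to $w_{0,\infty}$ and $y=N(\Lambda/\hbar)z^{\pm 1/N}$ are carried out.

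The algebraic skeleton --- the intertwining relation of the first step and the identification of $\mathsf{M}$ as the matrix conjugating $M_0$ to companion form --- is short and robust. The main obstacle is the bookkeeping in the third step: propagating all the branch-cut and normalization conventions (the $z^{\sigma_j}$ phases, the normalizations $N^{(0,\infty)}_j$, the conventions $\ii^{N+1}$ and $\alpha_n^{-(N-1)/2}$ of~\eqref{Y-asymptotics}, and the asymptotic constants of Proposition~\ref{maximal-decay-prop}) through the residue evaluation so as to recover the closed forms of $\mathsf{D}$ and $\mathsf{M}$ verbatim. A secondary point is checking, via the continuity argument above, that the identity survives on the locus where two or more of the $\Sigma_j$ collide, where the single-eigenvector normalization degenerates.
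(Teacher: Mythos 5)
Your proof is correct, but it takes a genuinely different route from the paper's. The paper proceeds by composing two \emph{explicit} change-of-basis matrices: first $\mathbf{Y}^{(0,\infty)}_0 = c^{-1}\bm\chi^{(0,\infty)}\mathsf{M}^{-1}$, obtained from $\chi^{(0,\infty)} = c\,Y^{(0,\infty)}_{0,0}$ together with the monodromy expansion of the rotated solutions $\chi^{(0,\infty)}_k(z)=\chi^{(0,\infty)}(ze^{2\pi\ii(\ceil{N/2}-k)})$; and second $\bm\chi^{(0,\infty)} = \mathbf{F}^{(0,\infty)}\mathsf{W}(\bm\sigma)$, computed for \emph{all} $k$ by residue evaluation of the integral representation~\eqref{eq:Stokes-max}, which is where the Vandermonde factor $\mathsf{V}$ appears. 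You instead derive the intertwining relation $C M_0 = \diag(\bm\Sigma)\,C$ from the two monodromy actions, observe that this forces the rows of $C$ to be left eigenvectors of $M_0$ and hence $C = \mathsf{D}\,\mathsf{R}$ with $\mathsf{D}$ diagonal and $\mathsf{R}$ any fixed eigenvector matrix, identify $\mathsf{R}=\mathsf{V}\mathsf{M}^{-1}$ via Cayley--Hamilton and conjugation to companion form, and finally pin down the single diagonal factor $\mathsf{D}$ by matching only the $n=0$ column against the residue expansion of $\chi^{(0,\infty)}$. Both proofs rely on the same two analytic inputs (uniqueness of the maximally decaying solution and the residue evaluation of~\eqref{eq:Stokes-max}), but your algebraic packaging trades the paper's explicit computation of all $N$ columns for an abstract eigenvector argument plus a one-column match. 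What your route buys is economy and a structural explanation of \emph{why} a Vandermonde-like factor must appear (it is forced by simultaneous diagonalization); what it costs is the need to separately verify that $e_0$ is a cyclic vector for $M_0$ (so $\mathsf{M}$ is invertible) and to dispose of the degenerate locus where eigenvalues collide by analyticity/continuity, both of which you do address. One small point you should state explicitly when writing this up: the constant $c$ in your Floquet expansion of $\chi^{(0,\infty)}$ is $-\ii^N/\pi$ (as follows from the residues of~\eqref{eq:little q}), and combining it with the constant $\kappa$ from Proposition~\ref{maximal-decay-prop} and~\eqref{Y-asymptotics} reproduces the prefactor $\ii^{N+1}\sqrt{N(\pi\Lambda)^{N-1}/(2\hbar)^{N-1}}$, which is exactly the bookkeeping you flagged as the main residual work.
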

\begin{proof}
    Let us first consider the following intermediate bases
    \begin{equation} \label{eq:Stokes-basis-inf-def}
        \chi^{(0,\infty)}_k(z) \coloneqq \chi^{(0,\infty)}\left( ze^{2\pi\ii(\ceil{N/2}-k)} \right),\qquad k=1,\dots,N\,.
    \end{equation}
    From eq.~\eqref{eq:chi-0-inf-asymptotics} it is clear that the solution $\chi^{(\infty)}_k$ is asymptotic to the canonical solution $Y^{(\infty)}_{0,n}$ with $n=\ceil*{\tfrac{N}{2}}-k\mod N$ in a sector containing $\arg w=0$ or $\arg w'=0$ respectively at infinity or zero, up to an overall normalization factor.
    The same relation evidently holds between $\chi^{(0)}_k$ and $Y^{(0)}_{0,n}$ after substituting $w$ with $w'$.
    In particular this means that the solutions are linearly independent and therefore form two bases associated to infinity and zero respectively. 
    Furthermore, since the solution with maximal decay along $w\to+\infty$ or $w'\to0^+$ is unique, it follows that
    \begin{equation}
        \chi^{(0,\infty)}(z) = \frac{(-1)^{N+1}\ii}{\pi^N}\sqrt{\frac{(2\pi\hbar)^{N-1}}{N\Lambda^{N-1}}}Y^{(0,\infty)}_{0,0}(z)\,.
    \end{equation}
    The change of basis between the canonical bases $\mathbf{Y}^{(0,\infty)}_0$ and the intermediate bases $\bm\chi^{(0,\infty)}$ is therefore given by
    \begin{equation}
        \chi^{(0,\infty)}_k(z) = \frac{(-1)^{N+1}\ii}{\pi^N}\sqrt{\frac{(2\pi\hbar)^{N-1}}{N\Lambda^{N-1}}}\sum_{n=0}^{N-1}Y^{(0,\infty)}_{0,n}\left[ M_0^{\ceil{N/2}-k} \right]_{n,0}
    \end{equation}
    where $M_0$ is the monodromy matrix in the canonical basis $\mathbf{Y}^{(0,\infty)}_0$ given in eq.~\eqref{eq:canonical-monodromy}.
    It follows that we can write
    \begin{equation}
        \mathbf{Y}^{(0,\infty)}_0(z) = \ii(-\pi)^N\sqrt{\frac{N\Lambda^{N-1}}{(2\pi\hbar)^{N-1}}}\bm\chi^{(0,\infty)}(z)\mathsf{M}(\bm\si)^{-1}\,.
    \end{equation}
    What remains is then to compute the expansion of the intermediate basis $\bm\chi^{(0,\infty)}$ in terms of the Floquet bases $\mathbf{F}^{(0,\infty)}$.
    This can be easily done by observing that the integral representation~\eqref{eq:Stokes-max} can be recast as a sum over Floquet solutions by isolating the contribution from each infinite row of poles $\sigma_j+\BZ$.
    Since the monodromy in the Floquet basis is explicit, it easily follows that
    \begin{equation} \label{eq:chi-to-F}
        \chi^{(0,\infty)}_k(z) = -\frac{\ii^N}{\pi}\sum_{l=1}^N e^{2\pi\ii(\ceil{N/2}-k)\sigma_l}\frac{F^{(0,\infty)}_l(z)}{e^{N\pi\ii\sigma_l}\prod_{\substack{j=1\\j\neq l}}^N \sin\pi(\sigma_l-\sigma_j)}\,.
    \end{equation}
    It is useful to note that this gives a Vandermonde matrix up to an overall rescaling, that is defining $\bm\chi^{(0,\infty)} = \mathbf{F}^{(0,\infty)}\mathsf{W}(\bm\sigma)$ we have\footnote{Note that both the independence on $\Lambda$ as well as the choice of puncture are non-obvious, and need not hold in general. It is essentially a special feature of our definition of $\bm\eta$.}
    \begin{equation} \label{W-V-matrix}
        \mathsf{W}(\boldsymbol\sigma) = -\frac{\ii^N}{\pi}\diag\left( \frac{e^{-\pi\ii(2\floor{N/2}+N)\sigma_1}}{\prod_{j=2}^N\sin\pi(\sigma_1-\sigma_j)},\dots,\frac{e^{-\pi\ii(2\floor{N/2}+N)\sigma_N}}{\prod_{j=1}^{N-1}\sin\pi(\sigma_N-\sigma_j)} \right) \mathsf{V}(\boldsymbol\sigma)\,.
    \end{equation}
    Finally, composing these two transitions gives the expected result. 
\end{proof}
\begin{cor}\label{E-matrix-cor-quantization}
    The connection matrix $E(\bm\sigma)$ between the canonical bases $\mathbf{Y}^{(0,\infty)}_0$ is given by
    \begin{equation}
        E(\bm\sigma) = C^{(\infty)}(\bm\sigma)^{-1}T(\bm\sigma)C^{(0)}(\bm\sigma) = \mathsf{M}(\bm\sigma)\mathsf{V}(\bm\sigma)^{-1}T(\bm\sigma)\mathsf{V}(\bm\sigma)\mathsf{M}(\bm\sigma)^{-1}\,.
    \end{equation}
    Furthermore, the quantization conditions~(\ref{q-cond-delta}) are satisfied if and only if $E(\bm\sigma)\,\propto\,\mathds{1}$.
\end{cor}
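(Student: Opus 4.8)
The plan is to obtain the formula for $E(\bm\sigma)$ by composing the three transition matrices linking the canonical and Floquet bases at the two punctures, and then to reduce the equivalence with the quantization conditions to a statement about the diagonal matrix $T(\bm\sigma)=\diag\big(e^{2\pi\ii\eta_1},\dots,e^{2\pi\ii\eta_N}\big)$, which is just the matrix form of the Floquet-basis transport \rf{etaj-def}.

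First I would chain together $\mathbf{Y}^{(\infty)}_0(z)=\mathbf{F}^{(\infty)}(z)\,C^{(\infty)}(\bm\sigma)$ from \rf{eq:F-to-Y-def}, the relation $\mathbf{F}^{(\infty)}(z)=\mathbf{F}^{(0)}(z)\,T(\bm\sigma)$ obtained by reading \rf{etaj-def} as an identity of row vectors of solutions, and $\mathbf{F}^{(0)}(z)=\mathbf{Y}^{(0)}_0(z)\,C^{(0)}(\bm\sigma)^{-1}$, again from \rf{eq:F-to-Y-def}. Comparing the result with \rf{transpzeroinfty} gives $E(\bm\sigma)=C^{(0)}(\bm\sigma)^{-1}T(\bm\sigma)C^{(\infty)}(\bm\sigma)$. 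Inserting the factorization $C^{(0)}(\bm\sigma)=C^{(\infty)}(\bm\sigma)=\mathsf{D}(\bm\sigma)\mathsf{V}(\bm\sigma)\mathsf{M}(\bm\sigma)^{-1}$ established in the preceding Lemma yields $E=\mathsf{M}\mathsf{V}^{-1}\mathsf{D}^{-1}\,T\,\mathsf{D}\mathsf{V}\mathsf{M}^{-1}$, and since $\mathsf{D}(\bm\sigma)$ and $T(\bm\sigma)$ are both diagonal they commute, so $\mathsf{D}^{-1}T\mathsf{D}=T$ and the normalization factors drop out, leaving the claimed $E(\bm\sigma)=\mathsf{M}(\bm\sigma)\mathsf{V}(\bm\sigma)^{-1}T(\bm\sigma)\mathsf{V}(\bm\sigma)\mathsf{M}(\bm\sigma)^{-1}$.

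For the equivalence I would first record that $\mathsf{M}(\bm\sigma)$ is invertible, being a factor of the invertible change-of-basis matrix $\mathsf{C}(\bm\sigma)$ with $\mathsf{D}(\bm\sigma)$ and $\mathsf{V}(\bm\sigma)$ manifestly invertible, and that $\mathsf{V}(\bm\sigma)$ is a reflected Vandermonde matrix in $\Sigma_1,\dots,\Sigma_N$, hence invertible for generic $\bm\sigma$ since the $\Sigma_j=e^{2\pi\ii\sigma_j}$ are then pairwise distinct. Conjugation by $\mathsf{M}$ and $\mathsf{V}$ then shows that $E(\bm\sigma)\,\propto\,\mathds{1}$ holds if and only if $T(\bm\sigma)\,\propto\,\mathds{1}$, i.e. if and only if all diagonal entries $e^{2\pi\ii\eta_j(\bm\sigma,\Lambda)}$ coincide. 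By the relation \rf{eta_j-zeta_j relation} these entries equal $\zeta_j(\bm\de,\Lambda)\big|_{\bm\de=-\ii\hbar\bm\sigma}$, so their coincidence is exactly the quantization condition \rf{q-cond-delta} evaluated at $\bm\de=-\ii\hbar\bm\sigma$, in the same sense as in Corollary~\ref{quantiz+decay=Corollary}.

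The steps are essentially bookkeeping; the only two observations that are not purely mechanical are the commutativity of $\mathsf{D}$ and $T$ (which makes the normalization factors $\mathsf{D}^{\pm1}$ cancel) and the invertibility of the Vandermonde factor $\mathsf{V}$, which is where genericity of $\bm\sigma$ enters — neither of which is a real obstacle. If desired one may add the consistency remark that $E\in\SLNC$ forces the proportionality constant $c$ in $E=c\,\mathds{1}$ to satisfy $c^N=1$, matching $\prod_{j=1}^N e^{2\pi\ii\eta_j}=1$.
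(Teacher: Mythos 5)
Your argument is correct and follows the same route as the paper: chain the three transition matrices $\mathbf{Y}^{(\infty)}_0=\mathbf{F}^{(\infty)}C^{(\infty)}$, $\mathbf{F}^{(\infty)}=\mathbf{F}^{(0)}T$, $\mathbf{F}^{(0)}=\mathbf{Y}^{(0)}_0(C^{(0)})^{-1}$ to get $E=(C^{(0)})^{-1}TC^{(\infty)}$, simplify using $C^{(0)}=C^{(\infty)}=\mathsf{D}\mathsf{V}\mathsf{M}^{-1}$ with $[\mathsf{D},T]=0$, and then use that $E$ is conjugate to the diagonal matrix $T$ to reduce $E\propto\mathds{1}$ to $T\propto\mathds{1}$, i.e.\ equality of all $\zeta_j$. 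The paper states this in a single line, and your elaboration (including the genericity/invertibility remarks and the $c^N=1$ consistency check) is a faithful expansion of the same argument.
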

\begin{proof}
    The first point is automatic.
    As $E(\bm\si)$ is conjugate to $T(\bm\si)$ we have $E(\bm\sigma)\,\propto\,\mathds{1} \iff T(\bm\sigma)\,\propto\,\mathds{1}$,
    which is equivalent to the quantization conditions~\eqref{q-cond-delta}.
\end{proof}
The last point has been interpreted in the introduction as a variant $(\textup{ALC})_{\BR}$ of the Analytic Langlands Correspondence.

\subsection{Quantization Conditions From An Antiholomorphic Symmetry}

An interesting reformulation of this condition can be found as follows.
We may observe that the oper equation~(\ref{oper-z}) is invariant under the transformation $z \rightarrow 1/z$ together with overall complex conjugation, provided that 
$\Lambda\in\BR$, $\hbar\in\BR$ and $E_k\in\BR$ for $k=2,\dots,N$. We may next note that the formal solution
\begin{align}
    \tilde{Y}^{(\infty)}_n(z) = \ii^{N+1}e^{-N\tfrac{\Lambda}{\hbar}\alpha_nz^{1/N}}\left(N\tfrac{\Lambda}{\hbar}\alpha_nz^{1/N}\right)^{\beta+1}\tp^{(\infty)}\big(\alpha_nz^{1/N}\big)\,,
\end{align}
is mapped by this symmetry transformation  to
\begin{equation}
     \tilde{Y}^{(0)}_n(\bar z) = (-\ii)^{N+1}e^{-N\tfrac{\Lambda}{\hbar}\alpha_{-n}\bar{z}^{-1/N}}\left(N\tfrac{\Lambda}{\hbar}\alpha_{-n}\bar{z}^{-1/N}\right)^{\beta+1}\tp^{(0)}\big(\alpha_{-n}\bar z^{-1/N}\big)\,.
\end{equation}
After Borel resummation on the sectors $W_0^{(\infty)}$, $W_0^{(0)}$ respectively, these can be both evaluated on the positive real line, contained in $D^{(\infty,0)}_0$~(\ref{Sokal-discs}). 
We thereby find that
\begin{equation}
\widebar{\mathbf{Y}}^{(\infty)}\big(\bar{z}^{-1}\big)\Big|_{z\in\BR_+}=\mathbf{Y}^{(0)}(\bar{z})\Big|_{z\in\BR_+}
=\mathbf{Y}^{(\infty)}(z)\Big|_{z\in\BR_+}\!\!\!\cdot E\,.
\end{equation}
We see that ${\mathbf{Y}}^{(\infty)}$ is symmetric under the symmetry
above up to a multiplicative constant if and only if the connection matrix $E$ is proportional to the identity.\footnote{Equivalently, the action of the antiholomorphic symmetry is trivial in $\operatorname{PSL}(N,\BC)={}^L\mathrm{SL}(N,\BC)$.}

{\bf Acknowledgements.} 
JB, GR, and JT are funded by the Deutsche Forschungsgemeinschaft (DFG, German Research Foundation) -- SFB 1624 -- ``Higher structures, moduli spaces and integrability'' -- 506632645. We would like to thank  Pavlo Gavrylenko, Alba Grassi, Francesco Mangialardi, Marcos Mari\~no and Tommaso Pedroni for useful discussions, and 
Pavel Etingof, Edward Frenkel, Alba Grassi, and Lotte Hollands for helpful 
comments on a preliminary draft of this paper.

\newpage

\appendix

\section{Yang-Yang Function}\label{W=Y appendix}

\begin{propn} The Yang-Yang function defined in eq.~(\ref{eq:Yang-Yang-def}) satisfies the differential equations
\begin{subequations}
\begin{align}\label{diff-sigma}
\frac{\partial \mathcal{Y}(\boldsymbol{\sigma}, \Lambda)}{\partial \delta_j} &= 2\pi\ii\, \eta_j(\boldsymbol{\sigma}, \Lambda)\,,\\
    \label{diff-Lambda} 
\ii\hbar\frac{\partial \mathcal{Y}(\boldsymbol{\sigma},\Lambda)}{\partial \log \Lambda^{2N}} &= u(-\ii\hbar\boldsymbol{\sigma}, \Lambda)\,.
\end{align}
\end{subequations}
\end{propn}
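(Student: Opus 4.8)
The plan is to derive both identities from one structural fact: the instanton part $\Yinst$ is, up to an overall sign, the critical value of an \emph{off-shell} Yang--Yang functional whose Euler--Lagrange equation is precisely the nonlinear integral equation~\eqref{NLIE}. Writing $L(\mu)=\log\bigl(1+X_{\bm{\delta}}(\mu)\bigr)$, $g(\mu)=\log X_{\bm{\delta}}(\mu)$, and denoting by $\mathcal{K}$ the (symmetric, $\Lambda$-independent) convolution operator $(\mathcal{K}L)(\mu)=\int_{\BR}\frac{d\mu'}{2\pi\ii}\frac{2\ii\hbar}{(\mu-\mu')^{2}+\hbar^{2}}L(\mu')$ appearing in~\eqref{NLIE}, I would introduce
\[
\widehat{\CY}[g;\Theta]=\int_{\BR}\frac{d\mu}{2\pi\ii}\Bigl(-\tfrac12 L\,(\mathcal{K}L)+\log\Theta\cdot L+g\,L+\Li_{2}(-e^{g})\Bigr).
\]
A direct computation gives $\delta\widehat{\CY}/\delta g(\mu)=\tfrac{e^{g}}{1+e^{g}}\bigl(-\mathcal{K}L+\log\Theta+g\bigr)$, which vanishes exactly when $g$ solves~\eqref{NLIE}; and on that solution $\widehat{\CY}\bigl[g_{\bm{\delta}};\Theta_{\bm{\delta}}\bigr]=-\Yinst(\bm{\delta},\Lambda)$. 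Since $X_{\bm{\delta}}$ depends smoothly on $(\bm{\delta},\Lambda)$ whenever the iterative solution of~\eqref{NLIE} converges (small $\Lambda$, cf.\ Section~\ref{BaxterfromNLIE}), stationarity implies that every derivative of $\Yinst$ with respect to $\delta_{j}$ or $\log\Lambda^{2N}$ equals minus the corresponding \emph{explicit} derivative of $\widehat{\CY}$, which acts only through the term $\int\frac{d\mu}{2\pi\ii}\log\Theta\cdot L$.

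For~\eqref{diff-sigma}, the first step is to differentiate the perturbative part directly; using $\vp'(\lambda)=\log\Gamma(1+\ii\lambda/\hbar)$ one gets
\[
\frac{\pa\Ypert}{\pa\delta_{j}}=\frac{2\ii N\delta_{j}}{\hbar}\log\frac{\hbar}{\Lambda}+\sum_{k=1}^{N}\log\frac{\Gamma\bigl(1+\ii(\delta_{j}-\delta_{k})/\hbar\bigr)}{\Gamma\bigl(1-\ii(\delta_{j}-\delta_{k})/\hbar\bigr)}.
\]
For the instanton part, from $\pa_{\delta_{j}}\log\Theta(\mu)=-\tfrac{1}{\mu-\delta_{j}-\ii\hbar/2}-\tfrac{1}{\mu-\delta_{j}+\ii\hbar/2}$ and the stationarity principle I obtain
\[
\frac{\pa\Yinst}{\pa\delta_{j}}=\int_{\BR}\frac{d\mu}{2\pi\ii}\Bigl(\frac{1}{\mu-\delta_{j}-\ii\hbar/2}+\frac{1}{\mu-\delta_{j}+\ii\hbar/2}\Bigr)L(\mu)=\log\frac{v_{\uparrow}(\delta_{j})}{v_{\downarrow}(\delta_{j}-\ii\hbar)},
\]
recognising the integrals defining $v_{\uparrow},v_{\downarrow}$ in~\eqref{v+},~\eqref{v-}. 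Adding the two pieces and comparing with the explicit formulas~\eqref{QfromY} for $\mathrm{Q}^{\pm}_{\bm{\delta}}$ term by term, one reads off $\pa_{\delta_{j}}\CY=\log\bigl(\mathrm{Q}^{+}_{\bm{\delta}}(\delta_{j})/\mathrm{Q}^{-}_{\bm{\delta}}(\delta_{j})\bigr)=\log\zeta_{j}(\bm{\delta},\Lambda)$; fixing the additive constant of $\CY$ by the $\Lambda\to0$ limit turns this into an honest equality of logarithms, and~\eqref{eta_j-zeta_j relation} then gives $\pa_{\delta_{j}}\CY=2\pi\ii\,\eta_{j}$, which is~\eqref{diff-sigma}.

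For~\eqref{diff-Lambda} the same split applies. As $\vp$ is $\Lambda$-independent, $\ii\hbar\,\pa_{\log\Lambda^{2N}}\Ypert=\tfrac12\sum_{j}\delta_{j}^{2}$; and since $\pa_{\log\Lambda^{2N}}\log\Theta=-1$, stationarity gives $\ii\hbar\,\pa_{\log\Lambda^{2N}}\Yinst=\ii\hbar\int_{\BR}\frac{d\mu}{2\pi\ii}L(\mu)$. It remains to identify $\tfrac12\sum_{j}\delta_{j}^{2}+\ii\hbar\int_{\BR}\frac{d\mu}{2\pi\ii}L(\mu)$ with $u(-\ii\hbar\bm{\sigma},\Lambda)=-E_{2}\bigl(\bm{\tau}(\bm{\delta},\Lambda)\bigr)$. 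For this I would insert~\eqref{QfromY} into the defining relation $t_{\bm{\delta}}(\lambda)\mathrm{Q}^{+}_{\bm{\delta}}(\lambda)=\Lambda^{N}\bigl(\ii^{N}\mathrm{Q}^{+}_{\bm{\delta}}(\lambda+\ii\hbar)+\ii^{-N}\mathrm{Q}^{+}_{\bm{\delta}}(\lambda-\ii\hbar)\bigr)$; after cancellation of all prefactors this becomes
\[
t_{\bm{\delta}}(\lambda)=\frac{v_{\uparrow}(\lambda-\ii\hbar)}{v_{\uparrow}(\lambda)}\prod_{k=1}^{N}(\lambda-\delta_{k})+\Lambda^{2N}\,\frac{v_{\uparrow}(\lambda+\ii\hbar)}{v_{\uparrow}(\lambda)}\,\frac{1}{\prod_{k=1}^{N}(\lambda-\delta_{k}+\ii\hbar)}.
\]
Expanding as $\lambda\to\infty$ away from the poles, where $v_{\uparrow}\to1$: the $1/\lambda$-expansion of $\log v_{\uparrow}$ from~\eqref{v+} gives $v_{\uparrow}(\lambda-\ii\hbar)/v_{\uparrow}(\lambda)=1-\ii\hbar c_{0}/\lambda^{2}+O(\lambda^{-3})$ with $c_{0}=\int_{\BR}\frac{d\mu}{2\pi\ii}L(\mu)$, while the second summand is $O(\lambda^{-N})$ and so does not affect the $\lambda^{N-2}$ coefficient. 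Matching with $t_{\bm{\delta}}(\lambda)=\lambda^{N}+E_{1}\lambda^{N-1}+E_{2}\lambda^{N-2}+\cdots$ yields $E_{1}=-\sum_{k}\delta_{k}$ (so that the standing assumption $E_1=0$ forces $\sum_{k}\delta_{k}=0$, reproducing~\eqref{d_theta=P}) and $E_{2}=e_{2}(\bm{\delta})-\ii\hbar c_{0}=-\tfrac12\sum_{k}\delta_{k}^{2}-\ii\hbar c_{0}$, hence $-E_{2}=\tfrac12\sum_{k}\delta_{k}^{2}+\ii\hbar c_{0}$, matching the previous display. (Equivalently, one may quote~\eqref{delta-to-tau} and the explicit formula for $t_{\bm{\delta}}$ in~\cite{Kozlowski:2010tv}.)

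The algebra is routine; the real work is twofold. First, making the stationarity step rigorous --- smoothness of $\bm{\delta}\mapsto X_{\bm{\delta}}$, absolute convergence of all the integrals, and vanishing of boundary contributions as $\mu\to\pm\infty$ --- which I expect to follow from the convergent $\Lambda^{2N}$-expansion of $X_{\bm{\delta}}$ used in Section~\ref{BaxterfromNLIE}. Second, for~\eqref{diff-Lambda}, the bookkeeping of the $\hbar$- and phase-factors in the cancellation leading to the displayed formula for $t_{\bm{\delta}}$ and in extracting its $\lambda^{N-2}$ coefficient, together with the normalisation of $\CY$ needed to upgrade the mod-$2\pi\ii$ identity behind~\eqref{diff-sigma} to an exact one. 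These are the steps I would expect to cost the most care.
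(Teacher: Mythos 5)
Your proposal is correct, and it reorganises the computation around a genuinely different structural idea than the paper's. The paper proves both identities by brute-force differentiation of $\Yinst$: it expands $\partial_{\delta_k}\Yinst$ into three integrals $I_1,I_2,I_3$, substitutes the NLIE into $I_1$ and $I_2$, uses the symmetry $K(\mu-\mu')=K(\mu'-\mu)$ to show $I_1=I_2$, and observes that the sum cancels against part of $I_3$; a parallel calculation with $J_1,J_2$ handles the $\Lambda$-derivative. You instead introduce an off-shell functional $\widehat\CY[g;\Theta]$ whose Euler--Lagrange equation is the NLIE, verify that $\widehat\CY=-\Yinst$ on shell, and then invoke the envelope principle so that \emph{all} the implicit $\bm\delta$- and $\Lambda$-dependence through $X_{\bm\delta}$ drops out automatically, leaving only the explicit derivative of $\log\Theta$. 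I have checked the variational derivative $\delta\widehat\CY/\delta g=\tfrac{X}{1+X}\bigl(-\mathcal{K}L+\log\Theta+g\bigr)$ (using $\frac{d}{dg}\Li_2(-e^g)=-\log(1+e^g)$), the on-shell identification $\widehat\CY=-\Yinst$, the resulting $\partial_{\delta_j}\Yinst=-\int\tfrac{d\mu}{2\pi\ii}P_j(\mu)\log(1+X_{\bm\delta})$, the matching of $\partial_{\delta_j}(\Ypert+\Yinst)$ with $\log\bigl(\mathrm{Q}^+_{\bm\delta}(\delta_j)/\mathrm{Q}^-_{\bm\delta}(\delta_j)\bigr)$, the reduction of the $T$-$Q$ equation to $t_{\bm\delta}(\lambda)=\tfrac{v_\uparrow(\lambda-\ii\hbar)}{v_\uparrow(\lambda)}\prod_k(\lambda-\delta_k)+\Lambda^{2N}\tfrac{v_\uparrow(\lambda+\ii\hbar)}{v_\uparrow(\lambda)}\prod_k(\lambda-\delta_k+\ii\hbar)^{-1}$, and the extraction of $E_2=e_2(\bm\delta)-\ii\hbar\int\tfrac{d\mu}{2\pi\ii}\log(1+X_{\bm\delta})$; all of these agree with the paper's intermediate formulas (the paper instead quotes~\eqref{delta-to-tau} from Kozlowski--Teschner for the last step). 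What each route buys: your variational formulation explains \emph{why} the $I_1=I_2$-type cancellations happen — they are a single application of stationarity rather than an apparently accidental symmetry — and it treats both the $\delta_j$- and the $\Lambda$-derivatives uniformly; the paper's direct approach is entirely self-contained and avoids having to justify the interchange of the envelope derivative with the convergent $\Lambda^{2N}$-expansion, which you correctly flag as the analytic point requiring care (it follows from the uniformity arguments in Section~\ref{BaxterfromNLIE} and the footnote below eq.~\eqref{dX-dLambda}).
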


\begin{proof}
In order to verify \rf{diff-sigma}, let us begin by noting that the relation
\begin{equation}\label{dY-ddelta-known}
     2\pi\ii \,\eta_j(\boldsymbol{\delta}, \Lambda)=\log Q_{\boldsymbol{\delta}}^+(\delta_j)-\log Q_{\boldsymbol{\delta}}^-(\delta_j)\,,
\end{equation}
allows us to express $\eta_j(\boldsymbol{\delta}, \Lambda)$ in terms of the solutions $X_{\boldsymbol{\delta}}(\mu)$ of the NLIE~(\ref{NLIE}) by using the integral representations of $v_{\downarrow}, v_{\uparrow}$ given by (\ref{eq:v-fns}),
\begin{align}\label{eta-explicit}
    2\pi\ii \eta_k(\boldsymbol{\delta}, \Lambda) &=\log\left(\frac{Q_{\boldsymbol{\delta}}^+(\delta_k)}{
    Q_{\boldsymbol{\delta}}^{-}(\delta_k)}\right) = \frac{2N\ii}{\hbar}\delta_k\left( \log\hbar-\log\Lambda \right)
    \nonumber \\
    &\quad\; + \sum_{j=1}^N\log\frac{\Gamma(1+\ii(\delta_k-\delta_j)/\hbar)}{\Gamma(1-\ii(\delta_k-\delta_j)/\hbar)} -\int_{\mathbb{R}}\frac{d \mu}{2\pi\ii}P_k(\mu)\log(1+X_{\bm\de}(\mu))\,,
\end{align}
where for brevity we have defined
\begin{equation}
    P_k(\mu) \coloneqq\frac{1}{\delta_k-\mu+\ii\hbar/2} + \frac{1}{\delta_k-\mu-\ii\hbar/2}\,.
\end{equation}
It is clear that the first terms in~(\ref{eta-explicit}) are obtained by $\tfrac{\partial \Ypert}{\partial \delta_k}$, where $\Ypert$ is defined in~(\ref{eq:Yang-Yang-def}). We now compute
\begin{align}
    \frac{\partial \Yinst}{\partial \delta_k} &= -\int_\BR \frac{d \mu}{2\pi\ii }\left[ \frac{\partial_k \log (X_{\bm\de}(\mu)\Theta(\mu))}{2}\log(1+X_{\bm\de}(\mu)) \right. \nonumber\\
    &\qquad\qquad\quad\;\, + \frac{\log(X_{\bm\de}(\mu)\Theta(\mu))}{2}\frac{X_{\bm\de}(\mu)}{1+X_{\bm\de}(\mu)}\partial_k\log X_{\bm\de}(\mu) \nonumber\\
    &\qquad\qquad\quad\;\, - \left.\vphantom{\frac{1}{2}}{\log(1+X_{\bm\de}(\mu))\partial_k\log X_{\bm\de}(\mu)} \right],
\label{partialdeltaY}\end{align}
where for brevity we denote $\partial_k \coloneqq  \tfrac{\partial}{\partial \delta_k}$. Defining for simplicity
 \begin{equation}\label{Kdef}
    K(\mu) \coloneqq\frac{ 2\ii\hbar}{\mu^2+\hbar^2}\,,
\end{equation}
the terms in the first two lines of \rf{partialdeltaY} can be written  using~(\ref{NLIE}) as 
integrals $I_1$ and $I_2$
\begin{align}
    I_1= &\int_\BR \frac{d \mu}{2\pi\ii }\frac{\partial_k \log (X_{\bm\de}(\mu)\Theta(\mu))}{2}\log(1+X_{\bm\de}(\mu)) \nonumber\\
    =&\frac{1}{2}\int_\BR\frac{d\mu}{2\pi\ii}  \int_\BR\frac{d\mu'}{2\pi\ii} K(\mu-\mu')[\partial_k\log(1+X_{\bm\de}(\mu'))]\log(1+X_{\bm\de}(\mu)) \nonumber\\ 
    = &\frac{1}{2}\int_\BR\frac{d\mu}{2\pi\ii}  \int_\BR\frac{d\mu'}{2\pi\ii} K(\mu-\mu')\frac{X_{\bm\de}(\mu')}{1+X_{\bm\de}(\mu')}[\partial_k\log(X_{\bm\de}(\mu'))]\log(1+X_{\bm\de}(\mu))\,,
\end{align}
\begin{align}
    I_2 &= \int_\BR\frac{d\mu}{2\pi\ii} \frac{\log(X_{\bm\de}(\mu)\Theta(\mu))}{2}\frac{X_{\bm\de}(\mu)}{1+X_{\bm\de}(\mu)}\partial_k\log X_{\bm\de}(\mu) \nonumber \\
    &=\frac{1}{2} \int_\BR\frac{d\mu}{2\pi\ii}  \int_\BR\frac{d\mu'}{2\pi\ii} K(\mu-\mu')\log(1+X_{\bm\de}(\mu'))\frac{X_{\bm\de}(\mu)}{1+X_{\bm\de}(\mu)}[\partial_k\log(X_{\bm\de}(\mu))]\,,
\end{align}
respectively.
So $I_1=I_2$, after exchanging $\mu'$ and $\mu$ in one of the integrals and observing that $K(\mu'-\mu)=K(\mu-\mu')$.
The third term reads
\begin{align}
    I_3 = -\int_\BR \frac{d \mu}{2\pi\ii }\log(1+X_{\bm\de}(\mu))\partial_k\log X_{\bm\de}(\mu) = &-\int_\BR \frac{d \mu}{2\pi\ii }\log(1+X_{\bm\de}(\mu))\partial_k\log (X_{\bm\de}(\mu)\Theta(\mu)) \nonumber\\ 
   & +\int_\BR \frac{d \mu}{2\pi\ii }\log(1+X_{\bm\de}(\mu))\partial_k\log \Theta(\mu)\,.
\end{align}
The first term in the equation above is clearly equal to $2I_1=I_1+I_2$, so these three contributions cancel.
It now suffices to observe that, recalling the definition~(\ref{Theta-def}), one has 
\begin{equation}\label{d_theta=P}
    \partial_k\log \Theta(\mu)  = P_k(\mu)\,,
\end{equation}
to conclude that
\begin{equation}
    \frac{\partial \Yinst}{\partial \delta_k} = -\int_{\mathbb{R}}\frac{d \mu}{2\pi\ii}P_k(\mu)\log(1+X_{\bm\de}(\mu))\,.
\end{equation}

We will next prove that  $\mathcal{Y}(\bm{\delta}, \Lambda)$ satisfies~\eqref{diff-Lambda}. We first need an explicit expression for $u(\boldsymbol{\delta},\Lambda)$. It was found in~\cite{Kozlowski:2010tv} that the zeros $\tau_j$ of $t_{\bm{\de}}(\lambda)$ defined in~(\ref{t-tau-def}) can be represented in terms of $X_{\bm\de}(\mu)$ by the formulae:
\begin{equation}\label{delta-to-tau}
    \sum_{j=1}^{N}\tau_j(\boldsymbol\delta,\Lambda)^k = \sum_{j=1}^{N}\delta_j^{k} + k\int_\BR\frac{d\mu}{2\pi\ii}\left\{ (\mu+\ii\hbar/2)^{k-1} - (\mu-\ii\hbar/2)^{k-1} \right\}\log(1+ X_{\bm\de}(\mu))\,,
\end{equation}
where $k$ is an integer. Specializing to $k=2$ and $P=\sum_{j=1}^N\tau_j=0$, we can express the eigenvalue of the Hamiltonian of the Toda chain,  
\begin{align}\label{u-hamiltonian}
   -u\coloneqq E_{2}= &= \frac{P^2}{2} -\sum_{j=1}^N\left(\frac{p_j^2}{2} +\Lambda^2e^{x_{j}-x_{j+1}} \right), \qquad x_{N+1} \coloneqq x_1\,,
\end{align}
as a function of $\boldsymbol{\delta}$. One gets  from~\eqref{delta-to-tau} 
\begin{equation}\label{u-def}
    u(\boldsymbol{\delta}, \Lambda) = \frac{1}{2}\sum_{p=1}^{N}\delta^2_p +\ii\hbar \int_\BR \frac{d\mu}{2\pi\ii}\log(1+ X_{\bm\de}(\mu))\,.
\end{equation}
We now differentiate $\mathcal{Y}(\boldsymbol{\delta},\Lambda)$ to demonstrate agreement with~(\ref{u-def}). For $\Ypert$ we simply find
\begin{equation}
  \ii\hbar\frac{\partial \Ypert(\boldsymbol{\delta}, \Lambda)}{\partial \log \Lambda^{2N}} = \frac{1}{2}\sum_{j=1}^N\delta_j^2\,,
\end{equation}
which is the first term in ~\eqref{u-def}. Before proceeding with the instanton part, let us first compute
\begin{align}\label{dX-dLambda}
    \frac{\partial}{\partial \log \Lambda^{2N}} \log(1+X_{\bm\de}(\mu)) 
    &= \frac{X_{\bm\de}(\mu)}{1+X_{\bm\de}(\mu)}\left( \frac{\partial \log X_{\bm\de}(\mu)}{\partial \log \Lambda^{2N}}\right).
\end{align}
Using~(\ref{NLIE}), we also compute\footnote{Bringing the derivative under the integral sign is allowed as $X_{\bm\de}(\mu)$ is positive and decaying as $\mu^{-2N}$ as $\mu\rightarrow\infty$ on $\mathbb{R}$~\cite[Proposition 3]{Kozlowski:2010tv}. Then the integral~\eqref{NLIE} converges absolutely for small enough values of $\Lambda^{2N} > 0$. 
}
\begin{align}\label{dlogY-dLambda}
       \frac{\partial \log (X_{\bm\de}(\mu)\Theta(\mu))}{\partial \log \Lambda^{2N}} 
       &= \int_\BR \frac{d\mu'}{2\pi\ii}\,K(\mu-\mu')  \frac{X_{\bm\de}(\mu')}{1+X_{\bm\de}(\mu')}\left( \frac{\partial \log X_{\bm\de}(\mu')}{\partial \log \Lambda^{2N}}\right),
\end{align}
where again $K(\mu)$ is given by eq.~(\ref{Kdef}). We are now ready to compute
\begin{align}
   \ii\hbar\frac{\partial  \Yinst(\boldsymbol{\delta}, \Lambda)}{\partial \log \Lambda^{2N}} &= -\ii\hbar\int_\BR\frac{d\mu}{2\pi\ii}\left[ \frac{1}{2}\log 
    (X_{\bm\de}(\mu)\Theta(\mu)) \frac{X_{\bm\de}(\mu)}{1+X_{\bm\de}(\mu)}\left( \frac{\partial \log X_{\bm\de}(\mu)}{\partial \log \Lambda^{2N}}\right) \right. \nonumber\\
    &\qquad\qquad\qquad\;\,+ \frac{1}{2} \frac{\partial \log (X_{\bm\de}(\mu)\Theta(\mu))}{\partial \log \Lambda^{2N}}\log(1+X_{\bm\de}(\mu)) \nonumber\\ 
    &\qquad\qquad\qquad\;\,-\left. \log(1+X_{\bm\de}(\mu))\left(1 +\frac{\partial \log (X_{\bm\de}(\mu)\Theta(\mu))}{\partial \log \Lambda^{2N}}\right) \right].
\end{align}
where the first two terms in the integral come from differentiating of first term in eq.~\eqref{Yinst} and we have used eq.~\eqref{dX-dLambda}; the third term comes from the derivative of the dilogarithm, defined by
\begin{equation}
    \Li_2(z) = \int_z^0dt\,\frac{\log(1-t)}{t},
\end{equation}
after recalling the explicit dependence on $\Lambda$ of (\ref{Theta-def}).
Collecting the terms, we find
\begin{align}\label{dYinst-dLambda}
   \ii\hbar\frac{\partial \Yinst(\boldsymbol{\delta}, \Lambda)}{\partial \log \Lambda^{2N}}&=\ii\hbar\int_\BR\frac{d\mu}{2\pi\ii}\log(1+X_{\bm\de}(\mu))\nonumber\\
  &\quad\,-\frac{\ii\hbar}{2}\int_\BR\frac{d\mu}{2\pi\ii}\left\{\log 
    (X_{\bm\de}(\mu)\Theta(\mu))\frac{X_{\bm\de}(\mu)}{1+X_{\bm\de}(\mu)}\left(\frac{\partial \log 
    (X_{\bm\de}(\mu)\Theta(\mu))}{\partial \log \Lambda^{2N}}\right) \right. \nonumber\\
  &\qquad\qquad\qquad\quad\, \left. -\log(1+X_{\bm\de}(\mu))\frac{\partial \log 
    (X_{\bm\de}(\mu)\Theta(\mu))}{\partial \log \Lambda^{2N}}\right\}.
\end{align}
The first integral matches the second term of~\eqref{u-def}, so all we are left to do is to prove that the second integral vanishes. This consists of the difference of two terms. In the first, we substitute for 
$\log (X_{\bm\de}(\mu)\Theta(\mu))$ the NLIE~\eqref{NLIE}; for the second we plug in~\eqref{dlogY-dLambda}. Hence we find for the first term
\begin{equation}
    J_1 =\int_{\BR^2}\frac{d\mu'}{2\pi\ii}\frac{d\mu}{2\pi\ii}\,K(\mu-\mu')\log(1+X_{\bm\de}(\mu'))\frac{X_{\bm\de}(\mu)}{1+X_{\bm\de}(\mu)}\left(\frac{\partial \log (X_{\bm\de}(\mu)\Theta(\mu))}{\partial \log \Lambda^{2N}}\right),
\end{equation}
while the second term is
\begin{equation}
    J_2= \int_{\BR^2}\frac{d\mu'}{2\pi\ii}\frac{d\mu}{2\pi\ii}\log(1+X_{\bm\de}(\mu))K(\mu-\mu')\frac{X_{\bm\de}(\mu')}{1+X_{\bm\de}(\mu')}\left( \frac{\partial \log (X_{\bm\de}(\mu')\Theta(\mu'))}{\partial \log \Lambda^{2N}}\right).
\end{equation}
In the latter integral, it is now sufficient to rename $\mu' \to \mu, \mu \to \mu'$ and again use $K(\mu'-\mu) =K(\mu-\mu')$ to conclude that $J_1=J_2$; therefore the second integral in eq.~\eqref{dYinst-dLambda} vanishes. 
\end{proof}

\section{Analytic Properties of the Oper Solutions} \label{app:G-expansion}

In this appendix we provide additional details on some analytic properties of the Floquet solutions~\eqref{eq:Floquet-from-Baxter}, as well as the maximally decaying solutions~\eqref{eq:Stokes-max}.
In Section~\ref{app:decoupling} we study the (suitably rescaled) $\Lambda\to0$ limit of the oper solutions, while in sections~\ref{app:G-expansion-maximaldecay} and~\ref{app:floquet} we study their asymptotics at finite values of $\Lambda$.
In all cases we make contact with known special functions which correspond to flat sections of $\slNC$ opers on $C_{0,2}$ with one minimally irregular puncture, and one regular one.
More precisely, they are solutions to the differential equation
\begin{equation} \label{eq:oper-w}
    t(-\ii\hbar w\partial_w)\chi(w) = (\ii\hbar)^Nw\chi(w)\,.
\end{equation}
A large class of solutions is given by Meijer $G$-functions $G^{m,0}_{0,N}\left( b_1,\dots,b_N\,\middle|\,(-1)^{N+m}w \right)$ with $b_k = \ii\tau_k/\hbar$, with $G^{m,n}_{p,q}$ defined more generally as
\begin{equation} \label{Meijer-G-def}
    G^{m,n}_{p,q}\left(
    \begin{matrix}
        a_1,\dots,a_p \\
        b_1,\dots,b_q
    \end{matrix}
    \,\middle|\,z \right) = \int_L\frac{ds}{2\pi\ii}\frac{\prod_{k=1}^m\Gamma(b_k-s)\prod_{k=1}^n\Gamma(1-a_k+s)}{\prod_{k=m+1}^q\Gamma(1-b_k+s)\prod_{k=n+1}^p\Gamma(a_k-s)}z^s\,.
\end{equation}
For our purposes we will always have $p<q$, so that the contour $L$ can be taken as a clockwise Hankel contour surrounding all the poles of $\prod_{k=1}^m\Gamma(b_k-s)$.
A basis of solutions with diagonal monodromy can be obtained by setting $m=1$ and varying the ordering of $b_1,\dots,b_N$.
Up to an irrelevant phase, these reduce to
\begin{equation}
    w^{b_j}{}_0\tilde F_{N-1}\left( \{ 1+b_j-b_k \,|\, k\neq j \} \,\middle|\, (-1)^Nw \right),
\end{equation}
where ${}_p\tilde F_q$ is the regularized hypergeometric function, related to ordinary (generalized) hypergeometric functions ${}_pF_q$ by a change of normalization
\begin{subequations} \label{eq:pFq-def}
\begin{align}
    {}_p\tilde F_q\left( 
    \begin{matrix}
        \alpha_1,\dots,\alpha_p \\ \beta_1,\dots,\beta_q
    \end{matrix}
    \,\middle|\, w\right) &= \frac{1}{\prod_{k=1}^q\Gamma(\beta_k)} {}_pF_q\left( 
    \begin{matrix}
        \alpha_1,\dots,\alpha_p \\ \beta_1,\dots,\beta_q
    \end{matrix}
    \,\middle|\, w \right), \\
    {}_pF_q\left( 
    \begin{matrix}
        \alpha_1,\dots,\alpha_p \\ \beta_1,\dots,\beta_q
    \end{matrix}
    \,\middle|\, w \right) &= \sum_{n=0}^\infty \frac{(\alpha_1)_n\dots(\alpha_p)_n}{(\beta_1)_n\dots(\beta_q)_n}\frac{w^n}{n!}\,,
\end{align}
\end{subequations}
and $(\alpha)_n=\Gamma(\alpha+n)/\Gamma(\alpha)$ is the usual Pochhammer symbol.
Another solution of note is that with $m=N$, which is well-known to have maximal decay: that is, when $\sum_{k=1}^Nb_k=0$,
\begin{equation} \label{eq:Meijer-G-asymptotics}
    G^{N,0}_{0,N}(b_1,\dots,b_N\,|\,w) \sim \sqrt{\frac{(2\pi)^{N-1}}{N}}e^{-Nw^{1/N}}w^{-\frac{N-1}{2N}}\left( 1+O\big( w^{-1/N} \big) \right)\,,\qquad \abs{w}\to\infty\,,
\end{equation}
while $\abs{\arg w}<(N+1)\pi$.
For a more complete analysis of Meijer $G$-functions, see~\cite{Luke1969}.

\subsection{Decoupling Limit and Open Toda Chain} \label{app:decoupling}

Let us consider the limit of $\mathbf{F}^{(\infty)}$ for $\Lambda\to0$ with $w = (\Lambda/\hbar)^Nz$ fixed.
This can be thought of as separating the two punctures by a long cylinder, while simultaneously ``zooming in'' on a neighborhood of $z=\infty$, see figure~\ref{fig:C02-Lambda}.
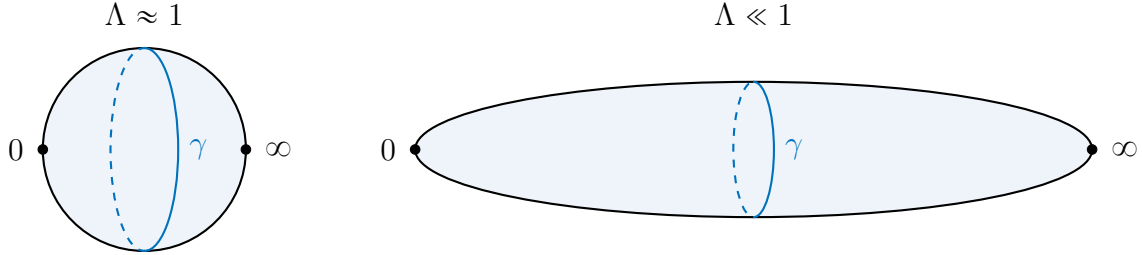
\begin{figure}[ht!]

\centering

\begin{tikzpicture}[scale=0.9]

\filldraw [thick, fill=RoyalBlue!5] (0,0) circle (1.5cm);
\filldraw [thick, fill=RoyalBlue!5] (9,0) ellipse (5cm and 1cm);
\begin{scope}
    \clip (0,-2) rectangle (2,2);
    \draw [thick, RoyalBlue] (0,0) ellipse (0.5cm and 1.5cm);
\end{scope}
\begin{scope}
    \clip (0,-2) rectangle (-2,2);
    \draw [thick, RoyalBlue, dashed] (0,0) ellipse (0.5cm and 1.5cm);
\end{scope}
\begin{scope}
    \clip (9,-2) rectangle (11,2);
    \draw [thick, RoyalBlue] (9,0) ellipse (0.3cm and 1cm);
\end{scope}
\begin{scope}
    \clip (9,-2) rectangle (7,2);
    \draw [thick, RoyalBlue, dashed] (9,0) ellipse (0.3cm and 1cm);
\end{scope}

\node at (0.8,0) {\color{RoyalBlue}{$\gamma$}};
\node at (9.6,0) {\color{RoyalBlue}{$\gamma$}};

\filldraw (-1.5,0) circle (2pt) node [left, xshift=-3pt] {$0$};
\filldraw (1.5,0) circle (2pt) node [right, xshift=3pt] {$\infty$};

\filldraw (4,0) circle (2pt) node [left, xshift=-3pt] {$0$};
\filldraw (14,0) circle (2pt) node [right, xshift=3pt] {$\infty$};

\node at (0,2) {$\Lambda\approx1$};
\node at (9,2) {$\Lambda\ll1$};

\end{tikzpicture}

\caption{The twice-punctured Riemann sphere $C_{0,2}$ and its dependence on the complex structure parameter $\Lambda$. Schematically, taking $\Lambda$ to zero stretches the sphere into an ellipse, separating the punctures and rendering their effects near the opposite end negligible.}
\label{fig:C02-Lambda}

\end{figure}
In this limit the oper equation~\eqref{oper-z} simply reduces to eq.~\eqref{eq:oper-w}, meaning that the singularity at $z=0$ is now regular.
Taking this limit for $\mathbf{F}^{(\infty)}$ and exchanging it with the sum\footnote{This is allowed since the decay of $Q^{\pm}$ along the imaginary axis, and therefore the absolute convergence of the Floquet series~\eqref{eq:Floquet-from-Baxter}, is uniform in $\Lambda$.} we have 
\begin{equation}
    \lim_{\Lambda\to0}F^{(\infty)}_j(w) = \sum_{n\in\BZ}\lim_{\Lambda\to0} \frac{K_-(\delta_j(\boldsymbol\tau,\Lambda)-\ii\hbar n)}{\prod_{k=1}^N\Gamma(1+\ii(\delta_j(\boldsymbol\tau,\Lambda)-\tau_k)/\hbar+n)}\left( e^{N\pi\ii}w \right)^{\ii\frac{\delta_j(\boldsymbol\tau,\Lambda)}{\hbar}+n}\,.
\end{equation}
From the Hill determinant~\eqref{Hill-det} it follows straightforwardly that $\lim_{\Lambda\to0}\boldsymbol\delta(\boldsymbol\tau,\Lambda) = \boldsymbol\tau$.
Furthermore, it follows from the determinant expression~\eqref{K+-det-definition} that $K_-(\lambda)$ becomes upper triangular, such that $\lim_{\Lambda\to0}K_-(\lambda) = 1$ away from its poles.
Note that when $n<0$ the argument of $K_-$ tends to its poles $\tau_j -\ii\hbar n$: however, since we have dropped the irregular singularity at zero, these terms must vanish anyway.\footnote{A careful analysis shows that these terms are suppressed by the Gamma functions in the denominator, which also become singular.}
The final result is therefore
\begin{align} \label{eq:F-infinity-limit}
    \lim_{\Lambda\to0}F^{(\infty)}_j(w) &= \left( e^{N\pi\ii}w \right)^{\ii\frac{\tau_j}{\hbar}}\sum_{n=0}^\infty \frac{\left( (-1)^Nw \right)^n}{n!\prod_{\substack{k=1 \\k\neq j}}^N\Gamma(1+\ii(\tau_j-\tau_k)/\hbar+n)} \nonumber\\
    &= \frac{\left( e^{N\pi\ii}w \right)^{\ii\frac{\tau_j}{\hbar}}}{\prod_{\substack{k=1\\k\neq j}}^N\Gamma(1+\ii(\tau_j-\tau_k)/\hbar)} \sum_{n=0}^\infty \frac{\left( (-1)^Nw \right)^n}{n!\prod_{\substack{k=1\\k\neq j}}^N(1+\ii(\tau_j-\tau_k)/\hbar)_n} \nonumber\\
    &= \left( e^{N\pi\ii}w \right)^{\ii\frac{\tau_j}{\hbar}}{}_0\tilde F_{N-1}\left( \{ 1+\ii(\tau_j-\tau_k)/\hbar\,\middle|\,k\neq j \} \,|\, (-1)^Nw \right)\,.
\end{align}
We therefore see that the basis $\mathbf{F}^{(\infty)}$ reduces to a simple expression in terms of well-known special functions in the vicinity of the puncture at $z=\infty$, as made precise by this limit.
In particular it is easy to verify that for $N=2$ these give the familiar modified Bessel functions $I_{\pm2i\tau/\hbar}(2\sqrt{w})$ up to a phase, thereby providing a natural generalization of the results of e.g.~\cite{Its:2014lga,Gavrylenko:2017lqz,Coman:2020qgf}.

A similar result can be obtained for $\mathbf{F}^{(0)}$ in the limit $\Lambda\to0$ with $w'=(\hbar/\Lambda)^Nz$ fixed, which amounts to making the singularity at $z=\infty$ regular, and zooming in on $z=0$, such that
\begin{equation} \label{eq:F-0-limit}
    \lim_{\Lambda\to0}F^{(0)}_j(w') = \left( e^{N\pi\ii}w' \right)^{\ii\frac{\tau_j}{\hbar}}{}_0\tilde F_{N-1}\left( \{1-\ii(\tau_j-\tau_k)/\hbar \,|\, k\neq j\}\,\middle|\, \frac{(-1)^N}{w'} \right).
\end{equation}

The maximally decaying solution $G^{N,0}_{0,N}$ may be expanded in the Floquet solutions ${}_0\tilde F_{N-1}$ through the following relation\footnote{A similar relation holds near $z=0$ by substituting $z\to1/z$ and $\mathbf{b}\to-\mathbf{b}$.}
\begin{align} \label{MeijerG-maximal-decay}
    G^{N,0}_{0,N}(b_1,\dots,b_N\,|\,z) &= \sum_{j=1}^Nz^{b_j}{}_0F_{N-1}\left( \{1+b_j-b_k\,|\,k\neq j\} \,\middle|\, (-1)^Nz \right)\prod_{\substack{k=1\\k\neq j}}^N \Gamma(b_k-b_j) \nonumber\\
    &=\sum_{j=1}^N z^{b_j}{}_0\tilde F_{N-1}\left( \{1+b_j-b_k\,|\,k\neq j\}\,\middle|\,(-1)^Nz \right)\prod_{\substack{k=1\\k\neq j}}^N \frac{\pi}{\sin b_k-b_j}\,,
\end{align}
which arises by splitting the Hankel contour $L$ into $N$ individual Hankel contours surrounding only the poles of a single $\Gamma(b_k-s)$.
This further supports the intuition that $\mathbf{F}^{(0)}$ should be associated to the puncture at $z=0$ and its canonical bases $\mathbf{Y}^{(0)}_k$, and $\mathbf{F}^{(\infty)}$ to $z=\infty$ and $\mathbf{Y}^{(\infty)}_k$. 

The result closely mirrors the one for the full $\Lambda$-dependent solutions, as can be read off from eq.~\eqref{eq:chi-to-F} after substituting $\mathbf{b}\to\bm\si$ and adjusting normalization factors.
This ultimately follows from the fact that the full solutions also admit Meijer $G$-like integral representations, from which it can be shown that the change of basis matrices exhibit no explicit $\Lambda$ dependence.
The crucial observation will be that the finite $\Lambda$ asymptotics, as well as the limit $\Lambda\to0$, are essentially controlled by the large order behavior of the Floquet series, or equivalently the asymptotics of the integrand in the integral representation~\eqref{eq:Stokes-max}, as will be shown in the following sections.

Lastly, let us mention that the equation~\eqref{eq:oper-w} can be interpreted as the oper equation corresponding to the open Toda chain. 
In general, one can deform the interaction between $x_1$ and $x_N$ in~\eqref{eq:x-p-Hamiltonian} to be $\kappa\Lambda^2e^{x_N-x_1}$, where $\kappa$ is a free parameter (this is sometimes referred to as a twist, and crucially does not spoil the integrability of the system). 
Clearly, $\kappa=1$ is the case of our interest, while for $\kappa=0$ one recovers the Hamiltonian of the open Toda chain. 
In this case, the Baxter equation would read~\cite{Kozlowski:2010tv}
\begin{equation}
    t(\lambda)q(\lambda) = (\ii\Lambda)^Nq(\lambda+\ii\hbar)\,,
\end{equation}
whose formal Fourier transform, after rescaling $(\Lambda/\hbar)^Nz=w$, yields eq.~\eqref{eq:oper-w}.

\subsection{Maximally Decaying Solutions} \label{app:G-expansion-maximaldecay}

A crucial step in the discussion of the quantization conditions is the identification of the maximally decaying solution.
For this it is useful to provide an expansion of $\chi^{(0,\infty)}(z)$ in terms of known special functions, along with checks of uniform convergence, which allows us to read off their asymptotics.

\begin{lem}
    The function $\chi^{(\infty)}(z)$ defined in eq.~\eqref{eq:Stokes-max} admits the following expansion valid at least for $\abs{\arg((\Lambda/\hbar)^Nz)}<N\pi/2$
    \begin{subequations} \label{eq:chi-Mittag-Leffler}
    \begin{align} 
        \chi^{(\infty)}(z) &= \frac{1}{(\pi\ii)^N} \left[ G_{0,N}^{N,0}\left( \sigma_1,\dots,\sigma_N\,\middle|\, \left( \frac{\Lambda}{\hbar} \right)^Nz \right) \right. \nonumber\\
        &\qquad\qquad\;\,- \left. \sum_{i=1}^N\sum_{m=1}^\infty c^{(\infty)}_{i,m} G_{1,N+1}^{N+1,0}\left(
        \begin{matrix}
            \sigma_i-m+1 \\ \sigma_1,\dots,\sigma_N,\sigma_i-m
        \end{matrix}
        \,\middle|\, \left( \frac{\Lambda}{\hbar} \right)^Nz \right) \right], \label{eq:chi-Mittag-Leffler-infty} \\
    \intertext{where $c^{(\infty)}_{i,m} = \Res_{s=\sigma_i-m}(v_\downarrow(-\ii\hbar(s+1)))$.
    Similarly $\chi^{(0)}(z)$ admits the expansion}
        \chi^{(0)}(z) &= \frac{(-1)^{N+1}}{(\pi\ii)^N} \left[ G_{0,N}^{N,0}\left( -\sigma_1,\dots,-\sigma_N \,\middle|\, \left( \frac{\Lambda}{\hbar} \right)^Nz^{-1} \right) \right. \label{eq:chi-Mittag-Leffler-0} \\
        &\qquad\qquad\quad\;\;\,+ \left. \sum_{i=1}^N\sum_{m=1}^\infty c^{(0)}_{i,m} G_{1,N+1}^{N+1,0}\left( 
        \begin{matrix}
            -\sigma_i - m + 1 \\ -\sigma_1,\dots,-\sigma_N,-\sigma_i-m
        \end{matrix} 
        \,\middle|\, \left( \frac{\Lambda}{\hbar} \right)^Nz^{-1} \right) \right], \nonumber
    \end{align}
    \end{subequations}
    at least when $\abs{\arg((\Lambda/\hbar)^Nz^{-1})}<N\pi/2$, where $c^{(0)}_{i,m} = \Res_{s=\sigma_i+m}(v_\uparrow(-\ii\hbar s))$.
\end{lem}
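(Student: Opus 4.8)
The plan is to express $\chi^{(\infty)}(z)$ directly from its defining integral representation~\eqref{eq:Stokes-max} and recognize each summand as a Meijer $G$-function by matching Mellin-Barnes integrands against the definition~\eqref{Meijer-G-def}. First I would substitute the explicit formula for $q^+_{\bm\de}(-\ii\hbar s)$ built from \eqref{eq:little q} and \eqref{Q^+delta}: the denominator $\prod_{j=1}^N e^{-\pi\lambda/\hbar}\sinh\tfrac{\pi}{\hbar}(\lambda-\delta_j)$ contributes, via the reflection formula $\sinh\pi x = \pi/(\Gamma(\ii x)\Gamma(1-\ii x))\cdot(\ldots)$, a product of Gamma functions whose poles organize into the rows $\sigma_j+\BZ$; the Gamma functions $\Gamma(1-\ii(\lambda-\delta_k)/\hbar)^{-1}$ in the numerator of $Q^+_{\bm\de}$ combine with these to leave exactly $\prod_{k=1}^N\Gamma(\sigma_k-s)$ up to elementary prefactors. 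Tracking the power $(\Lambda/\hbar)^{\ii N\lambda/\hbar} e^{-N\pi\lambda/\hbar}$ and the change of variable $s = \ii\lambda/\hbar$ produces the argument $(\Lambda/\hbar)^N z$, establishing the leading $G^{N,0}_{0,N}(\sigma_1,\dots,\sigma_N\,|\,(\Lambda/\hbar)^N z)$ term.

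The next step handles the factor $v_\uparrow(\lambda)$ (for $\chi^{(\infty)}$ it is actually $v_\downarrow$ coming through $Q^-$, so I would be careful about which $q^\pm$ enters which $\chi$): this function is not a ratio of Gamma functions, but by the integral representation~\eqref{v+}--\eqref{v-} it is holomorphic away from a discrete set of poles located at $\sigma_i - m$ (respectively $\sigma_i+m$), $m\geq 1$, inherited from the logarithmic singularities of $\log(1+X_{\bm\de})$ under the Cauchy kernel. The idea is to write $v_\downarrow$ as its (convergent) partial-fraction/Mittag-Leffler expansion $v_\downarrow(-\ii\hbar(s+1)) = 1 + \sum_{i,m} c^{(\infty)}_{i,m}/(s-\sigma_i+m) + \dots$, where the residues are exactly $c^{(\infty)}_{i,m}=\Res_{s=\sigma_i-m}(v_\downarrow(-\ii\hbar(s+1)))$ as stated. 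Inserting this expansion term by term into the $s$-integral, each simple pole $1/(s-\sigma_i+m) = -\Gamma(\sigma_i-m-s)/\Gamma(\sigma_i-m+1-s)$ adds one Gamma factor to the numerator and one to the denominator, turning $G^{N,0}_{0,N}$ into $G^{N+1,0}_{1,N+1}$ with the stated parameter lists. Summing reproduces~\eqref{eq:chi-Mittag-Leffler-infty}; the case of $\chi^{(0)}$ is identical after $z\mapsto z^{-1}$, $\bm\sigma\mapsto-\bm\sigma$, with the overall sign $(-1)^{N+1}$ coming from the $(-\ii)^{N+1}$ normalization in the $z\to 0$ formal solutions.

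The main obstacle is justifying the interchange of the semi-infinite sum over $(i,m)$ with the $s$-contour integral, and hence the stated domain of validity $\abs{\arg((\Lambda/\hbar)^N z)}<N\pi/2$. For this I would invoke the decay estimates on $q^\pm_{\bm\de}$ established at the end of Section~\ref{sec:Baxter-det} (the super-exponential decay $Q^\pm\sim|\lambda|^{-N|\lambda|/\hbar}$ off the poles, uniform in vertical strips), together with the fact that $X_{\bm\de}$ decays like $\mu^{-2N}$ on $\BR$ (cited from~\cite[Proposition 3]{Kozlowski:2010tv}), which controls the growth of the residues $c^{(\infty)}_{i,m}$ in $m$ and guarantees the Mittag-Leffler series converges absolutely on compact sets away from the poles. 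The Meijer $G$-functions $G^{N+1,0}_{1,N+1}$ are entire of the appropriate order, so dominated convergence applies on the half-plane where the original Hankel-type contour in~\eqref{eq:Stokes-max} converges; the bound $N\pi/2$ is precisely the opening half-angle dictated by the $N$ Gamma factors (each contributing $\pi/2$ in the standard Meijer $G$ convergence criterion $\abs{\arg z}<(q-p)\pi/2$ with $q-p=N$). A secondary check, which I would relegate to a remark, is consistency of the two expansions~\eqref{eq:chi-Mittag-Leffler-infty} and~\eqref{eq:chi-Mittag-Leffler-0} on the overlap of their domains, which follows from uniqueness of analytic continuation of the single function $\chi^{(\infty)}$ (resp.\ $\chi^{(0)}$) defined by~\eqref{eq:Stokes-max}.
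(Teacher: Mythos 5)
Your proposal follows essentially the same route as the paper: substitute the explicit formula for $q^{-}_{\bm\de}$, reduce the $\sinh$-denominator and $\Gamma$-factors to $\prod_j\Gamma(\sigma_j-s)$ via the reflection formula, replace $v_\downarrow$ by its Mittag--Leffler partial-fraction expansion, and recognize each resulting Mellin--Barnes integrand as a Meijer $G$-function using $1/(s-\sigma_i+m)=-\Gamma(\sigma_i-m-s)/\Gamma(\sigma_i-m-s+1)$.

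One imprecision worth flagging: the claim that the $\mu^{-2N}$ decay of $X_{\bm\de}$ on $\BR$ ``controls the growth of the residues $c^{(\infty)}_{i,m}$'' is not really a justification, since the residues are taken at the complex points $s=\sigma_i-m$, off the real line, and are not read off from the real-line behavior of $X_{\bm\de}$. The mechanism the paper actually uses (and which you also allude to, but without connecting the dots) is the identity linking the residues to the Baxter $Q$-functions evaluated at the Hill zeros,
\[
Q^-_{\bm\de}(-\ii\hbar(\sigma_i-m)) = \Bigl( e^{\pi\ii}\tfrac{\Lambda}{\hbar}\Bigr)^{N(\sigma_i-m)}\frac{c^{(\infty)}_{i,m}(-1)^{m-1}(m-1)!}{\prod_{j\neq i}\Gamma(1-\sigma_j+\sigma_i-m)}\,,
\]
combined with the super-exponential decay $Q^\pm(\lambda)\sim|\lambda|^{-N|\lambda|/\hbar}$ along $\Im\lambda\to\mp\infty$ and Stirling. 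Solving for $c^{(\infty)}_{i,m}$ then yields $c^{(\infty)}_{i,m}\sim m^{-2Nm}$, which is the estimate that makes the dominated-convergence step go through. Without making this link explicit, the interchange of the double sum over $(i,m)$ with the contour integral is not rigorously justified; with it, your argument matches the paper's.
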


\begin{proof}
    For simplicity we focus on the case of $\chi^{(\infty)}$, as the proof for $\chi^{(0)}$ is analogous.
    The function $v_\downarrow(-\ii\hbar(s+1))$ is meromorphic with simple poles only at $s=\sigma_i-m$, $i=1,\dots,N$, $m\in\BZ_{>0}$, and satisfies $v_\downarrow\to1$ as $\abs{s}\to\infty$ uniformly away from its poles~\cite{Kozlowski:2010tv}.
    Its residues decay super-exponentially since they are related to the $Q$-functions evaluated at the Hill zeros by
    \begin{equation}
        Q^-_{\bm\de}(-\ii\hbar(\sigma_i-m)) = \left( e^{\pi\ii}\frac{\Lambda}{\hbar} \right)^{N(\sigma_i-m)}\frac{c^{(\infty)}_{i,m}(-1)^{m-1}(m-1)!}{\prod_{\substack{j=1\\j\neq i}}^N\Gamma(1-\sigma_j+\sigma_i-m)} = \zeta_iQ^+_{\bm\de}(-\ii\hbar(\sigma_i-m))\,.
    \end{equation}
    Applying Stirling's formula to both sides (using the asymptotics stated below eq.~\eqref{Hill-det} for $Q^+_{\bm\de}$) and solving for $c^{(\infty)}_{i,m}$, it is then clear that decay is dominated by $c^{(\infty)}_{i,m}\sim m^{-2Nm}$ (up to at most exponential factors) at leading order as $m\to\infty$.
    We can therefore reconstruct $v_\downarrow$ from the following Mittag-Leffler expansion
    \begin{equation} \label{eq:Mittag-Leffler}
        v_\downarrow(-\ii\hbar(s+1)) = 1 + \sum_{i=1}^N\sum_{m=1}^\infty \frac{c^{(\infty)}_{i,m}}{s-\sigma_i+m}\,,
    \end{equation}
    which converges absolutely and uniformly on compact sets.
    This then leads to an expansion for $q^-_{\bm\de}(-\ii\hbar s)$, which we can plug into the integral transform~\eqref{eq:Stokes-max}
    \begin{equation}
        \chi^{(\infty)}(w) = \lim_{n\to\infty}\frac{1}{(\pi\ii)^N}\int_{L_n^{(\infty)}}\frac{ds}{2\pi\ii}\, \sum_{i=1}^N\sum_{m=1}^\infty c^{(\infty)}_{i,m}\frac{\prod_{j=1}^N\Gamma(\sigma_j-s)}{s-\sigma_i+m} w^s\,,
    \end{equation}
    where we recall $w=(\Lambda/\hbar)^Nz$ and used the relation $\Gamma(z)\Gamma(1-z) = \pi/\sin\pi z$.
    To show that the integral commutes with the infinite sum over $m$ we use dominated convergence.
    For this it is convenient to split the computation into three parts.
    First, we split off the first $m_0$ terms of the sum over $m$, where $m_0$ is chosen such that $\abs{s-\sigma_i+m}\geq1$ for all $m>m_0$, $i=1,\dots,N$, and $s\in \ii\BR$.
    For the remaining integral over the infinite sum, we further split off the poles with negative real part.
    This set contains at most finitely many poles coming from the Gamma functions, plus an infinite number of poles coming from $v_\downarrow$.
    However, by construction the latter each receive only a single contribution from the expansion~\eqref{eq:Mittag-Leffler}, and so exchanging the sum and the integral is trivial.
    Finally, we consider the remaining integral with the sum over $m>m_0$, and poles only in the right half-plane.
    Since there are no more poles going to $-\infty$, the limit $n\to\infty$ effectively drops out.
    Furthermore, in analogy with Meijer $G$-functions, for $\abs{\arg w}<N\pi/2$, the integration contour may be deformed to the imaginary line.
    It is then easy to verify that dominant convergence holds since
    \begin{equation}
        \abs*{\sum_{i=1}^N\sum_{m=m_0}^M c^{(\infty)}_{i,m}\frac{\prod_{j=1}^N\Gamma(\sigma_j-s)}{s-\sigma_i+m}w^s} \leq e^{\ii\arg ws}\sum_{i=1}^N\sum_{m=1}^\infty\abs{c^{(\infty)}_{i,m}}\prod_{j=1}^N\abs{\Gamma(\sigma_j-s)}<\infty\,,
    \end{equation}
    for all $M\geq m_0$ and $s\in \ii\BR$, since the right hand side is clearly integrable on account of the strong decay of the Gamma functions along the integration path.
    Exchanging the infinite sum and integral then produces eq.~\eqref{eq:chi-Mittag-Leffler} since $s-\sigma_i+m = -\Gamma(\sigma_i-m-s+1)/\Gamma(\sigma_i-m-s)$.
\end{proof}

There is much literature on the asymptotic expansions of Meijer $G$-functions~\cite{Luke1969,Fields-MeijerG}, which is especially well understood in the case $G_{p,q}^{q,0}$.
The expansions of interest for our purposes are
\begin{subequations}
\begin{align}
    G_{0,N}^{N,0}(\sigma_1,\dots,\sigma_N\,|\, w) &\sim \sqrt{\frac{(2\pi)^{N-1}}{N}}e^{-Nw^{1/N}}w^{-\frac{N-1}{2N}}\left( 1+O\big( w^{-1/N} \big) \right),\\
    G_{1,N+1}^{N+1,0}\left(
        \begin{matrix}
            \sigma_i-m+1 \\ \sigma_1,\dots,\sigma_N,\sigma_i-m
        \end{matrix}
        \,\middle|\, w \right) &\sim \sqrt{\frac{(2\pi)^{N-1}}{N}}e^{-Nw^{1/N}}w^{-\frac{N+1}{2N}}\left( 1+O\big( w^{-1/N} \big) \right),
\end{align}
\end{subequations}
which are both valid for $\abs{\arg w}<(N+1)\pi$. We then realize that the terms in the sum appearing in (\ref{eq:chi-Mittag-Leffler-infty}) are subleading.
What remains to show is that the subleading behavior of the correction terms remains subleading as we take the infinite sum.
\begin{lem}
    The infinite sum~\eqref{eq:chi-Mittag-Leffler-infty} converges uniformly in $z$ for $(\Lambda/\hbar)^Nz\in(0,\infty)$, and its leading asymptotic is given by that of the leading $G$-function.
    The same holds for~\eqref{eq:chi-Mittag-Leffler-0} when $(\Lambda/\hbar)^Nz^{-1}\in(0,\infty)$.
\end{lem}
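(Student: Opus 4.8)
The plan is to control the tail of the double series in~\eqref{eq:chi-Mittag-Leffler-infty} by combining the super-exponential decay of the coefficients $c^{(\infty)}_{i,m}$ with a uniform asymptotic bound on the Meijer $G$-functions, and then to conclude that the sum is not merely convergent but contributes only at subleading order relative to the single $G^{N,0}_{0,N}$ term. First I would recall, from the proof of the previous lemma, that $|c^{(\infty)}_{i,m}| = O(m^{-2Nm}e^{cm})$ for some constant $c$ as $m\to\infty$, uniformly in $i$; this is the decisive input. Then I would invoke the classical uniform asymptotics of $G^{N+1,0}_{1,N+1}$ with parameters shifted by $-m$: the standard references~\cite{Luke1969,Fields-MeijerG} give, for $w=(\Lambda/\hbar)^Nz>0$,
\begin{equation}
G_{1,N+1}^{N+1,0}\!\left(\begin{matrix}\sigma_i-m+1\\ \sigma_1,\dots,\sigma_N,\sigma_i-m\end{matrix}\,\middle|\,w\right)
= O\!\left(e^{-Nw^{1/N}}w^{-\frac{N+1}{2N}}\,P(m,w)\right),
\end{equation}
where the $m$-dependence $P(m,w)$ grows at most like a fixed power of $m$ times $w^{m}$-type corrections that are themselves controlled; the point is that $P(m,w)$ is dominated by a quantity of the form $A^m m^{B}$ locally uniformly in $w$ on $(0,\infty)$. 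Multiplying by $|c^{(\infty)}_{i,m}|=O(m^{-2Nm}e^{cm})$ kills this growth overwhelmingly, so $\sum_{i,m}|c^{(\infty)}_{i,m}|\,|G_{1,N+1}^{N+1,0}(\cdots|w)|$ converges, locally uniformly in $w\in(0,\infty)$, and is bounded by $C(w)\,e^{-Nw^{1/N}}w^{-\frac{N+1}{2N}}$.

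Next I would upgrade local uniformity to uniformity on $(0,\infty)$ (or on any sector $\abs{\arg w}<N\pi/2$) by a two-regime argument: for $w$ bounded away from $0$ and $\infty$ the estimate above with $C(w)$ continuous suffices after a compactness argument; for $w\to\infty$ one uses that the $O(e^{-Nw^{1/N}}w^{-\frac{N+1}{2N}})$ prefactor is uniform in the Poincaré-type expansions of $G^{N+1,0}_{1,N+1}$ on the relevant sector, and for $w\to 0^+$ one instead uses the convergent (hypergeometric-type) expansions of the $G$-functions near the origin together with the same coefficient decay. Since the series~\eqref{eq:chi-Mittag-Leffler-infty} is a uniform limit of holomorphic functions it defines a holomorphic function, and the uniform bound shows that the whole correction sum is $O(e^{-Nw^{1/N}}w^{-\frac{N+1}{2N}})$, i.e.\ smaller by a factor $w^{-1/N}$ than the leading term $G^{N,0}_{0,N}(\sigma_1,\dots,\sigma_N|w)\sim\sqrt{(2\pi)^{N-1}/N}\,e^{-Nw^{1/N}}w^{-\frac{N-1}{2N}}$. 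Hence the leading asymptotic of $\chi^{(\infty)}$ is exactly that of the leading $G$-function, as claimed, and by the substitution $w\to w'=(\Lambda/\hbar)^Nz^{-1}$, $\sigma_j\to-\sigma_j$ the same argument applies verbatim to~\eqref{eq:chi-Mittag-Leffler-0}.

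The main obstacle I anticipate is obtaining the \emph{uniform in $m$} bound on $G^{N+1,0}_{1,N+1}$ with the large negative shift $\sigma_i-m$ in the parameter list: the standard asymptotic formulas for Meijer $G$-functions are stated for fixed parameters, and one must check that the implied constants and the shape of the exponential--power prefactor do not degrade as $m\to\infty$. I would handle this by returning to the Mellin--Barnes integral~\eqref{Meijer-G-def} defining these $G$-functions, deforming the contour to a steepest-descent path, and tracking the $m$-dependence of the saddle; alternatively, and perhaps more cleanly, one can use the contiguous relation $G^{N+1,0}_{1,N+1}(\cdots|w) = $ (a finite combination of $w^{-1/N}$-times-$G^{N,0}_{0,N}$ and lower terms) obtained by residue manipulation, reducing everything to the already-understood $G^{N,0}_{0,N}$ asymptotics; either route makes the $m$-growth explicit and polynomial, which is all that is needed against the factorial-type decay of $c^{(\infty)}_{i,m}$. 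Once that uniform bound is in hand, dominated convergence and the two-regime compactness argument finish the proof routinely.
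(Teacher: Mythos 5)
Your proposed route — invoke uniform-in-$m$ asymptotics of $G^{N+1,0}_{1,N+1}$ and dominate against the decay of $c^{(\infty)}_{i,m}$ — is not what the paper does, and your central claimed estimate $P(m,w)=O(A^m m^B)$ is not correct. The paper instead returns to the Mellin--Barnes representation, shifts the vertical contour past all the poles so that it sits at $\Re s = -a_{i,m}$ with $a_{i,m}=m-\sigma_i+1$, which produces an explicit factor $w^{-\Re(a_{i,m})}$, and then bounds the remaining integral by Hölder's inequality together with $|\Gamma(x+\ii y)|\leq\Gamma(x)$ and the exact formula for $\int|\Gamma(x+\ii y)|^2\,dy$. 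The resulting bound is $|\chi^{(\infty)}_{i,m}(w)|\leq d_{i,m}\,w^{-\Re(a_{i,m})}$ with $d_{i,m}$ growing like $m^{Nm}$ up to exponential factors — \emph{super}-exponential in $m$, not $A^m m^B$. Your weaker growth claim does not follow from anything you have established, and in fact fails (for small $w$ the $G$-function develops $\Gamma$-factors in its convergent series that grow factorially in the shifted parameter). The argument is saved only because $c^{(\infty)}_{i,m}\sim m^{-2Nm}$ beats $m^{Nm}$, so the Weierstrass $M$-test still applies once the correct bound is in place — but your stated $m$-dependence is off by a factorial.

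You do correctly identify this uniform-in-$m$ bound as the genuine obstacle, and you correctly suggest returning to the Mellin--Barnes integral as the remedy — that is precisely what the paper's proof does. However, you propose two alternatives (steepest descent with an $m$-dependent saddle, or contiguous/residue reduction to $G^{N,0}_{0,N}$), neither of which is what the paper carries out; the paper's contour-shift plus Hölder bound is more elementary than either. Also worth noting: the paper's bound $d_{i,m}w^{-\Re(a_{i,m})}$ is deliberately crude — it decays only polynomially in $w$ rather than like $e^{-Nw^{1/N}}$ — and is used purely for the $M$-test; the asymptotic statement is then asserted to follow from uniform convergence plus the fixed-$(i,m)$ asymptotics. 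Your proposal tries to do better by keeping the exponential prefactor uniformly in $m$, which, if it could be made rigorous, would give a cleaner argument for the asymptotics; but that is exactly the step you flag as uncertain, and it would require more work than the paper's route.
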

\begin{proof}
    Again, we restrict to $\chi^{(\infty)}$, as the proof for $\chi^{(0)}$ is analogous.
    Uniform convergence can be proved via the Weierstrass M-test~\cite{Whittaker_Watson_2021}.
    Consider just a single term in the expansion~\eqref{eq:chi-Mittag-Leffler-infty}.
    Since we restrict to positive argument, we can freely deform the integration contour into a vertical line and write
    \begin{equation}
        \chi^{(\infty)}_{i,m}(w) \coloneqq G_{1,N+1}^{N+1,0}\left(
        \begin{matrix}
            \sigma_i-m+1 \\ \sigma_1,\dots,\sigma_N,\sigma_i-m
        \end{matrix}
        \,\middle|\, w \right) = \int_{-\ii\infty-a_{i,m}}^{\ii\infty-a_{i,m}}\frac{ds}{2\pi\ii} \frac{\prod_{j=1}^N\Gamma(\sigma_j-s)}{s-\sigma_i+m}w^s\,,
    \end{equation}
    where again $w=(\Lambda/\hbar)^Nz$ and $a_{i,m}$ is chosen such that the integration line lies to the left of all poles.
    W.l.o.g. let us assume that $a_{i,m} = m-\sigma_i+1$ and $\Re(a_{i,m})>0$, which is valid for all but finitely many $\chi^{(\infty)}_{i,m}$ (say $m>m_0$).
    We can then shift the integral so that the contour lies on the imaginary line and obtain the bound
    \begin{align}
        \abs{\chi^{(\infty)}_{i,m}(w)} &= \frac{1}{w^{\Re(a_{i,m})}}\abs*{\int_{-\ii\infty}^{\ii\infty} \frac{ds}{2\pi\ii} \frac{\prod_{j=1}^N\Gamma(1+\sigma_j-\sigma_i+m-s)}{s-1}w^s} \nonumber\\
        &\leq \frac{1}{w^{\Re(a_{i,m})}} \int_{-\infty}^\infty \frac{d\lambda}{2\pi} \prod_{j=1}^N\abs{\Gamma(1+\sigma_j-\sigma_i+m-\ii\lambda)}\,,
    \end{align}
where we have changed variables to $\lambda = -\ii s$ to make reality manifest.
By H\"older's inequality we then have
\begin{align}
    \abs{\chi^{(\infty)}_{i,m}(w)} &\leq \frac{1}{w^{\Re(a_{i,m})}}\prod_{j=1}^N\left\{ \int_{-\infty}^\infty\frac{d\lambda}{2\pi}\, \abs{\Gamma(1+\sigma_j-\sigma_i+m-\ii\lambda)}^N \right\}^\frac{1}{N} \nonumber\\
    &\leq \frac{1}{2\pi w^{\Re(a_{i,m})}}\prod_{j=1}^N \left\{ \Gamma(\Re(\sigma_j-\sigma_i)+m+1)^{N-2} \int_{-\infty}^\infty d\lambda\, \abs{\Gamma(\sigma_j-\sigma_i+m+1-\ii\lambda)}^2 \right\}^\frac{1}{N} \nonumber\\
    &= \frac{1}{w^{\Re(a_{i,m})}}\prod_{j=1}^N \left\{ \Gamma(\Re(\sigma_j-\sigma_i)+m+1)^{N-2}\frac{\Gamma(2(\Re(\sigma_j-\sigma_i)+m+1))}{2^{2(\Re(\sigma_j-\sigma_i)+m+1)}} \right\}^\frac{1}{N}\nonumber\\
    &\eqqcolon \frac{d_{i,m}}{w^{\Re(a_{i,m})}}\,,
\end{align}
where for the second inequality we have used $\abs{\Gamma(x+\ii y)}\leq\Gamma(x)$ for $x>0$, and the remaining integral is known~\cite{NIST:DLMF}.
This expression is dominated by the super-exponential growth of the Gamma functions, giving something of order $m^{Nm}$ as $m\to\infty$ (again up to at most exponential factors).
Since this does not compensate for the decay of $c^{(\infty)}_{i,m}$ we therefore have 
\begin{equation}
    \sum_{m=1}^\infty \abs{c^{(\infty)}_{i,m}\chi^{(\infty)}_{i,m}(w)} \leq \sum_{m=1}^\infty \frac{\abs{c^{(\infty)}_{i,m}d_{i,m}}}{w_0^{\Re(a_{i,m})}}<\infty\,,\qquad w\geq w_0\,,\qquad w_0\in(0,\infty)\,.
\end{equation}
Since $w_0$ can be chosen freely in the interval $(0,\infty)$ without spoiling convergence of the sum, we conclude that it is uniformly convergent for $w\in(0,\infty)$.
The statement about the asymptotics then follows immediately.
\end{proof}

\subsection{Floquet Solutions} \label{app:floquet}

Similar to the maximally decaying solution, the Floquet solutions also admit an integral representation in the following form
\begin{subequations}\label{Floquet-integral}
\begin{align}
    F^{(0)}_j(z) &= -\lim_{n\to\infty}\pi\int_{L_n^{(0)}}\frac{ds}{2\pi\ii}\frac{Q^+(-\ii\hbar s)}{e^{\pi\ii(s-\sigma_j)}\sin\pi(s-\sigma_j)}z^s\,,\\
    F^{(\infty)}_j(z) &= -\lim_{n\to\infty}\pi\int_{L^{(\infty)}_n}\frac{ds}{2\pi\ii}\frac{Q^-(-\ii\hbar s)}{e^{\pi\ii(s-\sigma_j)}\sin\pi(s-\sigma_j)}z^s\,,
\end{align}
\end{subequations}
where the contour is the same as in Figure~\ref{fig:chi-contours1} (note however that there is now only one row of poles at $\sigma_j+\BZ$), and well-definedness is proven in the same fashion.
However, the expansion in known special functions is complicated by the fact that the contours can no longer be deformed to the imaginary line, thereby obstructing the proofs of dominated and uniform convergence.
To understand the asymptotics of the Floquet solutions, we can instead make use of the explicit relation with the intermediate basis $\boldsymbol\chi^{(0,\infty)}$ defined in eq.~\eqref{eq:Stokes-basis-inf-def}, whose leading asymptotic behavior is known in a sector containing the positive real line. Specifically, we prove the following

\begin{lem} \label{lem:Floquet-asymptotics}
    The Floquet solutions $F^{(0,\infty)}_j$ defined in eq.~\eqref{eq:Floquet-from-Baxter} have leading asymptotics in the sectors $\abs{\arg w}<\pi/N$ and $\abs{\arg w'}<\pi/N$, with $w=(\Lambda/\hbar)^Nz$ and $w'=(\hbar/\Lambda)^Nz$, given by
    \begin{subequations} \label{eq:Floquet-asymptotics-even}
    \begin{align}
        F^{(\infty)}_j(w) &\sim e^{N\pi\ii\sigma_j}\sqrt{\frac{(2\pi)^{1-N}}{N}}e^{Nw^{1/N}}w^{-\frac{N-1}{2N}}\left( 1 + O\big( w^{1/N} \big) \right), &w\to\infty\,, \\
        F^{(0)}_j(w') &\sim e^{N\pi\ii\sigma_j}\sqrt{\frac{(2\pi)^{1-N}}{N}}e^{N{w'}^{-1/N}}{w'}^\frac{N-1}{2N}\left( 1 + O\Big( {w'}^{-1/N} \Big) \right), &w'\to0\,,
    \end{align}
    \end{subequations}
    when $N$ is even, and
    \begin{subequations} \label{eq:Floquet-asymptotics-odd}
    \begin{align}
        F^{(\infty)}_j(w) &\sim e^{N\pi\ii\sigma_j}\sqrt{\frac{(2\pi)^{1-N}}{N}} \left\{ e^{-\pi\ii\sigma_j}e^{N\left( e^{\pi\ii}w \right)^{1/N}}\left( e^{\pi\ii}w \right)^{-\frac{N-1}{2N}}\left( 1 + O\big( w^{1/N} \big) \right) \right. \\
        &\quad\;+ \left. e^{\pi\ii\sigma_j}e^{N\left( e^{-\pi\ii}w \right)^{1/N}}\left( e^{-\pi\ii}w \right)^{-\frac{N-1}{2N}}\left( 1 + O\big( w^{1/N} \big) \right) \right\}, \qquad\quad\;\; w\to\infty\,, \nonumber\\
        F^{(0)}_j(w') &\sim e^{N\pi\ii\sigma_j}\sqrt{\frac{(2\pi)^{1-N}}{N}} \left\{ e^{-\pi\ii\sigma_j}e^{N\left( e^{\pi\ii}w' \right)^{-1/N}}\left( e^{\pi\ii}w' \right)^\frac{N-1}{2N}\left( 1 + O\Big( {w'}^{-1/N} \Big) \right) \right. \\
        &\quad\;+ \left. e^{\pi\ii\sigma_j}e^{N\left( e^{-\pi\ii}w' \right)^{-1/N}}\left( e^{-\pi\ii}w' \right)^\frac{N-1}{2N}\left( 1 + O\Big( {w'}^{-1/N} \Big) \right) \right\}, \qquad w'\to0\,, \nonumber
    \end{align}
    \end{subequations}
    when $N$ is odd.
\end{lem}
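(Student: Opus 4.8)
The plan is to avoid working directly with the integral representation~\eqref{Floquet-integral}, whose contour cannot be collapsed onto the imaginary axis because the integrand $Q^{\mp}_{\bm\de}(-\ii\hbar s)$ grows as $\Re s\to\mp\infty$, and instead to leverage the explicit change of basis between the Floquet bases $\mathbf{F}^{(0,\infty)}$ and the intermediate bases $\bm\chi^{(0,\infty)}$ of eq.~\eqref{eq:Stokes-basis-inf-def}, whose asymptotics are already under control via Proposition~\ref{maximal-decay-prop} and Appendix~\ref{app:G-expansion-maximaldecay}. By eqs.~\eqref{eq:chi-to-F}--\eqref{W-V-matrix} one has $\bm\chi^{(0,\infty)}=\mathbf{F}^{(0,\infty)}\mathsf{W}(\bm\si)$ with $\mathsf{W}(\bm\si)$ a reflected Vandermonde matrix up to multiplication on the left by an invertible diagonal matrix; hence $\mathsf{W}(\bm\si)$ is invertible for generic $\bm\si$ and
\[
F^{(0,\infty)}_j(z)=\sum_{k=1}^N\chi^{(0,\infty)}_k(z)\,\big[\mathsf{W}(\bm\si)^{-1}\big]_{kj}\,.
\]
Writing $m_k=\ceil{N/2}-k$, the rotated solution $\chi^{(\infty)}_k(w)=\chi^{(\infty)}\big(we^{2\pi\ii m_k}\big)$ inherits from Proposition~\ref{maximal-decay-prop} the leading behaviour $\chi^{(\infty)}_k(w)\sim\frac{1}{(\pi\ii)^N}\sqrt{(2\pi)^{N-1}/N}\,e^{-N\alpha_{m_k}w^{1/N}}w^{-\frac{N-1}{2N}}$ (up to a $\bm\si$-independent phase coming from the branch of $w^{1/N}$), valid on a rotated copy of the sector $\mathscr{D}^{(\infty)}$ of eq.~\eqref{asympt-sectors-intermediate-basis}; an entirely parallel statement holds at $z=0$ after $w\mapsto w'$, $\bm\si\mapsto-\bm\si$, and a reversal of orientation.

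I would then isolate the dominant contribution(s) on the sector $\abs{\arg w}<\pi/N$. There the modulus of $e^{-N\alpha_{m_k}w^{1/N}}$ is maximal for the $N$-th root of unity $\alpha_{m_k}$ closest to $-1$: when $N$ is even this is the unique value $\alpha_{N/2}=-1$ (attained at $k=N$), producing the single exponential $e^{Nw^{1/N}}$ of~\eqref{eq:Floquet-asymptotics-even}; when $N$ is odd there are two, $\alpha_{(N\mp1)/2}=-e^{\mp\ii\pi/N}$ (attained at $k=1$ and $k=N$), for which $e^{-N\alpha_{m_k}w^{1/N}}=e^{N(e^{\mp\pi\ii}w)^{1/N}}$, the two comparable exponentials of~\eqref{eq:Floquet-asymptotics-odd}. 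One checks that for precisely these indices $k$ the rotated sector of validity of $\chi^{(\infty)}_k$ contains the closed sector $\abs{\arg w}\le\pi/N$, while every other summand $\chi^{(\infty)}_k\big[\mathsf{W}(\bm\si)^{-1}\big]_{kj}$, together with the subleading terms of the dominant expansions, is exponentially smaller than the leading term there and is absorbed into the stated error. It then remains to fix the constants: inserting the classical inverse-Vandermonde formula for $\big[\mathsf{W}(\bm\si)^{-1}\big]_{kj}$ obtained from~\eqref{W-V-matrix}, the $\sin$-factors carried by $\mathsf{W}$ combine with the Vandermonde denominators, and assembling what remains with the leading coefficient of $\chi^{(\infty)}$ and the relevant branch of $w^{1/N}$ yields exactly the prefactor $e^{N\pi\ii\sigma_j}\sqrt{(2\pi)^{1-N}/N}$ and the quoted power of $w$. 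The statement for $F^{(0)}_j$ follows from the same computation under the substitution described above.

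The main obstacle is the uniform error control in this last step: one must verify that across the whole sector $\abs{\arg w}<\pi/N$ neither the non-dominant $\chi^{(\infty)}_k$ nor the subleading terms of the dominant ones combine into a contribution of the order of the claimed asymptotics, in particular in the vicinity of any interior ray on which one of the $\chi^{(\infty)}_k$ expansions sits on a Stokes line and must be read with the appropriate lateral resummation. Together with the careful bookkeeping of the branch cuts introduced by the rotations $w\mapsto we^{2\pi\ii m_k}$ and by the fractional powers $w^{1/N}$, which has to be carried out consistently with the conventions fixed in Proposition~\ref{prop-canonical-bases} and Section~\ref{sec:connection-langlands}, this is what forces the constants to come out precisely as stated. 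I expect the even/odd dichotomy between~\eqref{eq:Floquet-asymptotics-even} and~\eqref{eq:Floquet-asymptotics-odd} to require no extra input: it is simply the reflection at the level of asymptotics of whether $-1$ is an $N$-th root of unity, i.e.\ of the Borel-plane Stokes structure analysed in Section~\ref{Borel-plane-section}.
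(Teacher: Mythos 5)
Your proposal takes essentially the same route as the paper's proof: expand $F^{(0,\infty)}_j$ in the intermediate basis $\bm\chi^{(0,\infty)}$ via the reflected-Vandermonde change of basis $\mathsf{W}(\bm\si)^{-1}$, identify the dominant $\chi^{(\infty)}_k$ on $\abs{\arg w}<\pi/N$ as the one(s) with $\alpha_{m_k}$ closest to $-1$ (namely $k=N$ for $N$ even, $k=1,N$ for $N$ odd), and then evaluate the corresponding entries of $\mathsf{W}^{-1}$ using the Vandermonde structure together with $\sum_k\sigma_k=0$ and $(\Sigma_j-\Sigma_k)/\Sigma_k=2\ii e^{\pi\ii(\sigma_j-\sigma_k)}\sin\pi(\sigma_j-\sigma_k)$. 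Your caution about error control and Stokes/branch bookkeeping is reasonable but the paper handles this implicitly, relying on Proposition~\ref{maximal-decay-prop} (whose sector $\mathscr{D}^{(\infty)}$ has opening $\pi+2\pi/N$, wide enough that its rotations cover $\abs{\arg w}\le\pi/N$ for the dominant indices) and on the exponential separation of scales between the retained and discarded $\chi^{(\infty)}_k$.
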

\begin{proof}
    We again focus on the case of $F^{(\infty)}_j$, since $F^{(0)}_j$ is proven analogously.
    The function $F^{(\infty)}_j$ can be expanded in the intermediate basis $\boldsymbol\chi^{(\infty)}$ as
    \begin{equation}
        F^{(\infty)}_j(w) = \chi^{(\infty)}_i(w)\left[ \mathsf{W}(\boldsymbol\sigma)^{-1} \right]_{ij}\,,
    \end{equation}
    with the change of basis matrix $\mathsf{W}(\boldsymbol\sigma)$ given explicitly in eq.~\eqref{W-V-matrix}.
    Since it is the product of a diagonal matrix with a (reflected) Vandermonde matrix, it is straightforward to obtain
    \begin{equation}
        \left[ \mathsf{W}(\boldsymbol\sigma)^{-1} \right]_{ij} = (-1)^i\frac{\pi}{\ii^N}e^{\pi\ii(2\floor{N/2}+N)\sigma_j}e_{i-1}(\{\Sigma_k\,|\,k\neq j\})\prod_{\substack{k=2 \\k\neq j}}^N\frac{\sin\pi(\sigma_j-\sigma_k)}{\Sigma_j-\Sigma_k}\,,
    \end{equation}
    where $e_{i-1}$ are again the elementary symmetric polynomials.
    The dominant contribution as $w\to\infty$ in the sector $\abs{\arg w}<\pi/N$ comes from $\chi^{(\infty)}_N$ when $N$ is even, and both $\chi^{(\infty)}_1$ and $\chi^{(\infty)}_N$ when $N$ is odd,\footnote{Depending on the sign of $\arg w$, only one term is dominant, with the two terms intersecting at $\arg w=0$.} as is easily seen from the asymptotics~\eqref{eq:chi-0-inf-asymptotics}.
    Using the relations $(\Sigma_j-\Sigma_k)/\Sigma_k = 2ie^{\pi\ii(\sigma_j-\sigma_k)}\sin\pi(\sigma_j-\sigma_k)$ and $\sum_{k=1}^N\sigma_k=0$, it is straightforward to compute
    \begin{equation}
        \left[ \mathsf{W}(\boldsymbol\sigma)^{-1} \right]_{Nj} = 2^{1-N}\pi\ii e^{N\pi\ii\sigma_j}\,,
    \end{equation}
    when $N$ is even and
    \begin{equation}
        \left[ \mathsf{W}(\boldsymbol\sigma)^{-1} \right]_{1j} = 2^{1-N}\pi\ii e^{(N+1)\pi\ii\sigma_j}\,,\qquad \left[ \mathsf{W}(\boldsymbol\sigma)^{-1} \right]_{Nj} = 2^{1-N}\pi\ii e^{(N-1)\pi\ii\sigma_j}\,,
    \end{equation}
    when $N$ is odd.
    Plugging in the asymptotics~\eqref{eq:chi-0-inf-asymptotics} then yields the above result.
\end{proof}
The understanding of the asymptotics of the Floquet solutions provides additional insight on the normalization chosen in eq.~\eqref{eq:Floquet-from-Baxter}.
In Section~\ref{Monodromy-Stokes} it was noted that the normalization can be partially fixed by considering the double-scaled limits studied in Appendix~\ref{app:decoupling}.
However this leaves the possibility for multiplicative corrections of the form $1+O(\Lambda)$, which may spoil the $\Lambda$-independence of the change of basis matrix $C^{(0,\infty)}(\bm\si)$.
The above result allows us to show that for the choice made in eq.~\eqref{eq:Floquet-from-Baxter} this does not happen, that is asymptotically and for fixed monodromy, the Floquet solutions are well-approximated by their (suitably scaled) $\Lambda\to0$ limits~(\ref{eq:F-infinity-limit},~\ref{eq:F-0-limit}).
\begin{cor}
    The Floquet basis is asymptotic to 
    \begin{subequations} \label{Floquet-asymptotics}
    \begin{align}
        F^{(\infty)}_j(w) &\sim \left( e^{N\pi\ii}w \right)^{\sigma_j}{}_0\tilde F_{N-1}\left( \{ 1+\sigma_j-\sigma_k \,|\, k\neq j \}\,\middle|\, (-1)^Nw \right), &w\to\infty\,, \\
        F^{(0)}_j(w') &\sim \left( e^{N\pi\ii} w' \right)^{\sigma_j}{}_0\tilde F_{N-1}\left( \{ 1-\sigma_j+\sigma_k \,|\, k\neq j\} \,\middle|\, \frac{(-1)^N}{w'} \right), &w'\to0\,,
    \end{align}
    \end{subequations}
    in the sense that the limit of their ratios converges to one, at least in the sectors $\abs{\arg w}<\pi/N$ and $\abs{\arg w'}<\pi/N$ respectively.
    This coincides precisely with the right hand side of equations~(\ref{eq:F-infinity-limit},~\ref{eq:F-0-limit}) after substituting $\bm\tau\to-\ii\hbar\bm\si$.
\end{cor}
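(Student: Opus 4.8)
The plan is to obtain the corollary by combining the elementary exponential asymptotics recorded in Lemma~\ref{lem:Floquet-asymptotics} with the classical asymptotics of the regularised generalised hypergeometric functions appearing on the right-hand side of~\eqref{Floquet-asymptotics}, and then to read off the identification with the decoupling-limit expressions~\eqref{eq:F-infinity-limit} and~\eqref{eq:F-0-limit}. First I would compute the leading $w\to\infty$ behaviour of $\bigl(e^{N\pi\ii}w\bigr)^{\sigma_j}\,{}_0\tilde F_{N-1}\left(\{1+\sigma_j-\sigma_k\,|\,k\neq j\}\,\middle|\,(-1)^Nw\right)$, using the fact that, up to a monomial prefactor, this function is a Meijer $G^{1,0}_{0,N}$, whose dominant/recessive expansions in closed sectors are tabulated in~\cite{Luke1969,Fields-MeijerG}. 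The underlying $N$-th order equation has the $N$ formal exponential behaviours $e^{N\alpha_n w^{1/N}}$ with $\alpha_n^N=1$, and in any closed sector only the exponentially largest among them survive in the asymptotic expansion. Since $\arg\bigl((-1)^Nw\bigr)\equiv0$ modulo $2\pi$ when $N$ is even and $\arg\bigl((-1)^Nw\bigr)\equiv\pm\pi$ when $N$ is odd, and $|\arg w|<\pi/N$, the relevant argument is interior to a single dominance sector in the even case but lies on a Stokes ray in the odd case, where the two exponentials $e^{N(e^{\pm\pi\ii}w)^{1/N}}$ are simultaneously dominant. This already accounts for the one-term formula~\eqref{eq:Floquet-asymptotics-even} versus the two-term formula~\eqref{eq:Floquet-asymptotics-odd}.

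Next I would match the explicit constants. The normalisation of the dominant exponential of $G^{1,0}_{0,N}$ is fixed, for instance, through the decomposition~\eqref{MeijerG-maximal-decay} together with the known asymptotics~\eqref{eq:Meijer-G-asymptotics} of the recessive solution $G^{N,0}_{0,N}$; carrying this through, the leading term of the hypergeometric right-hand side of~\eqref{Floquet-asymptotics} comes out to be exactly $e^{N\pi\ii\sigma_j}\sqrt{(2\pi)^{1-N}/N}\,e^{Nw^{1/N}}w^{-\frac{N-1}{2N}}$ for $N$ even, and the corresponding two-term sum for $N$ odd. This is identical to the right-hand side of Lemma~\ref{lem:Floquet-asymptotics}, so the ratio $F^{(\infty)}_j(w)\big/\bigl[\bigl(e^{N\pi\ii}w\bigr)^{\sigma_j}{}_0\tilde F_{N-1}(\cdots)\bigr]$ tends to $1$ as $w\to\infty$ in the sector $|\arg w|<\pi/N$; here one uses that the error terms in Lemma~\ref{lem:Floquet-asymptotics}, and in the hypergeometric expansion, are uniform on closed subsectors, which is guaranteed by the uniform estimates entering the proof of Lemma~\ref{lem:Floquet-asymptotics}. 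The statement for $F^{(0)}_j$ follows in the same way after the inversion $z\mapsto 1/z$, $\bm\sigma\mapsto-\bm\sigma$. Finally, comparing~\eqref{Floquet-asymptotics} with~\eqref{eq:F-infinity-limit} and~\eqref{eq:F-0-limit} shows that the two right-hand sides are literally related by the substitution $\ii\tau_j/\hbar\mapsto\sigma_j$, that is $\bm\tau\mapsto-\ii\hbar\bm\sigma$, which is the last assertion of the corollary.

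The main obstacle is bookkeeping rather than any conceptual difficulty: one must track the phases of the multivalued prefactors $(e^{N\pi\ii}w)^{\sigma_j}$ and $w^{1/N}$ consistently across the whole sector $|\arg w|<\pi/N$, and check that the location of $(-1)^Nw$ relative to the Stokes rays of the hypergeometric equation reproduces precisely the even/odd dichotomy of Lemma~\ref{lem:Floquet-asymptotics}. Verifying that the constant $\sqrt{(2\pi)^{1-N}/N}$ and the power $w^{-\frac{N-1}{2N}}$ match on the nose is the one place where a short computation with the standard $G$-function asymptotics of~\cite{Luke1969} is unavoidable.
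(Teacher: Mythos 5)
Your proposal is correct and matches the paper's own (very terse) proof, which simply says to compare Lemma~\ref{lem:Floquet-asymptotics} with the $_pF_q$ asymptotics in~\cite{Luke1969}. You have spelled out the comparison in more detail — identifying $\left(e^{N\pi\ii}w\right)^{\sigma_j}{}_0\tilde F_{N-1}$ with a $G^{1,0}_{0,N}$, tracking the Stokes-ray dichotomy that reproduces the even/odd split of Lemma~\ref{lem:Floquet-asymptotics}, and matching the constant $\sqrt{(2\pi)^{1-N}/N}$ and power $w^{-\frac{N-1}{2N}}$ — but the strategy is the same.
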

The proof follows straightforwardly from the comparison of the leading asymptotics given in Lemma~\ref{lem:Floquet-asymptotics} with those given for ${}_pF_q$ in~\cite{Luke1969}.

\section{Further Details on the Borel Transforms} \label{canonical-basis-appendix}

In this appendix we include the proofs of some results used in Section~\ref{Section:canonical basis}. 
As a corollary of our results, we prove the range of validity of the asymptotics of $\chi^{(0,\infty)}$ defined in eq.~\eqref{eq:Stokes-max}.

\subsection{\texorpdfstring{Proof of Lemma~\ref{lemma-gevrey}}{Proof of Lemma 1}}
\begin{proof}
     For definiteness, we specialize the proof to $\tilde{f}_0^{(\infty)}$: it follows immediately from the definition of $\tilde{f}_n^{(\infty)}$ that if $\tilde{f}_0^{(\infty)}$ is Gevrey-1, then so is $\tilde{f}_n^{(\infty)}$ for every $n=0,\dots,N-1$.
     
     Let us derive the differential equation obeyed by $\tilde{f}_0^{(\infty)}$. By plugging the ansatz~(\ref{formal-basis}) into~(\ref{oper-y}), and ordering the different powers of $y$, one finds
\begin{align}\notag
    &\left[ (\partial_y-1)^N-(-1)^N + y^{-1}P_{N-1}(\partial_y) + y^{-2}P_{N-2}(\partial_y;\mathcal{E}_2)+\dots+y^{-N}P_{0}(\mathcal{E}_2,\dots,\mathcal{E}_N) \right]\tilde{f}_0 \\
    & = \nu y^{-2N}\tilde{f}_0^{(\infty)},
 \label{ODE-for-asymptotic-expansion}   \end{align}
where we have denoted by $P_j(\partial_y; \mathcal{E}_2,\dots,\mathcal{E}_{N-j})$ a polynomial of degree $j$ in $y$ which depends linearly on the rescaled Hamiltonians $\mathcal{E}_1,\dots,\mathcal{E}_N$, and $\mathcal{E}_1=0$.

The Newton polygon at infinity of the differential equation~(\ref{ODE-for-asymptotic-expansion}) now has a segment of slope $0$ and one of slope $1$. Therefore, considering that all the coefficients are rational functions, there exists a power series solution without logarithmic terms~\cite{Fauvet-0-dim-partion-function}. We can make the $\nu$-dependence of $\tilde{f}_0$ explicit by expanding
\begin{equation}
    \tilde{f}_0^{(\infty)} = \sum_{k=0}^\infty\nu^{k}\tilde{F}_k
\end{equation}
where $\tilde{F}_k$ are again formal power series in $y^{-1}$. (Note that we can choose solutions of~(\ref{oper-y}) to be analytic at $\nu =0$.) Now we may observe that $\tilde{F}_0$, up to an overall (and possibly $\nu$-dependent) prefactor is in fact the asymptotic expansion $\tilde{f}(w^{1/N})$ for the Meijer G-function $ G_{0,N}^{N,0}(b_1,\dots,b_N \,|\, w)$ appearing in~\eqref{eq:Meijer-G-asymptotics}. As expected, it is known to contain no logarithmic terms~\cite{Luke1969}; moreover, it is known to be a series in powers of $w^{-1/N}$, or equivalently in integer powers of $y^{-1}$. Observe then that one must have
\begin{equation}
    \mathcal{D}_y\tilde{F}_{k+1}= y^{-2N}\tilde{F}_k
\end{equation}
where for brevity we have indicated by $\mathcal{D}_y$ the differential operator appearing on the left hand side of eq.~(\ref{ODE-for-asymptotic-expansion}). 
Then we infer that for every $k$, $\tilde{F}_k$ is a formal power series in $y^{-1}$; then so is $\tilde{f}_0^{(\infty)}$.

Let us now prove that the coefficients $a_k$ of $\tilde{f}_0$ grow at most factorially. Consider the equation~(\ref{ODE-for-asymptotic-expansion}) in the form $y^N\mathcal{D}_y\tilde{f}_0^{(\infty)} = \nu y^{-N}\tilde{f}_0^{(\infty)}$. The left-hand side of the equation consists of terms that take the form
\begin{equation}
    y^{p}\partial_y^q\tilde{f}_0^{(\infty)} =   y^{p}\partial_y^q \sum_{k=0}^{\infty}a_ky^{-k} =  \sum_{k=0}^{\infty}a_ky^{-k-q+p}P_q(k)
\end{equation}
where $P_q(k) = (-1)^b(k)_q$ is a polynomial of degree $q$ in $k$. We can then relabel $m = k -p+q$ and write 
\begin{equation}
     y^{p}\partial_y^q\tilde{f}_0^{(\infty)} =    \sum_{m=-p+q}^{\infty}a_{m+p-q}y^{m}P_q(m)
\end{equation}
where now $P_q(m)$ is a polynomial of degree $q$ in $m$. A recurrence relation for the coefficients $a_m$ can be obtained by comparing the coefficients of $y^{-m}$. Observing that $p-q \leq N-1$, which implies that for a fixed $k$, the coefficient $a_{m+N-k}$ will only multiply polynomials of degree less than $k$, it is easy to argue that one will have for the term $y^{-m}$ an equation of the form
\begin{equation}
     \sum_{k=1}^{N} a_{m+N-k}Q_k(m) = \nu a_{m-N} 
\end{equation}
where again $Q_k(m)$ denotes a polynomial in $m$ of degree $k$. Thus, the recurrence relation is of the type
\begin{equation}
    a_{m+N-1} = \frac{1}{Q_1(m)}\left\{ -\sum_{k=2}^{N} a_{m+N-k}Q_k(m) + \nu a_{m-N}\right\}.
\end{equation}
Thus, considering the degree of the polynomials in the above formula, we can argue that the ratio between $a_{m+k}$ and $a_{m}$ is of order $O(m^k)$: this growth is compatible with $|a_{m}|\leq AB^m (m!)^r$  for $r \geq 1$ where $A,B$ are constants that depend on $\mathcal{E}_2,\dots,\mathcal{E}_N$ and $\nu$.
\end{proof}
\subsection{\texorpdfstring{Proof of Lemma~\ref{lemma-borel-plane}}{Proof of Lemma 2}}
 \begin{proof}
 We will derive an equation for $\hp_0(\zeta)$ from our knowledge of the differential equation for $\tp_0$. It is easy to see that the differential equation obeyed by $\tp_0$ will have the same shape as that for $f_0^{(\infty)}$~(\ref{ODE-for-asymptotic-expansion}), although the polynomials appearing will have different coefficients. We recall that, when $\tp, \tilde\psi$ are formal series without constant term (which is our case), one has
 \begin{equation}
     \mathcal{B}[\partial_y\tp](\zeta) = -\zeta\mathcal{B}[\tp]\,,\qquad
     \mathcal{B}[\tp \tilde\psi](\zeta) = \int_0^{\zeta}d\zeta'\,\hp(\zeta-\zeta')\hat\psi(\zeta')\,,
 \end{equation}
 where $\hat\psi(\zeta) \coloneqq \mathcal{B}[\tilde\psi]$. Then, the Borel transform of~(\ref{ODE-for-asymptotic-expansion}) reads
 \begin{align}\label{Borel-ODE1}
  (-1)^N  [(\zeta +1)^N -1]\hp_0(\zeta) &=-\sum_{k=0}^{N-1}\frac{1}{k!}\int_0^{\zeta}d\zeta'\,(\zeta -\zeta')^k P_{N-k-1}(-\zeta'; \boldsymbol{\mathcal{E}}_k)\hp_0(\zeta') \nonumber \\ 
  &\quad\;+ \nu \int_0^{\zeta}d\zeta'\,\frac{(\zeta-\zeta')^{2N+1}}{(2N+1)!}\hp_0(\zeta') 
\end{align}
where by $\boldsymbol{\mathcal{E}}_k = (\mathcal{E}_{2},\dots,\mathcal{E}_{k+1})$ for $k \geq 1$ and $\bm{\mathcal{E}}_0 = \varnothing$. 

We proceed by contradiction. Consider an open set $\Omega_j$ in $\mathbb{C}$ containing the line $[0, \omega_{0,j} ]$ with $(1+\omega_{0,j})^N -1 =0, \omega_{0,j} \neq 0$, and assume that $\hp_0(\zeta)$ is analytic on $\Omega_j$. Then the convolution integrals appearing on the right hand side above are analytic functions of $\zeta$ on $\Omega_j$~\cite[Lemma 5.3]{Sauzin:2014qzt}. 
We conclude that, under the assumption that $\hp_0(\zeta)$ is analytic at $\omega_{0,j}$, the right hand side of~(\ref{Borel-ODE1}) is also analytic at $\omega_{0,j}$. But then the same equation~(\ref{Borel-ODE1}) will imply that $\hp_0(\zeta)$ has poles at $\omega_{0,j}$, leading to a contradiction.

The second part of the lemma follows from the definition of $\tp_{n}$ as being equal to $\tp_0$ with $y$ replaced by $y\alpha_n$: upon applying the definition of the Borel transform~(\ref{def-Borel-transform}) one reaches the desired result.
 \end{proof}

\subsection{Asymptotics of Maximally Decaying Solution}\label{range-of-asympt:section}

In this section we employ our previous Borel plane analysis to justify the range of validity of the asymptotics expansion of~(\ref{eq:chi-0-inf-asymptotics}).

The maximally decaying solution $\chi^{(\infty)}(z)$ defined by~(\ref{eq:Stokes-max}) has been proved to have maximal decay at $\infty$ on the real line in Proposition~\ref{maximal-decay-prop}. Having maximal decay thereon, it must be proportional to the canonical basis element $ Y^{(\infty)}_{k,0}(y)$, obtained by resummation of $\tp_0$ in a sector $W_k^{(\infty)}$ with $k \neq 1,\dots,N$: indeed in such a way the Borel  resummation can be evaluated on a sector that either contains or has a boundary on the real line. Let us call $\varphi_0(y)$ the holomorphic function obtained by assembling the Borel resummation of $\tp_0$ in the arc of directions $\arg\omega_{0,N-1} = -\tfrac{\pi}{2}-\tfrac{\pi}{N}< \theta< \tfrac{\pi}{2}+\tfrac{\pi}{N} = \arg\omega_{0,1}$. Then a corollary of Lemma~\ref{lemma-gevrey} is
\begin{cor}
    In closed subsectors of the domain
    \begin{equation} \label{asymptotic-range}
        \mathscr{D}_y^{(\infty)}\coloneqq\bigcup_{\theta\in \left(-\frac{\pi}{2}-\frac{\pi}{N},\frac{\pi}{2}+\frac{\pi}{N}\right) }\left\{y \in \mathbb{C} \,\middle|\, \Re{(ye^{\ii\theta})}> R\right\}
    \end{equation}
    one has  $\varphi_0(y) \sim_1 \tp_0$, where by $\sim_1$ we indicate a Gevrey-1 asymptotics.
\end{cor}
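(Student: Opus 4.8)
The plan is to run the Borel--Laplace machinery already used in the proof of Proposition~\ref{prop-canonical-bases}, but now for the single formal series $\tp_0$, resummed along the \emph{widest} singularity-free arc of directions rather than along a single sector $W_k^{(\infty)}$. First I would record that, by Lemma~\ref{lemma-gevrey} and its corollary, $\tp_0$ is Gevrey-1, so its Borel transform $\hp_0=\mathcal{B}[\tp_0]$ is holomorphic on a disc around $\zeta=0$. By Lemma~\ref{lemma-borel-plane} the only singularities of $\hp_0$ are the points $\omega_{0,j}=\alpha_j-1$, $j=1,\dots,N-1$, which lie on the rays $\arg\zeta=\phi_j=\tfrac{\pi}{2}+\tfrac{\pi j}{N}$; since the open arc $\mathscr{A}\coloneqq(-\tfrac{\pi}{2}-\tfrac{\pi}{N},\tfrac{\pi}{2}+\tfrac{\pi}{N})$ is precisely the gap between the extreme singular rays $\phi_{N-1}\equiv-\tfrac{\pi}{2}-\tfrac{\pi}{N}$ and $\phi_1=\tfrac{\pi}{2}+\tfrac{\pi}{N}$ (and contains none of the other $\phi_j$ modulo $2\pi$), the analytic continuation of $\hp_0$ along the ray $\BR^+e^{\ii\theta}$ exists for every $\theta\in\mathscr{A}$.

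The key analytic input I would establish next is an exponential bound: for each $\theta\in\mathscr{A}$ there are constants $C,R>0$ with $|\hp_0(\zeta)|\le Ce^{R|\zeta|}$ on a thin sector around $\BR^+e^{\ii\theta}$. This is exactly the estimate flagged, but not proved, in the footnote to~\eqref{half-plane}, and I expect it to be the main obstacle; the natural route is to exploit the convolution integral equation~\eqref{Borel-ODE1} satisfied by $\hp_0$, bounding each convolution term along the ray and closing the estimate by a Gr\"onwall / majorant-series argument (using that the kernels $P_{N-k-1}$ are polynomial). Granting this bound, the Laplace transform $\mathscr{L}^\theta[\tp_0](y)=\int_0^{e^{\ii\theta}\infty}e^{-y\zeta}\hp_0(\zeta)\,d\zeta$ converges to a holomorphic function on the half-plane $\mathbb{H}(\theta)=\{\,\Re(ye^{\ii\theta})>R\,\}$, and Cauchy's theorem---applied in the singularity-free open sectors between any two directions of $\mathscr{A}$---shows that these resummations are restrictions of one and the same holomorphic function $\varphi_0$ on $\bigcup_{\theta\in\mathscr{A}}\mathbb{H}(\theta)=\mathscr{D}_y^{(\infty)}$, the domain~\eqref{asymptotic-range}.

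Finally, the Gevrey-1 asymptotics $\varphi_0\sim_1\tp_0$ on closed subsectors of $\mathscr{D}_y^{(\infty)}$ follows from the standard Borel--Laplace (Nevanlinna--Sokal / Watson) theorem, see~\cite{Sauzin:2014qzt}: writing $\tp_0=\sum_{k\ge0}a_ky^{-k-1}$ with $\hp_0=\sum_{k\ge0}\tfrac{a_k}{k!}\zeta^k$, one truncates the Taylor series of $\hp_0$ at order $K$, estimates the near-origin remainder by $C'B^{K}|\zeta|^{K}$ and the far part by the exponential bound, and integrates to get $|\varphi_0(y)-\sum_{k=0}^{K-1}a_ky^{-k-1}|\le A\,B^{K}K!\,|y|^{-K-1}$ uniformly on each closed subsector; the Gevrey-1 growth $|a_k|\le A'B'^{k}k!$ from Lemma~\ref{lemma-gevrey} keeps all constants finite. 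As a consistency check one notes that, after reinstating the prefactor $\ii^{N+1}e^{-y}y^{\beta+1}$ (i.e.\ $\alpha_0=1$), $\varphi_0$ is the maximally decaying solution, and that the enlarged arc $\mathscr{A}$ of validity is exactly the one invoked in Proposition~\ref{maximal-decay-prop} to obtain the asymptotics~\eqref{eq:chi-0-inf-asymptotics} of $\chi^{(\infty)}$.
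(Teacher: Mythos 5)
Your proposal is correct and follows essentially the same route as the paper's own proof: identify $\phi_{\max}=\min_j|\arg\omega_{0,j}|=\tfrac{\pi}{2}+\tfrac{\pi}{N}$ from Lemma~\ref{lemma-borel-plane}, resum $\tp_0$ along all directions in the open singularity-free arc, glue the resulting Laplace transforms into a single holomorphic $\varphi_0$ on $\bigcup_{\theta}\mathbb{H}(\theta)=\mathscr{D}^{(\infty)}_y$, and invoke the standard Borel--Laplace/Gevrey-1 asymptotics result (citing~\cite{Sauzin:2014qzt}). The one place you go further than the paper is in flagging the missing exponential bound $|\hp_0(\zeta)|\le Ce^{R|\zeta|}$ along non-singular rays and sketching a Gr\"onwall/majorant strategy to close it; the paper itself acknowledges in the footnote to~(\ref{half-plane}) that it does not prove this bound, so your honesty about the gap and the proposed remedy are a genuine (minor) improvement in rigor rather than a deviation in method.
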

\begin{proof}
That the Borel sum is asymptotic to the corresponding formal series is known (see e.g.\cite{Sauzin:2014qzt}). We only need to find the correct domain on which the asymptotic expansions are valid. To this end, we observe that
\begin{equation}
   \phi_{\textup{max}}\coloneqq  \min_{j=1,\dots,N-1}\abs{\arg \omega_{0,j}} = \frac{\pi}{2}+ \frac{\pi}{N}.
\end{equation}
It then suffices to recall that the Laplace transform at an angle $\theta$ defines a holomorphic function on a half-plane $\mathbb{H}(\theta)$ centered on the line $\mathbb{R}^+ e^{-\ii\theta}$ (as in (\ref{half-plane})). Because the rotation of the line of resummation implements an analytic continuation (in the opposite direction), we conclude that assembling together resummation at angles $\theta \in (-\phi_{\textup{max}},\phi_{\textup{max}})$ we obtain an holomorphic function $\varphi_0(y)$ on the sector
\begin{equation}
    \mathscr{D}^{(\infty)}_y \coloneqq \bigcup_{\theta \in (-\phi_{\textup{max}},\phi_{\textup{max}}) } \mathbb{H}(\theta)
\end{equation}
admitting $\tp_0$ as uniform asymptotic expansion on closed subsectors thereof. The asymptotics is of  Gevrey-type $1$ because that is the Gevrey class of $\tp_0$. 
\end{proof}
 Thus, reinstating the coordinate $w = N^{-N}y^N$ one obtains the domain $\mathscr{D}^{(\infty)}$ of~(\ref{asympt-sectors-intermediate-basis}). The validity of the asymptotics on $w' \in \mathscr{D}^{(0)}$ can be argued analogously.

\bibliographystyle{ytamsalpha}
\bibliography{biblio}

\end{document}